\newcommand{\RN}[1]{%
  \textup{\uppercase\expandafter{\romannumeral#1}}%
}
\renewcommand*{\arraystretch}{1.5}
\newtheorem{theorem}{Theorem}[section]
\newtheorem{proposition}[theorem]{Proposition}
\newtheorem{lemma}[theorem]{Lemma}
\theoremstyle{definition}
\newtheorem{definition}{Definition}[section]
\newtheorem{Assumption}{Assumption}[section]
\newtheorem{remark}[theorem]{Remark}
\numberwithin{equation}{section}
\DeclareMathOperator*{\argmin}{arg\,min}
\begin{document}                        
\pagenumbering{arabic}

\title{Sigma Delta quantization for images}

\author{He Lyu, Rongrong Wang}
\affil{Michigan State University}

\date{\today}

\maketitle

\begin{abstract}
In signal quantization, it is well-known that introducing adaptivity to quantization schemes can improve their stability and accuracy in quantizing bandlimited signals. However, adaptive quantization has only been designed for one-dimensional signals. The contribution of this paper is two-fold: i). we propose the first family of two-dimensional adaptive quantization schemes that maintain the same mathematical and practical merits as their one-dimensional counterparts, and ii). we show that both the traditional 1-dimensional and the new 2-dimensional quantization schemes can effectively quantize signals with jump discontinuities. These results immediately enable the usage of adaptive quantization on images. Under mild conditions, we show that the adaptivity is able to reduce the reconstruction error of images from the presently best $O(\sqrt P)$ to the much smaller $O(\sqrt s)$, where  $s$ is the number of jump discontinuities in the image and $P$ ($P\gg s$) is the total number of samples.
This $\sqrt{P/s}$-fold error reduction is achieved via applying a total variation norm regularized decoder, whose formulation is inspired by the mathematical super-resolution theory in the field of compressed sensing. Compared to the super-resolution setting, our error reduction is achieved without requiring adjacent spikes/discontinuities to be well-separated, which ensures its broad scope of application.

We numerically demonstrate the efficacy of the new scheme on medical and natural images. We observe that for images with small pixel intensity values, the new method can significantly increase image quality over the state-of-the-art method.
\end{abstract}

\section{Introduction}
\subsection{Quantization}
In signal processing, quantization is the operation in which a signal's real-valued samples get converted into a finite number of bits. As such, quantization digitizes the signal and makes it ready for digital processing. Mathematically, given a signal class $\mathcal{S} \subseteq \mathbb{R}^N $ and a fixed codebook $\mathcal{C}$, the goal of quantization is to map every signal $x$ in $\mathcal{S}$ to a codebook representation $q \in \mathcal{C}$ that can be stored digitally. We use $Q$ to denote the quantization map between the signal space $\mathcal{S}$ and the codebook $\mathcal{C}$
\[
Q: \mathcal{S} \rightarrow \mathcal{C}: x \rightarrow q.
\]
In order to digitally restore the original signals, we usually assign a decoder, also called a reconstruction algorithm, to each quantization scheme. The decoder then recovers each signal from their encoded bits up to a small error. To ensure practicability, we only consider decoders that run in polynomial time. Let  $\Delta$ be a decoder, the error in the decoded signal $\hat{x}=\Delta(q)$ is called distortion 
\[
\textrm{Distortion} \coloneqq\|\hat{x}-x\|_2 \equiv \|\Delta(q)-x\|_2. 
\]
For a given signal class $\mathcal{S}$, we define the optimal quantization $Q$ to be the one that minimizes the distortion subject to a fixed bit budget. More precisely, let the fixed bit budget be $R$, then among all codebooks $\mathcal{C}$ that are representable in $R$ bits and all quantization maps $Q$ from $\mathcal{S}$ to $\mathcal{C}$, the optimal quantization is the one that minimizes the minimax distortion
$$\hat{Q} = \argmin_{Q: \mathcal{S}\rightarrow \mathcal{C}, | \mathcal{C}|=2^R} \min_{\Delta \in \mathcal{D}} \max_{x\in \mathcal{S} }\| \Delta \circ Q(x) - x\|_2, $$
where $\mathcal{D}$ is the set of polynomial-time decoders.

When the signal class $\mathcal{S}$ forms a compact set in a metric space, an optimal quantization can be found through the following information-theoretic argument. Given a fixed error tolerance level $\epsilon$, one can find for the signal class $\mathcal{S}$ an $\epsilon$-net with the minimum possible cardinality $N(\epsilon)$, known as the covering number of $\mathcal{S}$. With this optimal $\epsilon$-net, we carry out the quantization as follows. 
First, we assign the center of each $\epsilon$-ball a symbol in the codebook (different centers are assigned   different symbols).  Then for each point/signal in $\mathcal{S}$, we quantize it to the symbol of the closest center. The total number of symbols used in this quantization equals the number of centers, $N(\epsilon)$, so they can be encoded in $R=\log_2 N(\epsilon)$ bits.  This encoding is optimal as by definition, $N(\epsilon)$ is the minimal cardinality of the net needed to cover the set. For the special case of $\mathcal{S}$ being the unit $\ell_2$-ball in $\mathbb{R}^d$, we have $N(\epsilon)\sim \epsilon^{-d}$ and $R=\log_2 N(\epsilon)\sim d\log_2 \epsilon^{-1}$ or equivalently, $\epsilon \sim 2^{-R/d}$ (see, e.g., \cite{boufounos2007quantization,boufounos2015quantization,SWY18}). This is known as the exponential rate-distortion relation: the error decays exponentially with the number of bits. However, the quantization built this way suffers from the following impracticalities: 1) unless $\mathcal{S}$ has a regular shape,  the computation of the optimal $\epsilon$-net of $\mathcal{S}$ is subject to the curse of dimensionality; and 2) since the nearest center to a signal can only be found after all samples of that signal are received, this scheme cannot be operated in an online manner.  

These issues inspire people to impose the following requirements on the quantization: 
\begin{itemize}
\item the quantization $q$ of a vector $x$ should have the same length as $x$;
\item the quantization should be operated in an online manner, which means   $q_i$ (the $i^{\textrm{th}}$ entry of $q$) only depends on the past and current inputs $x_1,\cdots,x_i$, not the future ones $x_{i+1},\cdots$; 
\item  the alphabets $\mathcal{A}_i$ for each $q_i$ are the same and fixed in advance, i.e., $\mathcal{A}_1 =\mathcal{A}_2= \cdots= \mathcal{A}_n =\mathcal{A}$. Together they form the codebook $\mathcal{C}=\mathcal{A}^n$; 
\item since quantization is implemented in the analog hardware, the mathematical operations involved should be as simple as possible. In particular, addition and subtraction are preferred over multiplication and division. 
\end{itemize}
Here, for simplicity, we set the alphabet $\mathcal{A}$ to be a bounded, evenly spaced grid with step-size $\delta$,
\begin{equation}\label{eq:alphabet}
\mathcal{A}_{\delta}=\{c+J\delta, \ \ J\in \mathbb{Z},\ J_1\leq J\leq J_2\},
\end{equation}
where $J_1$, $J_2$, and $c$ are pre-selected numbers.
\subsection{Memoryless scalar quantization and Sigma Delta quantization}
To avoid distraction, we review the existing quantization schemes directly in the context of image quantization. Let $X \in [a,b]^{N,N}$ be the matrix that stores the pixel values of a grayscale image. From now on, we will deem the matrix $X$ as an image. 

\noindent\textbf{Memoryless Scalar Quantization (MSQ)}: Suppose the alphabet $\mathcal{A}=\mathcal{A}_\delta$ is defined as in \eqref{eq:alphabet}, then the scalar quantization, denoted as $Q_{\mathcal{A}}: [a,b]\rightarrow \mathcal{A}$  will take a scalar as input and round it off to the nearest element in the alphabet
\[
Q_{\mathcal{A}}(z) \in \arg\min\limits_{v\in \mathcal{A}} |v-z|, \quad z\in [a,b].
\]
When the input is a sequence, we can apply scalar quantization to each entry of the sequence independently, resulting in the Memoryless Scalar Quantization (MSQ). In terms of image quantization, MSQ quantizes each pixel of $X$ independently,    
\[
\mathcal{A}^{N,N} \ni q = Q_{\mathcal{A}}^{MSQ} (X),  \textrm{ with } q_{i,j} = Q_{\mathcal{A}}(X_{i,j}),\quad 1\leq i,j\leq N.
\]
Here $q_{i,j}$  and $X_{i,j}$ are the $(i,j)^{\textrm{th}}$ entries of the quantized and the original images $q$ and $X$, respectively. MSQ is the state-of-the-art quantization in imaging devices, such as cameras.

\noindent\textbf{$\Sigma\Delta$ quantization:}
$\Sigma\Delta$ quantization was first proposed for bandlimited functions (see, e.g. \cite{inose1963unity,daub-dev,G-exp,schreier2005understanding}). As an adaptive quantization, it was shown to be more efficient than MSQ in a variety of applications \cite{powell2013quantization,krahmer2014sigma,wFourier,SWY18,SWY17,LR19}. The adaptiveness comes from the fact that it utilizes quantization errors of previous samples to increase the overall accuracy of the sequence. Suppose the sample sequence is $y=(y_1,\cdots,y_m)$, the first-order $\Sigma\Delta$ quantization (e.g., \cite{DGK10,SWY18}) $q =Q_{\mathcal{A}}^{\Sigma\Delta,1}(y)$ is defined by the following iterations
\begin{align}
\begin{split}
q_i& = Q_{\mathcal{A}}(y_i+u_{i-1}),   \label{eq:sd} \\
(D u)_i &:= u_i-u_{i-1}=y_i-q_i.
\end{split}
\end{align}
Here $D$ is the forward finite difference operator/matrix, with $1$s on the diagonal and $-1$s on the sub-diagonal. The $u_i$ in these equations is called the state variable, and it stores the accumulated quantization error up to the $i^\text{th}$ iteration. $u_0$ is usually initialized to 0, later $u_i$'s are computed from  the second equation in \eqref{eq:sd}. The first equation of \eqref{eq:sd} computes the quantization at each iteration. We see that instead of directly quantizing the $i^{\text{th}}$ input $y_i$, we now add to it the previous state variable $u_{i-1}$ and then apply the scalar quantization to the sum. The addition of $u_{i-1}$ to $y_i$ ensures the use of feedback information. We  call \eqref{eq:sd} the first-order quantization scheme because it only uses the latest state variable $u_{i-1}$. More generally, one can define the $r^\text{th}$-order $\Sigma\Delta$ quantization by utilizing $r$ previous state variables, $u_{i-1}$, $u_{i-1}, u_{i-2},\cdots, u_{i-r}$. More precisely, denote the $r^\text{th}$-order quantization by $q =Q_{\mathcal{A}}^{\Sigma\Delta,r}(y)$, then each entry $q_i$ of $q$ is obtained by 
\begin{align*}
q_i& = Q_{\mathcal{A}}(g_{r}(u_{i-1},\cdots,u_{i-r})+ y_i), \\
(D^r u)_i &:=y_i-q_i, 
\end{align*} 
\sloppy where the $r^{\text{th}}$-order finite difference operator is defined via  $D^r u:=D(D^{r-1}u)$, and $g_r$ is some general function aggregating the previous state variables $u_{i-1},\cdots,u_{i-r}$. In this paper, we employ the usual choice of $g_r$,
\[
g_{r}(u_{i-1},\cdots,u_{i-r}) = \sum\limits_{j=1}^r(-1)^{j-1}\binom{r}{j}u_{i-j}.
\]
When using the $\Sigma\Delta$ quantization scheme on a 2D image $X$, we need to convert the image $X$ into sequences. One way to do this is by applying $\Sigma\Delta$ quantization independently to each column (or row), 
\[
q = Q_{\mathcal{A}}^{\Sigma\Delta, r}(X):= [q_1,\cdots,q_N] , \ \ q_j = Q^{\Sigma\Delta,r}_{\mathcal{A}_\delta}(X_j), \ \ j=1,\cdots,N,
\]
where $X_j$ is the $j^{\text{th}}$ column of $X$.
As this procedure may create discontinuity across columns/rows, we will propose in Section \ref{sec:highSD} a 2D $\Sigma\Delta$ quantization scheme that allows a more continuous quantization/reconstruction.
\subsection{Noise shaping effect of the adaptive quantizers}
The main advantage of Sigma Delta quantization over MSQ is its adaptive usage of feedback information. The feedback information allows a quantizer to wisely allocate bits to efficiently store the entire signal.  For example, adaptive quantization has been known to be extremely efficient for the class of slowly varying signals \cite{daub-dev}. Intuitively speaking, this is because adaptive quantizers can push most quantization errors to the high-frequency region, known as the  noise-shaping effect \cite{chou2015noise,benedetto2006sigma,chou2013beta}, so that the low-frequency region where the slowly varying signals reside are relatively clean. As displayed in Figure \ref{fig:compare}, via a comparison with MSQ, we see that with the same random sequence as input, the error of MSQ is uniformly distributed across the entire spectrum, while that of Sigma Delta quantization is mostly concentrated in the high-frequency regions. Mathematically speaking, the noise shaping effect is a direct consequence of the definition of Sigma Delta quantization. Notice that the first-order Sigma Delta quantization \eqref{eq:sd} has an equivalent matrix form of
\[
y- q = Du, \ \  \|u\|_{\infty} \leq \delta/2 ,
\]
where $\delta$ is the quantization step size in \eqref{eq:alphabet} and $D$ is the finite difference matrix. 
 This is saying that the quantization error $y-q$ is in the set $D (B_{\|\cdot\|_{\infty} }(\delta/2))$, which  is the  $\ell_{\infty}$-ball with radius $\delta/2$   linearly transformed by the matrix $D$. 
 Likewise, for the $r^{\text{th}}$-order $\Sigma\Delta$ quantization ($r \in \mathbb{Z}_+$), we have
\[
y - q = D^ru, \ \ \|u\|_{\infty} \leq \delta/2 ,
\]
which means that the quantization error $y-q$ lies in the $\ell_{\infty}$-ball linearly transformed by $D^r$. 

This is to say that, along each singular vector direction of $D^r$, a scaling by the corresponding singular value is applied to the $\ell_{\infty}$-ball.  We have known from \cite{GLPSY13} that the singular vectors of $D$ are almost aligned with the Fourier basis, and the singular values of $D$ increase with frequencies. Therefore, when computing $Du$ for some $u$ in the $\ell_{\infty}$-ball, the low-frequency components of $u$ would be compressed more than the high-frequency ones as they correspond to smaller singular values of $D$. One can numerically verify this unbalanced scaling by $D^r$ for various values of $r$. Specifically, we hit $D^r$ by sinusoids with various frequencies and compute the ratio 
\[\rho_r(w)=\frac{\|D^r \{\sin(wt_n)\}_{n=1}^N\|_2}{\| \{\sin(wt_n)\}_{n=1}^N\|_2},\]
{where $t_n$, with $n=1,...,N$, are uniform samples in $[0,2\pi]$.}  Figure \ref{fig:noise_shaping} is a plot of $\rho_r(w)$ when setting $N=200$, $r=1,2,3$, and $1\leq w\leq 99$. One can see that sinusoids with lower frequencies have less energy left after hitting with $D^r$, especially when the order of quantization $r$ is large, thus confirming the assertion that components with lower frequencies are cleaner under Sigma Delta quantization.
\begin{figure}[H]
\center
\includegraphics[scale=.5]{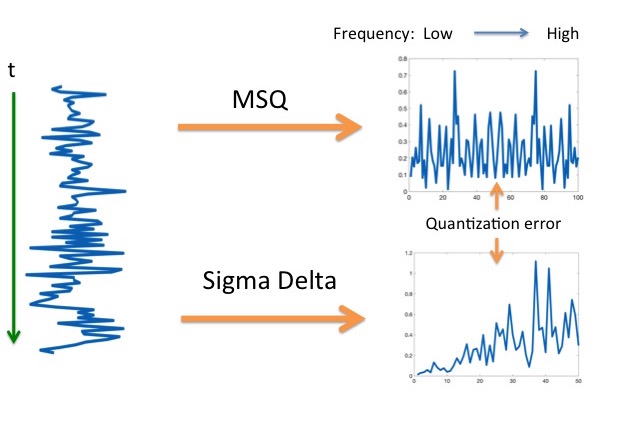}
\caption{The quantization errors of MSQ and $\Sigma\Delta$ in the Fourier domain.}
\label{fig:compare}
    \end{figure}
    \begin{figure}[H]
    \center
\includegraphics[scale=.5]{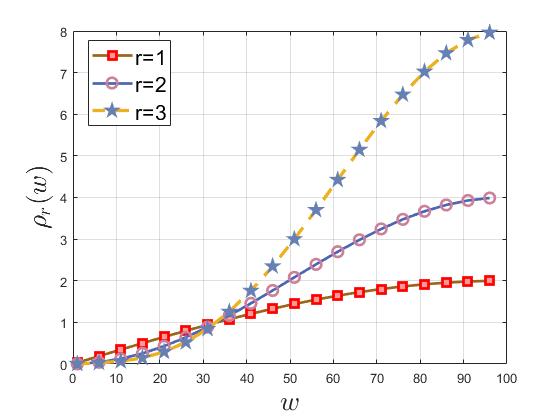}
\caption{The compression rate of $D$, $D^2$, and $D^3$ on the discrete sinusoid $\{\sin(wt_n)\}_{n=1}^N$ with various frequencies $w\in [1,99]$.}
\label{fig:noise_shaping}
    \end{figure}
Since Sigma Delta quantization introduces smaller errors for low-frequency components, it was primarily used for quantizing low-frequency (i.e., slowly-varying) vectors. 
For instance, the sequence of dense (over Nyquist-rate) samples of audio signals can be deemed as low-frequency vectors, for which the superior performance of Sigma Delta quantization has been shown in \cite{daub-dev,G-exp}. Images, on the other hand, are not purely made of low-frequencies. Because sharp edges, as important components of images, have slowly decaying Fourier coefficients. Therefore it is not obvious whether applying Sigma Delta quantization to images is beneficial.
\subsection{A quick review of image quantization}\label{sec:image}
Despite the importance of quantization in image acquisition, the non-adaptive MSQ is thus far the state-of-the-art method in digital imaging devices. The major drawback of MSQ is that when the bit-depth (i.e., the number of bits used to represent each pixel) is small, it has a color-banding artifact: similar colors merge into one, causing fake contours and plateaus in the quantized image (see (B) of Figure \ref{fig:three cat graphs}). A famous technique called dithering \cite{roberts1962picture,schuchman1964dither} reduces color-banding by randomly perturbing the pixel values (e.g., adding random noise) before quantization. It then breaks artificial contour patterns into less harmful random noises.  However, this random noise is still quite visible as shown in (C) of Figure \ref{fig:three cat graphs}. A more fundamental issue is that dithering only randomizes the quantization error instead of reducing it. The same amount of error still exists in the quantized image and will manifest itself in other ways.

Another method to avoid color-banding is digital half-toning, proposed in the context of binary printing, where pixel values are converted to 0 or 1, leading to a possibly severe color-banding artifact. To mitigate it, the digital half-toning was proposed based on the ideas of sequential pixel quantization and error diffusion. Error diffusion means the quantization error of the current pixel will spread out to its neighbors to compensate for the overall under/over-shooting. The rate of spreading is set to empirical values that minimize the overall $\ell_2$ quantization error of an entire image class. Error diffusion works under a similar assumption as the Sigma Delta quantization that the image intensity varies slowly and smoothly. In a sense, it trades color-richness with spatial resolution. Similar to dithering, error diffusion does not reduce the overall noise but only redistributes it.
\begin{figure}[htbp]
     \centering
     \begin{subfigure}[t]{0.3\textwidth}
         \centering
         \includegraphics[width=\textwidth]{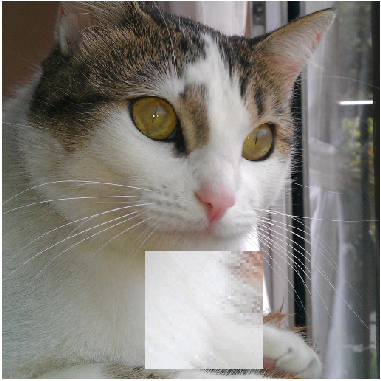}
         \caption{Original image, bit-depth=8.}
     \end{subfigure}
     \hfill
     \begin{subfigure}[t]{0.3\textwidth}
         \centering
         \includegraphics[width=\textwidth]{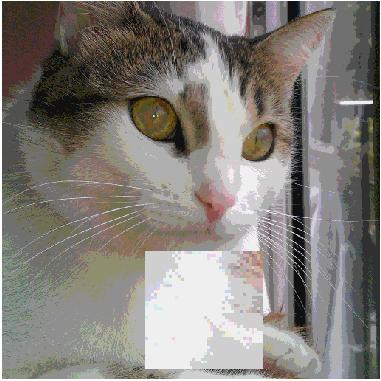}
         \caption{Quantized image, bit-depth=3, with no dithering. }
         \label{fig:8nodither}
     \end{subfigure}
     \hfill
     \begin{subfigure}[t]{0.3\textwidth}
         \centering
         \includegraphics[width=\textwidth]{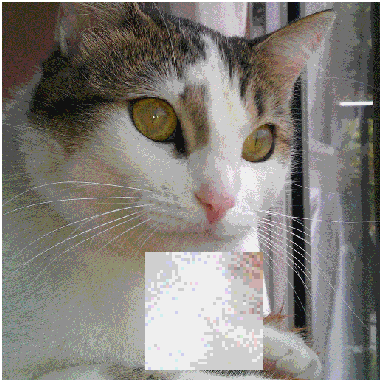}
         \caption{Quantized image, bit-depth=3, with dithering.}
     \end{subfigure}
        \caption{3-bit MSQ quantization. (A) Original image; (B) Quantized image with MSQ (no dithering), we observe a strong color-banding effect. (C) Quantized image with dithering and MSQ. The image contains observable noisy spots.}
        \label{fig:three cat graphs}
\end{figure}
\subsection{Contribution}
From the discussion in Section \ref{sec:image}, we see that both dithering and digital halftoning are only redistributing the quantization error instead of compressing it.  In contrast, the method we introduce in this paper achieves a real reduction of the quantization error upon that of MSQ.  Explicitly, suppose $N^2$ is the total number of pixels and $s$ is the number of pixels representing curve discontinuities (e.g., edges) in the image, then our method reduces the quantization error from $O(N)$ to $O(\sqrt s)$. This is achieved by combining Sigma Delta quantization with an optimization-based reconstruction algorithm. We observe in the numerical experiments that both the low and high frequency errors are reduced. 

Due to the use of the total variation norm in the decoding optimization, our result is closely related to super-resolution theory \cite{candes2013super,candes2014towards,li2017elementary} in compressed sensing, where it has been demonstrated that a sparse signal can be super-resolved using an $\ell_1$-norm minimization if all the spikes in the signal are well-separated.  We stress that the error reduction we achieve for image quantization does not require the edges of the image to satisfy this separation condition, although if the separation condition is met, a further error reduction can be achieved.

Besides Sigma Delta quantization, there exist other adaptive encoders  (e.g.,  Beta encoder \cite{chou2013beta,chou2016distributed,chou2017distributed,gunturk2019high}). These encoders had been used successfully on 1D audio signals to improve the bit-rate-distortion over MSQ, but none of them has been used for images. One major reason is that previous analyses (e.g., \cite{benedetto2006sigma,wFourier,blum:sdf,daub-dev,DGK10}) all indicated that these adaptive schemes can only compress low-frequency noise by sacrificing the high-frequency accuracy. While this might be a good idea for audio signals, one does not want to make such a sacrifice when it comes to images. In this paper, we demonstrate that a carefully designed decoder can help retain the high-frequency information while compressing the low-frequency noise. Therefore, the overall result outperforms MSQ. We only consider Sigma Delta quantization in this paper and leave the study of other adaptive quantizers on images as future work.  

Unlike previous works on quantization under frame or compressed sensing measurements (see, e.g., \cite{powell2013quantization,bodmann2007frame,KSW12,blum:sdf,GLPSY13,krahmer2014sigma,benedetto2006sigma,lammers2010alternative,goyal1998quantized,baraniuk2017exponential,dirksen2018fast,jacques2013robust,plan2012robust,plan2014dimension,huynh2020fast,dirksen2020one,dirksen2018fast,feng2019quantized}), where samples are assumed to be taken by random Gaussian/sub-Gaussian or Fourier measurements, here we allow a direct quantization on each pixel and therefore ensure maximal practicality. 

Another contribution of the work is an extension of Sigma Delta quantization to higher dimensions. We found that the proposed two-dimensional Sigma Delta quantization can effectively reduce the artifacts in the reconstruction while being as fast as the 1D quantization. 
\section{Proposed Method}\label{sec:method}
We propose an adaptive quantization framework for natural images. Given an input image, the quantization workflow involves:
\begin{itemize} 
\item segmentation: divide the image into columns or rectangular patches. Here and throughout the paper, we use the term ``patch'' to refer to the set of image pixels restricted to a rectangular window of a certain size;
\item quantization: use the existing 1D $\Sigma\Delta$ quantization in the literature to quantize each column in parallel or use the proposed 2D $\Sigma\Delta$ quantization (in Section \ref{sec:highSD}) to quantize each rectangular patch in parallel; 
\item reconstruction: run the proposed decoding algorithms (Section \ref{sec:theorems}) to restore the columns or patches and stack them into the final reconstructed image.
\end{itemize}
\subsection{The encoders}\label{sec:encoder}
We consider two types of encoders/quantizers in this paper.
\begin{itemize}
\item Encoder 1 ($Q_{col}$): 1D $r^{\textrm{th}}$-order Sigma Delta quantization applied to each column of the image (i.e., column-by-column quantization); 
\item Encoder 2 ($Q_{2D}$): 2D $r^{\textrm{th}}$-order Sigma Delta quantization applied to each patch of the image (i.e., patch-by-patch quantization).
\end{itemize}

An important question one may ask is the practicality of the proposed adaptive quantizers when used in commercial cameras. A natural concern is the waiting time.  Unlike MSQ that quantizes each pixel in parallel, Sigma Delta quantization can only be performed sequentially, which seems to inevitably introduce extra waiting time. However, this is not the case as current cameras are already using sequential quantization architectures for consistency, energy, and size considerations. More specifically, in current cameras, to reduce the number of ADC (Analog to Digital Converters) and save energy, one column of pixels or the whole image are assigned to one ADC,
which means the pixels in one column or those of the entire image need to wait in a queue to be quantized. This architecture is perfect for Sigma Delta quantization. The only minor change one needs to make is adding additional memory units to the circuit to store the state variables.
 
The structure of the rest of the paper is as follows. In Section \ref{sec:highSD}, we introduce the proposed 2D Sigma Delta quantization along with some of its properties. In Section \ref{sec:assumptions}, we introduce three image models which are of interest to this paper. In Section \ref{sec:decoders}, we present three decoders associated with each of the three image models and summarize the reconstruction accuracy. The main theorems containing the reconstruction error bounds and their derivations can be found in Section \ref{sec:theorems}. In Section \ref{sec:optimization} of the appendix, we describe an efficient algorithm for solving the proposed optimization problems in the decoding process. Finally, in Section \ref{sec:experiments}, we perform numerical experiments to verify the conclusion of the theorems and to provide more evidence of the efficacy of the proposed method in real applications.  
\subsection{High dimensional Sigma Delta quantization}\label{sec:highSD}
Although we can apply 1D $\Sigma\Delta$ quantization column by column to an image, it is likely to create discontinuities along the horizontal direction. As images are two-dimensional arrays, a two-dimensional quantization scheme is needed in maintaining the overall continuity. For this purpose, we propose the first high-dimensional Sigma Delta quantization. 

Recall that in a nutshell, the existing 1D first-order Sigma Delta quantization $Q_{\mathcal{A}}^{\Sigma\Delta,1}$:  $[a,b]^N \ni y \rightarrow q\in \mathcal{A}^N$ ($\mathcal{A}$ is the alphabet) is defined by constructing for any $y\in[a,b]^N$, two vectors $q$ (the quantization) and $u$ (the state variable) of the same length as $y$ and obeying
\begin{enumerate}
\item[(A1)]  (boundedness/stability):\label{it:A1} $\|u\|_{\infty} \leq C$, for some constant $C$ independent of $N$;
\item[(A2)] (adaptivity): \label{it:A2}
$ u_{i} = u_{i-1} +y_i-q_i,  \forall i,$ with $u_i$, $y_i$, $q_i$ being the $i^{\text{th}}$ component of the vectors $u$, $x$ and $q$, respectively;
\item [(A3)] (causality): \label{it:A3} $q$ only depends on the history of the input $y$, that is $q_i = f(y_{i}, y_{i-1}, \cdots, y_{1})$, for any $i$ and some function $f$.
\end{enumerate}

We now extend these conditions to two dimensions, and the extensions to higher dimensions are similar. A 2D Sigma Delta quantization is well-defined and denoted as $Q_{\mathcal{A},2D}^{\Sigma\Delta}$:  $[a,b]^{N,N} \ni y \rightarrow q\in \mathcal{A}^{N,N}$, if we can construct for any input image $y\in[a,b]^{N,N}$, two $N\times N$ matrices $q$ and $u$ satisfying the following three conditions: 
\begin{enumerate} 
\item[(A1')] (boundedness/stability): $\|u\|_{{\max}} \leq C$ ($\|\cdot\|_{{\max}}$ denotes the entry-wise maximal magnitude of a matrix) for some constant $C$; 
\item[(A2')] (adaptivity):  $ u_{i,j} = u_{i,j-1}+u_{i-1,j} -u_{i-1,j-1} + y_{i,j}-q_{i,j}$ which has a matrix representation of $DuD^T = y-q$; and 
\item[(A3')] (causality): $q_{i,j} = f(\{y_{i',j'}\}_{i'\leq i, j'\leq j}$) for some function $f$. 
\end{enumerate}

Provided that the quantization alphabet is large enough, one can show that a pair of $u$ and $q$ that satisfies (A1')-(A3') can be constructed through the recursive formula
\begin{align}\label{eq:2Dq}
\begin{split}
q_{i,j} = Q_{\mathcal{A}}(u_{i,j-1}+u_{i-1,j}-u_{i-1,j-1}+y_{i,j}),\\
u_{i,j} = u_{i,j-1}+u_{i-1,j}-u_{i-1,j-1} + y_{i,j}-q_{i,j}, 
\end{split}
\end{align}
where $i,j>1$. When $i=1$ or $j=1$, the first row and column can be initialized using the 1D $\Sigma\Delta$ quantization. In the extreme case of  $1$-bit quantization, the existence of $u$ and $q$ obeying (A1')-(A3') for all $y\in[a,b]^{N\times N}$ is unknown, and we leave this as further work.  Here we focus on the case when the bit-depth is greater than or equal to two. We first show in Proposition \ref{pro:bit} that a stable 2D Sigma Delta quantization exists and then in Proposition \ref{pro:opt} that the uniform alphabet with a certain step-size has the optimal stability among all alphabets of the same bit-depth.
\begin{proposition}\label{pro:bit}
For a given pair of real values $a,b$ with $a<b$ and a bit-depth $d\geq 2$, there exists an alphabet  $\mathcal{A}$ such that for any 2D array $y\in[a,b]^{N,N}$, the $u$ and $q$ generated by \eqref{eq:2Dq} satisfy (A1')-(A3') with $C=\frac{b-a}{2(2^d-3)}$.
\end{proposition}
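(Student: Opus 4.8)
The plan is to exhibit one explicit alphabet that works and to verify (A1')--(A3') directly, with (A1') being the only nontrivial part. Note first that (A2') holds by construction, since it is literally the second line of \eqref{eq:2Dq}, and (A3') (causality) is immediate: by \eqref{eq:2Dq} the value $q_{i,j}$ is a function of $y_{i,j}$ together with the three already-computed state variables $u_{i,j-1},u_{i-1,j},u_{i-1,j-1}$, each of which depends only on inputs $y_{i',j'}$ with $i'\le i$, $j'\le j$; unrolling the recursion expresses $q_{i,j}$ as a function of $\{y_{i',j'}\}_{i'\le i,\,j'\le j}$. So the entire content of the proposition is the bound $\|u\|_{\max}\le C$ in (A1'), which I would obtain by a single induction over the causal ordering of the entries.

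For the bound, set $C=\frac{b-a}{2(2^d-3)}$ (positive precisely because $d\ge 2$ forces $2^d-3\ge 1$) and take a uniform alphabet with step-size $\delta=2C$ having exactly $2^d$ points, placed so that its range is $[\min\mathcal{A},\max\mathcal{A}]=[a-2C,\,b+2C]$. The key elementary fact about rounding is that $|z-Q_{\mathcal{A}}(z)|\le \delta/2=C$ whenever $z$ lies in the slightly enlarged ``good range'' $[\min\mathcal{A}-\delta/2,\ \max\mathcal{A}+\delta/2]=[a-3C,\ b+3C]$, an interval of length $2^d\delta$. The strategy is therefore to show that the pre-quantization quantity
\[
w_{i,j}:=u_{i,j-1}+u_{i-1,j}-u_{i-1,j-1}+y_{i,j}
\]
always falls inside $[a-3C,b+3C]$, for then $u_{i,j}=w_{i,j}-Q_{\mathcal{A}}(w_{i,j})$ automatically satisfies $|u_{i,j}|\le C$.

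I would run the induction in the causal order in which \eqref{eq:2Dq} computes the entries (e.g.\ row by row). For the base case, the first row and first column are produced by 1D $\Sigma\Delta$, where the analogous pre-quantization quantity is $u_{\text{prev}}+y\in[a-C,b+C]\subseteq[a-3C,b+3C]$; since $u_0=0$, each boundary state variable obeys $|u|\le C$. For the inductive step at an interior $(i,j)$, assuming $|u_{i,j-1}|,|u_{i-1,j}|,|u_{i-1,j-1}|\le C$, the triangle inequality gives $|u_{i,j-1}+u_{i-1,j}-u_{i-1,j-1}|\le 3C$, hence $w_{i,j}\in[a-3C,b+3C]$ as required, and $|u_{i,j}|\le C$ follows. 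This closes the induction and proves (A1').

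The one place that requires care---and where the specific constant is forced---is matching the geometry of the alphabet to the $3C$ overshoot. In two dimensions three state variables enter $w_{i,j}$ (with signs $+,+,-$), so the input range $[a,b]$ gets inflated to an interval of length $(b-a)+6C$, whereas a $2^d$-point step-$\delta$ alphabet can only cover a good range of length $2^d\delta=2^{d+1}C$. Demanding that these lengths coincide, $(b-a)+6C=2^{d+1}C$, is exactly the equation $2C(2^d-3)=b-a$, which determines the critical $C=\frac{b-a}{2(2^d-3)}$; this is also the smallest $C$ for which the construction can close, and it transparently explains why the argument breaks at $d=1$ (where $2^d-3<0$). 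Compared with the 1D scheme, whose single feedback term yields the milder denominator $2^d-1$, the factor $3$ is the sole new ingredient here.
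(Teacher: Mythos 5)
Your proposal is correct and follows essentially the same route as the paper: the identical uniform alphabet $\{a-2C,a,a+2C,\dots,b,b+2C\}$ with $2^d$ points, and an induction in the causal order showing the pre-quantization quantity stays in $[a-3C,b+3C]$ (via the triangle inequality on the three feedback terms), which forces $|u_{i,j}|\le C$. The only differences are cosmetic — you make the rounding lemma and the boundary/interior split slightly more explicit, and you add a derivation of why $C=\frac{b-a}{2(2^d-3)}$ is forced, which the paper defers to Proposition \ref{pro:opt}.
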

\begin{proof}
 Let $C = \frac{b-a}{2(2^d-3)}$, and create the alphabet  as
$$\mathcal{A}=\{a-2C,a,a+2C,\cdots,b,b+2C\}.$$
Then $|\mathcal{A}|=\frac{b+2C-(a-2C)}{2C}+1=2^d$. Now we use the second principle of induction to show that $u$ generated by \eqref{eq:2Dq} satisfies $\|u\|_{\textrm{max}}\leq C$.
\begin{itemize}
\item Induction hypothesis: if for all the pairs $(m,n)$ such that $m\leq i,\ n\leq j,\ m+n<i+j,\ |u_{m,n}|\leq C$, then $|u_{i,j}|\leq C$.

\item Base case: $|u_{1,1}|=|y_{1,1}-q_{1,1}|=|y_{1,1}-Q_\mathcal{A}(y_{1,1})|\leq C$.

\item Induction step: if $i=1,\ q_{i,j}=Q_\mathcal{A}(y_{i,j}+u_{i,j-1})$, then by the induction hypothesis, we have $a-C\leq y_{i,j}+u_{i,j-1}\leq b+C$, thus $|u_{i,j}|=|y_{i,j}+u_{i,j-1}-q_{i,j}|\leq C$. The same reasoning follows when $j=1$.

If $i,j\geq 2$, by the induction hypothesis, we have $a-3C\leq y_{i,j}+u_{i,j-1}+u_{i-1,j}-u_{i-1,j-1}\leq b+3C$. Hence $|u_{i,j}|=|y_{i,j}+u_{i,j-1}+u_{i-1,j}-u_{i-1,j-1}-q_{i,j}|\leq C$.
\end{itemize}
\end{proof}
The next proposition shows that the stability constant $C = \frac{b-a}{2(2^d-3)}$ associated with the uniform alphabet
$$\mathcal{A}=\{a-2C,a,a+2C,\cdots,b,b+2C\}$$
in Proposition \ref{pro:bit} is optimal. We provide the proof of Proposition \ref{pro:opt} in Section \ref{sec:append1} of the appendix.
\begin{proposition}\label{pro:opt}
For a fixed bit-depth $d\geq 2$, the alphabet $\mathcal{A}$ for the 2D $\Sigma\Delta$ quantization given in Proposition \ref{pro:bit} is optimal, in the sense that if $\widetilde C$ is the stability constant of any other $d$-bit alphabet $\widetilde{\mathcal{A}}$ (not necessarily equally spaced), then it is necessary that $\widetilde{C}\ge C$, where $C$ is the stability constant of $\mathcal{A}$.
\end{proposition}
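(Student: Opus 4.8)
The plan is to read stability off \eqref{eq:2Dq} as a covering condition on $\widetilde{\mathcal{A}}$ and then count. For an interior pixel the update makes $u_{i,j}$ exactly the signed quantization error of the \emph{effective input} $w_{i,j}:=y_{i,j}+u_{i,j-1}+u_{i-1,j}-u_{i-1,j-1}$, that is $u_{i,j}=w_{i,j}-Q_{\widetilde{\mathcal{A}}}(w_{i,j})$. Consequently $\|u\|_{\max}\le\widetilde{C}$ holds for every admissible $y$ precisely when every effective input the scheme can \emph{produce} lies within distance $\widetilde{C}$ of $\widetilde{\mathcal{A}}$; equivalently, $\widetilde{\mathcal{A}}$ must cover the whole reachable set of effective inputs with covering radius at most $\widetilde{C}$.

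Next I would pin down that reachable set. Put $M^+:=\sup u_{i,j}$ and $M^-:=\inf u_{i,j}$, the suprema taken over all $y\in[a,b]^{N,N}$ and all interior positions, so that $\widetilde{C}=\max(M^+,-M^-)$. I claim the effective inputs densely fill the interval $I:=[\,a+2M^--M^+,\ b+2M^+-M^-\,]$, whose length is $(b-a)+3(M^+-M^-)$. The right endpoint is produced by steering the three stencil neighbors $(i,j-1),(i-1,j),(i-1,j-1)$ simultaneously to $(\approx M^+,\approx M^+,\approx M^-)$ and taking $y_{i,j}=b$; the left endpoint is symmetric. I would prove this by an inductive, causal construction: because \eqref{eq:2Dq} is local, the three target neighbor-states can be prepared from disjoint portions of their common causal past, and at each step the width $b-a$ of the input window (which exceeds the largest alphabet gap in the only nontrivial regime, $\widetilde{C}$ small) lets one place the next effective input arbitrarily close to any prescribed value, in particular at a deep hole of $\widetilde{\mathcal{A}}$ on the desired side. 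This simultaneity is the main obstacle, since it is not enough to approach each extremal state in isolation.

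Granting reachability, the conclusion follows by counting. Writing $\widetilde{\mathcal{A}}=\{\alpha_1<\cdots<\alpha_{2^d}\}$, a set of $2^d$ points with covering radius $\widetilde{C}$ covers an interval of total length at most $2^{d+1}\widetilde{C}$, the extreme configuration having consecutive points exactly $2\widetilde{C}$ apart. Since $\widetilde{\mathcal{A}}$ must cover $I$, this yields $(b-a)+3(M^+-M^-)\le 2^{d+1}\widetilde{C}$. To remove the slack coming from possibly asymmetric alphabets, I would express $M^+$ and $-M^-$ through the largest gap $g_{\max}$ of $\widetilde{\mathcal{A}}$ and the overshoots $e^\pm$ by which $I$ extends beyond the extreme alphabet points, namely $M^+=\max(g_{\max}/2,e^+)$ and $-M^-=\max(g_{\max}/2,e^-)$, together with $\alpha_{2^d}-\alpha_1=|I|-e^+-e^-\le(2^d-1)g_{\max}$. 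Maximizing $b-a$ over the feasible quantities (all gaps $\le 2\widetilde{C}$, both $e^\pm\le\widetilde{C}$) forces every gap to equal $2\widetilde{C}$ and $e^+=e^-=\widetilde{C}$ — exactly the uniform, symmetric alphabet of Proposition \ref{pro:bit} — and gives $b-a\le(2^{d+1}-6)\widetilde{C}=2(2^d-3)\widetilde{C}$. Rearranging delivers $\widetilde{C}\ge\frac{b-a}{2(2^d-3)}=C$.

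I expect the genuinely hard step to be the reachability construction of the second paragraph: making rigorous that adjacent state variables can be driven to their extreme values with opposite prescribed signs at the same time. I would handle it by exploiting causality and locality of \eqref{eq:2Dq}, so that the stencil's three neighbors are governed by essentially independent past blocks, and by checking that for $d\ge 2$ the range $[a,b]$ is wide enough to realize any target residue modulo $\widetilde{\mathcal{A}}$. The remaining ingredients — the covering-length estimate and the finite maximization over gaps and overshoots — are elementary.
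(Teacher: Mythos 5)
Your covering count at the end is sound and, granting your reachability claim, does recover the exact constant: from $(b-a)+3(M^+-M^-)=|I|\le(\alpha_{2^d}-\alpha_1)+e^++e^-\le(2^d-1)g_{\max}+e^++e^-\le 2^d(M^+-M^-)$ one gets $b-a\le(2^d-3)(M^+-M^-)\le 2(2^d-3)\widetilde C$, which is the desired bound, and the equality case matches the alphabet of Proposition \ref{pro:bit}. The problem is that essentially all of the difficulty has been pushed into the claim that the effective inputs densely fill $I=[\,a+2M^--M^+,\ b+2M^+-M^-\,]$, and the two mechanisms you offer for proving it do not work as stated. First, the causal pasts of the three stencil neighbors are \emph{not} disjoint: $(i-1,j-1)$ lies in the causal past of both $(i,j-1)$ and $(i-1,j)$ and enters their update equations directly, so driving $u_{i-1,j-1}$ toward $M^-$ changes the offsets you have available when you then try to drive $u_{i,j-1}$ and $u_{i-1,j}$ toward $M^+$. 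Second, it is not true that the width $b-a$ of the input window lets you ``place the next effective input arbitrarily close to any prescribed value'': as $y_{i,j}$ ranges over $[a,b]$ the effective input sweeps an interval of length exactly $b-a$ whose \emph{location} is fixed by the already-committed past states, and the deep hole of $\widetilde{\mathcal{A}}$ that realizes $M^+$ (or the region beyond $\alpha_{2^d}$ that realizes $e^+$) need not lie inside that window. There is also a circularity to resolve: $M^{\pm}$ may be attained only at effective inputs near the ends of $I$, whose reachability is exactly what is being proved, so at best one needs a bootstrapping argument with increasing approximants $M^{\pm}_{(k)}$, each stage requiring a fresh simultaneity construction.

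The paper's proof is precisely a workaround for these obstacles. It argues by contradiction from $\widetilde C<C$, works only in the top-left $2\times2$ (or $3\times3$) corner, where $u_{1,1}$, $u_{1,2}$, $u_{2,1}$ are directly steerable through the free inputs $y_{1,1}$, $y_{1,2}$, $y_{2,1}$; it replaces the global largest gap by the largest gap of $\widetilde{\mathcal{A}}$ \emph{inside} $[a,b]$, which is always reachable from the corner; and it then runs a multi-way case analysis on how $\widetilde{\mathcal{A}}$ sits relative to $[a,b]$ (whether $c_1<a$, $c_N>b$, how the boundary gaps compare to $b-c_j$, etc.) to certify in each case that the explicitly constructed effective input $y_{2,2}+u_{1,2}+u_{2,1}-u_{1,1}$ already incurs a quantization error at least $C-3\epsilon>\widetilde C$. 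Your formulation is cleaner and would subsume that case analysis, but only after a reachability lemma of comparable length and delicacy is actually proved; as written, the proposal has a genuine gap at its central step.
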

\begin{remark} 
The computational complexity of the 2D quantization scheme \eqref{eq:2Dq} is $O(N^2)$. However, noticing that for a fixed $t \in \{2, 3,\cdots, 2N-1 \}$, all $q_{i,j}$ with $i + j = t$ (the points on the $t^{\textrm{th}}$ anti-diagonal)  can be computed in parallel, therefore the quantization time can be reduced to $O(N)$.
\end{remark} 
\begin{remark} The matrix representation of \eqref{eq:2Dq} is
\[
y-q = DuD^T.
\]
It is straightforward to extend this first-order quantization to higher orders. If $r\geq 1$, the $r^{\textrm{th}}$-order quantization obeys the matrix form recursive formula
\[
y - q = D^ru(D^r)^T.
\]\end{remark}
\subsection{Notation and Assumptions}\label{sec:assumptions}
Throughout, we assume that the image $X$ to be quantized is an $N\times N$ matrix. Everything discussed in the paper can be easily generalized to rectangular matrices. We write $X = (\underline x_1,\underline x_2,\cdots,\underline x_{N})=(\bar x_1,\bar x_2,\cdots,\bar x_{N})^T$, where $\underline x_i$ denote its columns and and $\bar x_i$ the rows. Let $D$ be the $N\times N$ difference matrix with $1$s on the diagonal and $-1$s on the sub-diagonal and $D_1$ be the circulant difference matrix with an extra -1 at the upper right corner. For vectors, we denote by $\|\cdot\|_1$ the  $\ell_1$-norm and by $\|\cdot\|_{\infty}$ the  $\ell_\infty$-norm. For a continuous function $f$, we denote by $\|f\|_{L^{\infty}}$ its $L^{\infty}$-norm. For matrices, $\|\cdot\|_{\max}$ stands for the maximum absolute entry, and $\|\cdot\|_1$ is the entry-wise $\ell_1$-norm, i.e., $\|A\|_1=\|\textrm{vec(A)}\|_1$. For both vectors and matrices, $\|\cdot\|_0$ is the number of nonzero entries. Also, $F=\{\omega^{kj}\}_{k,j=0}^{N-1}$ ($\omega = e^{-2\pi i/N}$) denotes the unnormalized $N\times N$ DFT matrix. We use $\mathcal{F}_k$ to represent the row of $F$ corresponding to frequency $k$, which can also be recognized as the operator that maps a vector to its discrete Fourier coefficient at frequency $k$. We use $F_L$ for the matrix containing the rows of $F$ associated with frequencies in the set $\{-L,-L+1,\cdots,L\}$, and denote $P_L=\frac{1}{N}F_L^* F_L$. In addition, $\ast$ stands for the circulant convolution operator between two vectors, {$\mathbb{T}$ is the 1D torus,} and $A(n)\lesssim B(n)$ means that there exists some constant $c$ independent of $n$,  such that $A(n)\leq cB(n)$ for all $n$. 

Our general assumption is that images have nearly sparse gradients. To be more precise, we consider three classes of images satisfying one of the three assumptions below.

\begin{Assumption}\label{assumption1} (\textbf{Class 1: $\beta^{\text{th}}$-order column-wise or row-wise sparsity condition}) Suppose $X \in [a,b]^{N,N}$ is an image, the columns or rows of $X$ are piece-wise constant or piece-wise linear. Explicitly, for $\beta=1$ or $2$, there exists $s< N$, 
\[
\| (D^\beta)^T\underline x_i\|_0\leq s,\;\forall i=1,2,\cdots, N,\ \text{or}\;\| \bar x_j^T D^\beta\|_0\leq s,\; \forall j=1,2,\cdots, N.
\]
If $\beta=1$, the columns or rows of image $X$ are piece-wise constant, if $\beta=2$, they are piece-wise linear. 
\end{Assumption}

\begin{Assumption}\label{assumption2}(\textbf{Class 2: $\beta^{\text{th}}$-order 2D sparsity condition}) Suppose $X \in [a,b]^{N,N}$ is an image, both columns and rows in $X$ are piece-wise constant or piece-wise linear. Explicitly, for $\beta=1$ or $2$, there exists $s< 2N^2 $, such that 
    $$\| (D^\beta)^T X\|_0+\|X D^\beta\|_0\leq s.$$
\end{Assumption}
\begin{Assumption}\label{assumption3}(\textbf{Class 3: $\beta^{\text{th}}$-order minimum separation condition}) $X$ satisfies Assumption \ref{assumption1}. In addition, the $\beta^{\text{th}}$-order differences of each column or row of $X$ satisfy the $\Lambda_M$-minimum separation condition defined below for some small integer $M\ll N$. Explicitly, this means that for $\beta=1$ or $2$, $D_1^\beta\underline x_i$ with $i=1,2,\cdots,N$ or $\bar x_j^T (D_1^\beta)^T$ with $j=1,2,\cdots,N$ satisfy the $\Lambda_M$-minimum separation condition. Note that here $D_1$ is the circulant difference matrix.
\end{Assumption}
\begin{definition}\textbf{($\Lambda_M$-minimum separation condition)} For a vector $x\in\mathbb{R}^N$, let $S\subseteq\{0,1,\cdots,N-1\}$ be the support of $x$. We say that $x$ satisfies the $\Lambda_M$-minimum separation condition if
\begin{equation}\label{eq:mini sepa}
\min_{s,s'\in S,s\not= s'}\; \frac{1}{N}d(s,s')\geq \frac{2}{M},
\end{equation}where $d(s,s'):=\min\{|s-s'|, |s+N-s'|, |s-N-s'|\}$ is the wrap-around distance on the set $\{0,...,N-1\}$. 
\end{definition}
In addition, following \cite{li2017elementary}, we use the definition $C(\mathbb{T};\Lambda_M)$ for the space of trigonometric polynomials up to degree $M$ on the 1D torus $\mathbb{T}$, i.e.,
\[
C(\mathbb{T};\Lambda_M)=\{f\in C^\infty(\mathbb{T}):f(x)=\sum_{k=-M}^M a_k e^{i2\pi kx},a_k\in\mathbb{C}\}.
\]
\subsection{The proposed decoders and error bounds}\label{sec:decoders}
For each image class, we will set the encoder $Q$ to be either the $Q_{col}$ or $Q_{2D}$ defined in Section \ref{sec:encoder}, and let $X$ be the image of interest. The various decoders we shall propose for different classes of images will all be in the general form of 
\begin{equation}\label{eq:unify}
\hat X=\argmin_Z\ \ell(Z,\beta) \ \ \ \textrm{subject to}\ \;\rho(Z,r)\leq c.
\end{equation}
Here $\beta$ is 1 or 2 depending on whether the image is assumed to be (approximately) piece-wise constant or piece-wise linear, $\ell(Z,\beta)$ is some penalty that encourages sparsity in the gradients, $r$ is the order of Sigma Delta quantization, and $\rho(Z,r)\leq c$ is the feasibility constraint corresponding to a specific quantization scheme.  Let $\hat X$ be a solution to the general form \eqref{eq:unify}, we shall obtain reconstruction error bounds of the following type
\begin{equation}\label{eq:general_error}
    \|\hat X-X\|_F\leq C(\beta,r,N,\delta),
\end{equation}
where $\beta$, $r$ are the same as defined above \eqref{eq:general_error}, $N$ is the size of the image and $\delta$ is the step-size of the alphabet.

Now we specify what the $\ell(Z,\beta)$ and $\rho(Z,r)$ in \eqref{eq:unify} should be for each class of images and provide the associated error bound. Note that in Assumption \ref{assumption1} and Assumption \ref{assumption3}, the images are allowed to be either row-wise sparse or column-wise sparse. Since their treatments are the same, from now on, we assume the images are column-wise sparse. 
\begin{itemize}
    \item Class 1: $X$ satisfies Assumption \ref{assumption1} with order $\beta=1$ or $2$ and sparsity $s$. We use the encoder $Q_{col}$ proposed in Section \ref{sec:encoder} with order $r\geq \beta$, and an alphabet with a step-size $\delta$. We solve the following optimization problem for a reconstruction 
    \begin{equation}\label{eq:case1}
    \hat X=\argmin_Z \|(D^\beta)^T Z\|_1\quad \textrm{subject to}\ \|D^{-r}(Z-Q_{col}(X))\|_{\textrm{max}}\leq\delta/2,
    \end{equation}
    where $\|\cdot\|_{\textrm{max}}$ is the maximum entry-wise absolute value.
    Theorem \ref{thm1} below shows that the reconstruction error is \[
    \|\hat X-X\|_F\leq C\sqrt{sN}\delta,
    \]
   where $C$ is a universal constant.
    \item Class 2: $X$ satisfies Assumption \ref{assumption2} with order $\beta=1$ or $2$ and sparsity $s$. We use the encoder $Q_{2D}$ (proposed in Section \ref{sec:highSD}) with order $r\geq\beta$ and an alphabet of step-size $\delta$.  We use the following optimization 
    \begin{align}\label{eq:case3}
    \begin{split}
    &\hat X = \argmin_Z \|(D^\beta)^T Z\|_1+\|ZD^\beta\|_1\\ &\textrm{subject to}\ \|D^{-r}(Z-Q_{2D}(X))(D^{-r})^T\|_{\textrm{max}}\leq\delta/2,
    \end{split}
    \end{align}
    Theorem \ref{thm:2Drecon} shows that the reconstruction error is bounded by 
    \[
    \|\hat X-X\|_F\leq C\sqrt{s}\delta,
    \]
    where $C$ is a universal constant.
     \item Class 3: $X$ satisfies Assumption \ref{assumption3} with order $\beta =1\ \text{or}\ 2$ and sparsity $s \ll N $.  The encoder is $Q_{col}$ with an alphabet spacing of $\delta$ (proposed in Section \ref{sec:encoder}) and $r\geq \beta$. Here we define a new alphabet $\widetilde{\mathcal{A}}$ with a smaller step-size $\widetilde\delta\coloneqq\frac{2\delta}{(2N)^r}$ to quantize the last $r$ entries of each column:
     $${\widetilde{\mathcal{A}}_{\widetilde{\delta}}\coloneqq\{\widetilde a,\widetilde a+\widetilde\delta,\widetilde a+2\widetilde\delta,\cdots,\widetilde a+K\widetilde\delta,\widetilde b\},}$${where $\widetilde a=a-(2^{r-1}-\frac{1}{2})\delta$, $\widetilde b=b+(2^{r-1}-\frac{1}{2})\delta$, and  $K=\max\{j\in\mathbb{Z},\ \widetilde a+j\widetilde\delta<\widetilde b\}.$} The total number of boundary bits is of order $O(r\log N)$, which is negligible compared to the $O(N)$ bits needed for the interior pixels. 
     Since $\widetilde{\mathcal{A}}_{\widetilde{\delta}}$ is finer than $\mathcal{A}_{\delta}$, the feasibility constraint is tighter at the last $r$ entries of each column:
     $$\|[D^{-r}(X-Q_{col}(X))]_{N-r+1:N,:}\|_{\textrm{max}}\leq \left(\frac{1}{2N}\right)^r\delta,$$
      where $[D^{-r}(Z-Q_{col}(X))]_{N-r+1:N,:}$ refers to the last $r$ rows of $D^{-r}(Z-Q_{col}(X))$, and $\|\cdot\|_{\textrm{max}}$ is the largest magnitude of the entries.
     Overall, we use the following optimization to obtain the reconstructed image $\hat X$: 
\begin{align}\label{eq:case2}
\begin{split}
&\hat X=\argmin_Z \|D_1^\beta Z\|_1  \\
&\textrm{subject to}\left\{
\begin{aligned}
&\|D^{-r}(Z-Q_{col}(X))\|_{\textrm{max}}\leq\delta/2, \\
&\|[D^{-r}(Z-Q_{col}(X))]_{N-r+1:N,:}\|_{\textrm{max}}\leq \left(\frac{1}{2N}\right)^r\delta. 
\end{aligned}
\right.
\end{split}
\end{align}
In Theorem \ref{thm2}, we obtain the following error bound
\begin{equation*}
\|\hat X-X\|_F\leq C\frac{M^{r+\beta-2}}{N^{r-3}}\delta,
\end{equation*}
\end{itemize}
where $C$ is a universal constant and $M$ is the same as in Assumption \ref{assumption3}.
In the following, we discuss Class 1 in Section \ref{sec:cbc1TV}, Class 2 in Section \ref{sec:2dimquant}, and Class 3 in Section \ref{sec:cbcsepa}.
\section{Main theorems and their proofs}\label{sec:theorems}
Without loss of generality, we assume the image $X$ has bounded pixel values, i.e., $X\in [0,1]^{N,N}$, and images satisfying Assumption \ref{assumption1} and Assumption \ref{assumption3} are column-wise sparse. 
The same results can be similarly obtained for row-wise sparse images.
\subsection{Class 1: Images with no minimum separation }\label{sec:cbc1TV}
In this section, we consider Class 1, where the image $X$ satisfies Assumption \ref{assumption1} with $\beta =1$, or $\beta = 2$ for all the columns. We use the encoder $Q_{col}$ that performs column-by-column quantization and use the decoder \eqref{eq:case1}. Both the encoder and decoder can be decoupled into columns and run in parallel.

For each column $\underline{x}\in\mathbb{R}^N$, let $\underline{q}$ be its $r^{\text{th}}$-order Sigma Delta quantization, i.e., $\underline{q}= Q^{\Sigma\Delta,r}_{\mathcal{A}_\delta}(\underline{x})$ and take $r\geq\beta$. For the simplicity of notation, we use $x$ and $q$ to represent $\underline{x}$ and $\underline{q}$, respectively. The decoder \eqref{eq:case1} decouples into columns. For each column, it is 
\begin{align}\label{eq:cbccase1}
\hat x=\argmin_{z} \|(D^\beta)^T z\|_1\quad \textrm{subject to } \|D^{-r}(z -q)\|_{\infty} \leq \delta/2.
\end{align}
Since $(D^\beta)^T z$ represents the $\beta^{\text{th}}$-order difference of $z$, the $\ell_1$-norm here promotes sparsity of the gradients. The $\ell_\infty$ ball-constraint is a well-known feasibility constraint for Sigma Delta quantization  (e.g., \cite{SWY18}).

The following theorem provides the error bound for this decoder. 
\begin{theorem}\label{thm1} For any $x \in [0,1]^N$ satisfying $\|(D^{\beta})^Tx\|_0 \leq s$ with $\beta=1\ \text{or}\ 2$, and $q=Q^{\Sigma\Delta,r}_{\mathcal{A}_\delta}(x)$ with $\ r\geq \beta$.  
Let $\hat x$ be a solution to \eqref{eq:cbccase1}, then 
\begin{equation}\label{eq:ineq1}
\|\hat x-x\|_2\leq C\sqrt{s}\delta,
\end{equation}
{where $C$ is some absolute constant independent of the choice of $x$.}
\end{theorem}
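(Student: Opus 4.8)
The plan is to run a ``cone-plus-tube'' argument adapted to the finite-difference geometry, with the one nonstandard ingredient being a single inner-product identity that couples the tube (quantization) constraint to the gradient-sparsity (cone) constraint. First I would establish that the ground truth $x$ is feasible for \eqref{eq:cbccase1}: by definition of the $r^{\text{th}}$-order scheme, $q=Q^{\Sigma\Delta,r}_{\mathcal{A}_\delta}(x)$ comes with a state vector $u$ satisfying $x-q=D^r u$ and the stability bound $\|u\|_\infty\le\delta/2$, so $D^{-r}(x-q)=u$ meets the constraint. Optimality of $\hat x$ then gives $\|(D^\beta)^T\hat x\|_1\le\|(D^\beta)^T x\|_1$. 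Writing $h:=\hat x-x$ and $g:=(D^\beta)^T h$, feasibility of both $x$ and $\hat x$ yields the tube bound $\|D^{-r}h\|_\infty\le\delta$; I set $p:=D^{-r}h$, so that $h=D^r p$ with $\|p\|_\infty\le\delta$.

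Next I would extract the cone condition. Let $S=\supp((D^\beta)^T x)$, so $|S|\le s$. Splitting $(D^\beta)^T\hat x=(D^\beta)^T x+g$ over $S$ and $S^c$ and applying the reverse triangle inequality to the minimality inequality produces $\|g_{S^c}\|_1\le\|g_S\|_1$, and therefore
\begin{equation*}
\|g\|_1\le 2\|g_S\|_1\le 2\sqrt{|S|}\,\|g_S\|_2\le 2\sqrt s\,\|g\|_2 .
\end{equation*}

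The crux is to turn these two facts into an $\ell_2$ bound on $h$ itself, and the key observation --- valid precisely because $r\ge\beta$ --- is the identity
\begin{equation*}
\|h\|_2^2=\langle D^r p,\,h\rangle=\langle p,\,(D^T)^r h\rangle=\langle D^{r-\beta}p,\,(D^\beta)^T h\rangle=\langle D^{r-\beta}p,\,g\rangle .
\end{equation*}
H\"older's inequality then cleanly decouples the bounded state variable from the sparse gradient, $\|h\|_2^2\le\|D^{r-\beta}p\|_\infty\,\|g\|_1$. Using $\|D^{r-\beta}p\|_\infty\le\|D\|_{\infty\to\infty}^{\,r-\beta}\|p\|_\infty\le 2^{r-\beta}\delta$, the cone bound above, and $\|g\|_2=\|(D^\beta)^T h\|_2\le\|D\|_{2\to2}^{\,\beta}\|h\|_2\le 2^{\beta}\|h\|_2$, I obtain $\|h\|_2^2\le 2^{r-\beta}\delta\cdot 2\sqrt s\cdot 2^{\beta}\|h\|_2$; dividing by $\|h\|_2$ gives $\|h\|_2\le 2^{r+1}\sqrt s\,\delta$, which is \eqref{eq:ineq1} with $C=2^{r+1}$.

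I expect the only genuinely delicate point to be the pairing identity: one must recognize that testing $\|h\|_2^2$ against the factorization $h=D^r p$ and peeling exactly $\beta$ of the $r$ difference operators back onto $h$ (leaving $r-\beta\ge 0$ of them on the bounded sequence $p$) is what simultaneously exposes the tube bound through $\|p\|_\infty\le\delta$ and the sparsity through $g$. The remaining ingredients are the routine operator-norm estimates $\|D\|_{\infty\to\infty}=\|D\|_{1\to1}=2$ and $\|D\|_{2\to2}\le 2$, so once the identity is in hand the conclusion is immediate.
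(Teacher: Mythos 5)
Your proof is correct and follows essentially the same route as the paper's: both rest on the cone condition $\|g_{S^c}\|_1\le\|g_S\|_1$ from optimality, the tube bound $\|D^{-r}(\hat x-x)\|_\infty\le\delta$ from feasibility, and the same duality pairing $\|\hat x-x\|_2^2=\langle (D^\beta)^T(\hat x-x),\,D^{-\beta}(\hat x-x)\rangle$ estimated by H\"older. The only (harmless) deviation is that you bound $\|g_S\|_1$ by $\sqrt{s}\,\|g\|_2\le\sqrt{s}\,2^\beta\|\hat x-x\|_2$ and divide through, whereas the paper uses $\|g_S\|_1\le s\|g\|_\infty\le 2^{\beta+r}s\delta$; both yield $C\sqrt{s}\,\delta$ with comparable constants.
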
 
\begin{remark} The error bound in Theorem \ref{thm1} is for each column. Putting  all columns together, we have
$$\| \hat X-X\|_F\leq C\sqrt{sN}\delta.$$
\end{remark}
\begin{remark}
For comparison, we specify the quantization error of MSQ and that of the vanilla Sigma Delta decoder. In MSQ, the quantization error for each pixel is $\delta/2$. Since the pixels are quantized independently, the total quantization error of an $N\times N$ image in the Frobenius norm is $N\delta/2$. Similarly, when using $\Sigma\Delta$ quantizers ($Q_{col}$ or $Q_{2D}$) but decoding with the following naive decoder,
\[
\hat X=\textrm{Find } Z \;\;\ \textrm{subject to}\ \|D^{-r}(Z-Q_{col}(X))\|_{\textrm{max}}\leq\delta/2,
\]
the worse-case error is again $O(N\delta)$. This indicates that the TV-norm penalty in the proposed decoder \eqref{eq:cbccase1} is playing a key role in reducing the error to $O(\sqrt{sN}\delta)$.
\end{remark}
Next, we extend Theorem \ref{thm1} to  signals that does not satisfy Assumption \ref{assumption1}.
\begin{theorem}\label{cor1}
 Let $x \in [0,1]^N$ and $q=Q^{\Sigma\Delta,r}_{\mathcal{A}_\delta}(x)$.
Let $\hat x$ be a solution to \eqref{eq:cbccase1}, with $\beta=1\ \text{or}\ 2$, and $\beta\leq r$, then for any integer $1\leq s\leq N$, we have 
\begin{equation}
\|\hat x-x\|_2\leq  C(\sqrt{s}\delta+\sqrt{\sigma_s((D^{\beta})^Tx)\delta}),
\end{equation}
where $\sigma_s(z)$ is the $\ell_1$-tail of $z$, i.e., $\sigma_s(z) = \min_{v \textrm{ is $s$-sparse}} \|v-z\|_1$. $C$ is some absolute constant independent of the signal $x$.
\end{theorem}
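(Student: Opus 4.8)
The plan is to run the standard $\ell_1$/total-variation decoder error analysis, but to carry the best-$s$-term tail $\sigma_s$ through the cone condition instead of assuming exact gradient sparsity, and then to close with a quadratic (rather than linear) inequality. Throughout I write $h=\hat x-x$ for the reconstruction error, $g=(D^\beta)^T h$ for its $\beta$-th order difference, and $w=(D^\beta)^T x$; I let $S$ index the $s$ largest-magnitude entries of $w$, so that $\|w_{S^c}\|_1=\sigma_s(w)$.

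First I would record the two facts the decoder supplies. By the stability of the $r$-th order Sigma Delta scheme, $D^{-r}(x-q)$ is exactly the (bounded) state variable, so $\|D^{-r}(x-q)\|_\infty\le\delta/2$ and $x$ is feasible for \eqref{eq:cbccase1}; since $\hat x$ is feasible as well, the triangle inequality gives $\|D^{-r}h\|_\infty\le\delta$. Optimality of $\hat x$ gives $\|w+g\|_1=\|(D^\beta)^T\hat x\|_1\le\|(D^\beta)^T x\|_1=\|w\|_1$.

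The crux is a summation-by-parts identity converting the feasibility bound into a bound on $\|h\|_2^2$. Setting $v=D^{-r}h$ (so $\|v\|_\infty\le\delta$) and using $r\ge\beta$ to factor $(D^T)^r=(D^T)^{r-\beta}(D^T)^\beta$, I compute $\|h\|_2^2=\langle D^r v,h\rangle=\langle v,(D^T)^r h\rangle=\langle v,(D^T)^{r-\beta}g\rangle$, the last equality using $(D^T)^\beta h=g$. Hölder then gives $\|h\|_2^2\le\|v\|_\infty\,\|(D^T)^{r-\beta}g\|_1\le 2^{r-\beta}\delta\,\|g\|_1$, since the $\ell_1\to\ell_1$ operator norm of $D^T$ is at most $2$ (its columns have absolute sum at most $2$). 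I expect this to be the main obstacle: the point is to notice that the order condition $r\ge\beta$ is precisely what makes the adjoint of the feasibility operator reproduce the same quantity $g$ that the objective penalizes, so that the $\infty$-norm constraint and the $\ell_1$ objective can be paired; everything else is bookkeeping.

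It remains to bound $\|g\|_1$ by $\|h\|_2$ via the cone condition. Splitting the optimality inequality over $S$ and $S^c$ and cancelling yields the robust bound $\|g_{S^c}\|_1\le\|g_S\|_1+2\sigma_s(w)$, whence $\|g\|_1\le 2\|g_S\|_1+2\sigma_s(w)\le 2\sqrt s\,\|g\|_2+2\sigma_s(w)\le 2^{\beta+1}\sqrt s\,\|h\|_2+2\sigma_s(w)$, using $\|g\|_2=\|(D^\beta)^T h\|_2\le 2^\beta\|h\|_2$ (the spectral norm of $D$ is at most $2$). Substituting into the previous display gives a quadratic inequality $\|h\|_2^2\le a\|h\|_2+b$ with $a\lesssim\sqrt s\,\delta$ and $b\lesssim\sigma_s(w)\delta$; solving it and using $\sqrt{a^2+4b}\le a+2\sqrt b$ yields $\|h\|_2\le a+\sqrt b\lesssim\sqrt s\,\delta+\sqrt{\sigma_s(w)\delta}$, which is the claimed bound with a constant depending only on $r,\beta$ and not on $x$. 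Taking $\sigma_s(w)=0$ recovers Theorem \ref{thm1}, a useful consistency check.
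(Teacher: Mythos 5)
Your proof is correct, and its skeleton matches the paper's: feasibility of both $x$ and $\hat x$ gives the $\ell_\infty$ control $\|D^{-r}(\hat x-x)\|_\infty\le\delta$, optimality gives the cone condition $\|g_{S^c}\|_1\le\|g_S\|_1+2\sigma_s(w)$ carrying the tail, and a summation-by-parts pairing of an $\ell_1$ quantity against an $\ell_\infty$ quantity (exploiting $r\ge\beta$ to factor $D^r=D^{r-\beta}D^{\beta}$) converts these into a bound on $\|\hat x-x\|_2^2$. Where you diverge is in the closing step. The paper bounds $\|g_S\|_1\le s\|g\|_\infty\le s\cdot 2^{\beta+r}\delta$, using the pointwise bound $\|(D^\beta)^T(\hat x-x)\|_\infty\le 2^{\beta+r}\delta$ that also comes from feasibility; this makes the final estimate a one-line H\"older computation, $\|h\|_2^2\le\|(D^\beta)^Th\|_1\,\|D^{-\beta}h\|_\infty\le(2^{\beta+r+1}s\delta+2\sigma_s(w))\cdot 2^{r-\beta}\delta$, with no quadratic to solve. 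You instead bound $\|g_S\|_1\le\sqrt{s}\,\|g\|_2\le 2^{\beta}\sqrt{s}\,\|h\|_2$ by Cauchy--Schwarz and close with the quadratic inequality $\|h\|_2^2\le a\|h\|_2+b$. Both are valid and yield the stated bound with a constant depending only on $r$ and $\beta$; the paper's version is shorter because the feasibility constraint happens to give a free $\ell_\infty$ bound on $g$ itself, while your quadratic-inequality template is the standard compressed-sensing closing move and would survive in settings where no such pointwise bound on the penalized quantity is available. Your sanity check that $\sigma_s(w)=0$ recovers Theorem~\ref{thm1} is exactly how the two results relate in the paper as well.
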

\subsubsection{Proof of Theorem \ref{thm1} {and Theorem \ref{cor1}}}
\begin{proof}[Proof of Theorem \ref{thm1}]
Denote $h=(D^\beta)^T(\hat x-x)$. Assume the support set of $(D^\beta)^T x$ is $S$  with cardinality $s$ and the complement of $S$ is $S^C$. Since $\hat x$ is a solution to \eqref{eq:cbccase1}, we have
$$\|(D^\beta)^T x\|_1\geq \|(D^\beta)^T x+h\|_1\geq \|(D^\beta)^T x\|_1-\| h_S\|_1+\| h_{S^C}\|_1,$$
which gives $\| h_S\|_1\geq \| h_{S^C}\|_1$. We can then bound the $\ell_1$-norm of $h$ as
\begin{equation}\label{eq:h}\| h\|_1=\| h_S\|_1+\| h_{S^C}\|_1\leq 2\| h_S\|_1\leq 2^{\beta+r+1}s\delta,
\end{equation}
where the last inequality is due to
\begin{align*}
\| h\|_\infty&=\|(D^\beta)^T(\hat x-x)\|_\infty\\&=\| (D^\beta)^TD^{r}D^{-r}(\hat x-x)\|_\infty \leq 2^\beta\cdot 2^r\cdot \|D^{-r}(\hat x-x)\|_\infty\leq 2^{\beta+r}\delta.
\end{align*}
\eqref{eq:h} is equivalent to 
\begin{equation}\label{eq:thm1_infnorm}
\|(D^\beta)^T(\hat x-x)\|_1\leq 2^{\beta+r+1}s\delta.
\end{equation}
In addition, we also have
\begin{equation}\label{eq:thm1_infnorm2}
\|D^{-\beta}(\hat x-x)\|_\infty=\| D^{r-\beta}D^{-r}(\hat x-x)\|_\infty \leq 2^{r-\beta}\cdot \|D^{-r}(\hat x-x)\|_\infty \leq 2^{r-\beta}\delta.
\end{equation}
\eqref{eq:thm1_infnorm}, \eqref{eq:thm1_infnorm2} above are bounds in $\ell_1$-norm and $\ell_\infty$-norm, respectively. Due to the duality, we can bound the reconstruction error $\| \hat x-x\|_2$ as
\begin{align*}
    \langle\hat x-x,\hat x-x\rangle=\langle (D^\beta)^T(\hat x-x),D^{-\beta}(\hat x-x)\rangle\leq 2^{2r+1}s\delta^2.
\end{align*}
This is equivalent to saying that $\|\hat x-x\|_2\leq C\sqrt{s}\delta$.
\end{proof}
\begin{proof}[Proof of Theorem \ref{cor1}]
The proof follows a similar idea to Theorem \ref{thm1}. Assume that $(D^{\beta})^Ty$ is the best $s$-term approximation to $(D^{\beta})^Tx$, and assume the support of $(D^{\beta})^Ty$ is $S$. Since $\hat x$ is a solution to \eqref{eq:cbccase1}, we have
\begin{align*}
&\quad\ \|(D^{\beta})^Tx\|_1\geq\|(D^{\beta})^T\hat x\|_1
=\|(D^{\beta})^Ty+(D^{\beta})^T(x-y)+(D^{\beta})^T(\hat x-x)\|_1\\
&\geq \|(D^{\beta})^Ty\|_1-\left\|\left[(D^{\beta})^T(x-y)+(D^{\beta})^T(\hat x-x)\right]_S\right\|_1\\&\quad+\left\|\left[(D^{\beta})^T(x-y)+(D^{\beta})^T(\hat x-x)\right]_{S^C}\right\|_1\\
&\geq \|(D^{\beta})^Ty\|_1-\|[(D^{\beta})^T(x-y)]_S\|_1-\|[(D^{\beta})^T(\hat x-x)]_S\|_1+\|[(D^{\beta})^T(\hat x-x)]_{S^C}\|_1\\&\quad-\|[(D^{\beta})^T(x-y)]_{S^C}\|_1\\
&=\|(D^{\beta})^Ty\|_1-\|(D^{\beta})^T(x-y)\|_1+\|[(D^{\beta})^T(\hat x-x)]_{S^C}\|_1-\|[(D^{\beta})^T(\hat x-x)]_{S}\|_1.
\end{align*}
Here we only used the triangle inequality. Applying $\|(D^{\beta})^Tx\|_1\leq\|(D^{\beta})^Ty\|_1+\|(D^{\beta})^T(x-y)\|_1$ to the left hand side of the above inequality and after some simplification, we obtain
\[
\|[(D^{\beta})^T(\hat x-x)]_{S^C}\|_1\leq\|[(D^{\beta})^T(\hat x-x)]_{S}\|_1+2\|(D^{\beta})^T(x-y)\|_1.
\]
This further implies
\begin{equation}\label{eq:cor1norm}
\|(D^{\beta})^T(\hat x-x)\|_1\leq 2\|[(D^{\beta})^T(\hat x-x)]_S\|_1+2\|(D^{\beta})^T(x-y)\|_1.
\end{equation}
Following the same reasoning as in the proof of Theorem \ref{thm1}, we have \begin{equation*}
 2\|[(D^{\beta})^T(\hat x-x)]_S\|_1\leq 2^{\beta+r+1}s\delta,
\end{equation*}
which when plugged into \eqref{eq:cor1norm} gives us
\begin{equation}\label{eq:cor1norm2}
\|(D^{\beta})^T(\hat x-x)\|_1\leq 2^{\beta+r+1}s\delta+2\sigma_s((D^{\beta})^Tx).
\end{equation}
By \eqref{eq:thm1_infnorm2} and \eqref{eq:cor1norm2}, we have
\begin{align*}
    \langle\hat x-x,\hat x-x\rangle=\langle (D^\beta)^T(\hat x-x),D^{-\beta}(\hat x-x)\rangle\leq 2^{2r+1}s\delta^2+ 2^{r-\beta+1}\sigma_s((D^{\beta})^Tx)\delta.
\end{align*}
This is saying that $\|\hat x-x\|_2\leq C(\sqrt{s}\delta+\sqrt{\sigma_s((D^{\beta})^Tx)\delta})$.
\end{proof}
\subsection{Class 2: Decoding high-dimensional Sigma Delta quantized images}\label{sec:2dimquant}
In this section, we consider Class 2, where the image $X$ satisfying Assumption \ref{assumption2} is associated with the encoder $Q_{2D}$. For simplicity, we assume the patch number is 1 (there is only one patch identical to the original image) and $\beta=1$. Results for more than one patches and/or $\beta=2$ follow the same argument. The theorem below establishes the error bound for 2D reconstruction of $X$ from its quantization $Q_{2D}(X)$ using \eqref{eq:case3}.
\begin{theorem}\label{thm:2Drecon}
Suppose the original image $X \in [0,1]^{N,N}$ satisfies Assumption \ref{assumption2} with $\beta=1$ and sparsity $s$, and let $\hat X$ be a solution to \eqref{eq:case3} with $\beta = r=1$. Then {there exists some absolute constant $C$, such that}
\begin{equation}\label{eq:errbnd2D}
\| \hat X-X\|_F\leq C\sqrt{s}\delta.
\end{equation}
\end{theorem}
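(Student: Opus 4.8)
The plan is to follow the template of Theorem \ref{thm1}, but carrying both difference directions at once and replacing the one-dimensional identity $\|e\|_2^2=\langle D^Te,\,D^{-1}e\rangle$ by a two-dimensional analogue. First I would record the structural facts. By the $\Sigma\Delta$ relation (A2'), $X-Q_{2D}(X)=DuD^T$ with $\|u\|_{\max}\le\delta/2$, so $X$ is feasible for \eqref{eq:case3}; writing $E:=\hat X-X$ and $V:=D^{-1}E(D^{-1})^T$, feasibility of both $X$ and $\hat X$ gives $\|V\|_{\max}\le\delta$ together with the representation $E=DVD^T$. Since $D$ and $D^T$ have row- and column-sums bounded by $2$, this immediately yields the entrywise bounds $\|D^TE\|_{\max}\le C_0\delta$ and $\|ED\|_{\max}\le C_0\delta$ for an absolute constant $C_0$.

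Next I would run the cone/optimality argument in two dimensions. Set $S_1=\supp(D^TX)$ and $S_2=\supp(XD)$, so that $|S_1|+|S_2|=\|D^TX\|_0+\|XD\|_0\le s$ by Assumption \ref{assumption2}. Writing $H:=D^TE$ and $G:=ED$, optimality of $\hat X$ gives $\|D^TX\|_1+\|XD\|_1\ge\|D^T\hat X\|_1+\|\hat XD\|_1$; expanding $D^T\hat X=D^TX+H$, $\hat XD=XD+G$ and applying the triangle inequality over $S_1,S_1^C$ and $S_2,S_2^C$ exactly as in \eqref{eq:h} yields the cone inequality $\|H_{S_1^C}\|_1+\|G_{S_2^C}\|_1\le\|H_{S_1}\|_1+\|G_{S_2}\|_1$. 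Combining with the max-norm bounds of the first step,
$$\|H\|_1+\|G\|_1\le 2\big(\|H_{S_1}\|_1+\|G_{S_2}\|_1\big)\le 2C_0\big(|S_1|+|S_2|\big)\delta\le 2C_0 s\delta,$$
and in particular $\|ED\|_1\le 2C_0 s\delta$.

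The crux is a two-dimensional duality identity playing the role of $\|e\|_2^2=\langle D^Te,D^{-1}e\rangle$ from Theorem \ref{thm1}. Using $E=DVD^T$, cyclicity of the trace, and $D^TE^T=(ED)^T$, I would establish
$$\|E\|_F^2=\mathrm{tr}(E^T DVD^T)=\mathrm{tr}\big((ED)^T DV\big)=\langle ED,\,DV\rangle_F.$$
Hölder's inequality for the Frobenius pairing then gives $\|E\|_F^2\le\|ED\|_1\,\|DV\|_{\max}$, and since $\|DV\|_{\max}\le 2\|V\|_{\max}\le 2\delta$, we conclude $\|E\|_F^2\le 2C_0 s\delta\cdot 2\delta$, i.e. $\|\hat X-X\|_F\le C\sqrt{s}\delta$. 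I expect the main obstacle to be locating this identity: one must split $\|E\|_F^2$ so that one factor is a genuine difference ($ED$, controlled in $\ell_1$ by the cone step) while the other factor ($DV$) inherits only the max-norm feasibility bound. The asymmetric grouping $\langle ED,\,DV\rangle_F$ is precisely what makes both controls usable simultaneously, whereas the naive symmetric expansion $\|E\|_F^2=\mathrm{tr}(D^TD\,V^T D^TD\,V)$ leaves an uncontrolled factor of $D^TD$ and does not close the estimate.
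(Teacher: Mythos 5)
Your proposal is correct and follows essentially the same route as the paper: feasibility of $X$ and $\hat X$ gives the max-norm bound on $D^{-1}(\hat X-X)(D^{-1})^T$, the cone/optimality argument over both supports yields $\|D^T E\|_1+\|ED\|_1\lesssim s\delta$, and a duality pairing of an $\ell_1$-controlled difference against a max-norm-controlled antiderivative closes the estimate. The only (cosmetic) difference is that you pair $\langle ED,\,ED^{-T}\rangle_F$ while the paper uses the transposed pairing $\langle D^T E,\,D^{-1}E\rangle_F$; both identities are verified the same way and give the same bound.
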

Similar to Theorem \ref{cor1}, we can extend Theorem \ref{thm:2Drecon} to images that does not satisfy Assumption \ref{assumption2}. This extension follows the same idea as the one we used for Theorem \ref{cor1} so its proof is omitted.
\begin{theorem}
Let $X \in [0,1]^{N,N}$ and let $\hat X$ be a solution to \eqref{eq:case3} with $\beta= r=1$. Then for any integer $1\leq s\leq N^2$, we have
\begin{equation}
\| \hat X-X\|_F\leq C\left(\sqrt{s}\sqrt\delta+\sqrt{\sigma_s(D^TX)
}+\sqrt{\sigma_s(XD})\right)\sqrt \delta,
\end{equation}
where for any matrix $Z$, $\sigma_s(Z)$ denotes the $\ell_1$-error between the vectorized $Z$ and its best $s$-term approximation.
\end{theorem}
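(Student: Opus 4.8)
The plan is to replicate, in the two-dimensional setting, the argument used for Theorem~\ref{cor1}, replacing the exact sparsity of $D^TX$ and $XD$ by their best $s$-term approximations and paying the corresponding $\ell_1$-tails $\sigma_s(D^TX)$ and $\sigma_s(XD)$. Throughout I would write $H=\hat X-X$. The proof rests on three ingredients: a feasibility bound coming from the constraint in \eqref{eq:case3}, a cone-type inequality coming from the anisotropic total-variation objective, and a two-dimensional analogue of the duality step that turns these into a Frobenius-norm bound.

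First I would record the feasibility bound. Since $X-Q_{2D}(X)=DuD^T$ with $\|u\|_{\textrm{max}}\le\delta/2$, one has $D^{-1}(X-Q_{2D}(X))D^{-T}=u$ (writing $D^{-T}:=(D^{-1})^T$), so $X$ is itself feasible for \eqref{eq:case3}; as $\hat X$ is also feasible, the triangle inequality gives $\|D^{-1}HD^{-T}\|_{\textrm{max}}\le\delta$. Writing $H=DGD^T$ with $G:=D^{-1}HD^{-T}$ and $\|G\|_{\textrm{max}}\le\delta$, I would then read off the entrywise estimates $\|D^TH\|_{\textrm{max}}=\|(D^TD)GD^T\|_{\textrm{max}}\lesssim\delta$, $\|HD\|_{\textrm{max}}\lesssim\delta$, and $\|D^{-1}H\|_{\textrm{max}}=\|GD^T\|_{\textrm{max}}\le2\delta$, using only that $D$, $D^T$, and $D^TD$ have uniformly bounded absolute row sums.

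Second, I would exploit optimality. Let $S_1,S_2$ index the $s$ largest-magnitude entries of $D^TX$ and $XD$, so that $\|(D^TX)_{S_1^c}\|_1=\sigma_s(D^TX)$ and $\|(XD)_{S_2^c}\|_1=\sigma_s(XD)$. Since $X$ is feasible and $\hat X$ minimizes $\|D^TZ\|_1+\|ZD\|_1$, splitting over $S_1,S_2$ and applying the triangle inequality exactly as in the passage leading to \eqref{eq:cor1norm} yields the cone bound
\[
\|(D^TH)_{S_1^c}\|_1+\|(HD)_{S_2^c}\|_1\le \|(D^TH)_{S_1}\|_1+\|(HD)_{S_2}\|_1+2\sigma_s(D^TX)+2\sigma_s(XD).
\]
Bounding $\|(D^TH)_{S_1}\|_1\le s\|D^TH\|_{\textrm{max}}$ and the row term analogously with the $\max$-norm estimates from Step~1 then gives
\[
\|D^TH\|_1+\|HD\|_1\lesssim s\delta+\sigma_s(D^TX)+\sigma_s(XD).
\]

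Finally comes the two-dimensional duality step analogous to the one in Theorem~\ref{thm1}: a direct trace computation gives the identity $\|H\|_F^2=\langle D^TH,\,D^{-1}H\rangle$, whence $\|H\|_F^2\le\|D^TH\|_1\,\|D^{-1}H\|_{\textrm{max}}\le2\delta\|D^TH\|_1$. Combining with the $\ell_1$ bound above and taking square roots produces $\|H\|_F\lesssim\bigl(\sqrt s\,\sqrt\delta+\sqrt{\sigma_s(D^TX)}+\sqrt{\sigma_s(XD)}\bigr)\sqrt\delta$, which is the claim. I expect the only genuinely new point, and thus the main obstacle, to be this duality identity together with its entrywise bookkeeping: in one dimension $\|\hat x-x\|_2^2=\langle(D^\beta)^T(\hat x-x),D^{-\beta}(\hat x-x)\rangle$ is immediate, whereas here one must verify that the single-sided gradient $D^TH$ pairs against $D^{-1}H$ to reproduce $\|H\|_F^2$, and that the two-sided feasibility constraint $\|D^{-1}HD^{-T}\|_{\textrm{max}}\le\delta$ still controls the one-sided quantity $\|D^{-1}H\|_{\textrm{max}}$. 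Everything else is the routine sparse-recovery bookkeeping already carried out for Theorem~\ref{cor1}.
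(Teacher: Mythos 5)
Your proposal is correct and follows essentially the route the paper intends: the paper omits this proof, stating only that it ``follows the same idea as the one we used for Theorem~\ref{cor1}'', and your argument is exactly that combination --- the feasibility and entrywise $\max$-norm estimates ($\|D^TH\|_{\textrm{max}}\lesssim\delta$, $\|D^{-1}H\|_{\textrm{max}}\le 2\delta$) and the trace identity $\|H\|_F^2=\langle D^TH,D^{-1}H\rangle$ are the same ones used in the proof of Theorem~\ref{thm:2Drecon}, while the tail-splitting cone bound over $S_1,S_2$ mirrors the derivation of \eqref{eq:cor1norm}. All steps check out, including the constants ($8\delta$ and $2\delta$) and the final Cauchy--Schwarz/duality pairing.
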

\begin{proof}[Proof of Theorem \ref{thm:2Drecon}]
Denote $H_1=D^T(\hat X-X)$, $H_2 = (\hat X-X)D$, and denote by $S_A$ and $S_B$ the support sets of $D^T X$ and $XD$, respectively. The corresponding complement sets are $S_A^C$ and $S_B^C$. By Assumption \ref{assumption2}, we have $|S_A|+|S_B|\leq s$. Also notice that
\begin{align*}
&\| D^T X\|_1+\| XD\|_1\\&\quad\geq \| D^T\hat X\|_1+\|\hat X D\|_1\\ 
&\quad=\| D^T X+H_1\|_1+\|X D +H_2\|_1\\
&\quad\geq \| D^T X\|_1-\| (H_1)_{S_A}\|_1+\| (H_1)_{S_A^C}\|_1+\|X D\|_1-\|  ( H_2)_{S_B}\|_1+\| ( H_2)_{S_B^C}\|_1,
\end{align*}
which gives $\| (H_1)_{S_A^C}\|_1+\| (H_2)_{S_B^C}\|_1\leq \| (H_1)_{S_A}\|_1+\| (H_2)_{S_B}\|_1$. Hence
$$\| H_1\|_1+\| H_2\|_1\leq 2(\| (H_1)_{S_A}\|_1+\| (H_2)_{S_B}\|_1)\leq 16s\delta.$$
Here the last inequality is due to 
$|S_A|+|S_B|\leq s$, and 
\begin{align*}\| H_1\|_{\textrm{max}}&=\|D^TD(D^{-1}(\hat X-X)D^{-T})D^T\|_{\textrm{max}}\leq 8\delta,\\ 
\|H_2\|_{\textrm{max}}&=\|D(D^{-1}(\hat X-X)D^{-T})D^TD\|_{\textrm{max}}\leq 8\delta.
\end{align*}
Then we have the following inequalities:
\[
\| D^T(\hat X-X)\|_1\leq 16s\delta,\;\;\| D^{-1}(\hat X-X)\|_{\textrm{max}}\leq 2\delta.
\]
Similar to the proof of Theorem \ref{thm1}, the inequalities above lead to 
$$\|\hat X-X\|_F={\langle D^T(\hat X-X),D^{-1}(\hat X-X)\rangle}^{\frac{1}{2}}\leq C\sqrt{s}\delta,$$
which is \eqref{eq:errbnd2D}.
\end{proof}
\subsection{Class 3: Reconstruction of Images Meeting the Minimum Separation Condition}\label{sec:cbcsepa}
In this section, we consider Class 3, where the image $X$ satisfies Assumption \ref{assumption3}, which is stronger than Assumption \ref{assumption1} in that jump discontinuities are not required to be separated. We hope that this extra assumption, when satisfied by an image, can lead to a reduction of the reconstruction error. Same as in Class 1, we use $Q_{col}$ (column-by-column quantization) for encoding. The reason why we do not use $Q_{2D}$ is that the 2D minimum separation condition is not realistic for natural images.

For $\underline{x}\in\mathbb{R}^N$, its Sigma Delta quantization is $\underline{q} = Q_{\mathcal{A}_\delta}^{\Sigma\Delta,r}(\underline{x})$. Again, we use $x$ and $q$ to replace $\underline{x}$ and $\underline{q}$ for simplicity.
Then \eqref{eq:case2} reduces to
\begin{align}\label{eq:case2full}
\begin{split}
&\hat x=\argmin_{z} \| D_1^\beta z\|_1\quad  \\
&\textrm{subject to}\left\{
\begin{aligned}
&\| D^{-r}(z-q)\|_\infty\leq\delta/2, \\
&\|(D^{-r}(z-q))_{N-r+1:N}\|_\infty\leq\left(\frac{1}{2N}\right)^{r}\delta. 
\end{aligned}
\right.
\end{split}
\end{align}
There are two differences between this decoder and the decoder we used for Class 1: 1) Inside the $\ell_1$-norm, this decoder uses $D_1$ (the circulant difference matrix) instead of $D$. This is to ensure that the separation condition is well-defined at the boundary; and 2) In order for the separation assumption to improve the error bound of Class 1, we need to use a few more bits to encode the boundary pixels as explained in Sect.  \ref{sec:decoders}. The total number of boundary bits is of order $O(s\log N)$, which is negligible compared to the $O(N)$ bits needed for the interior pixels. 
\begin{theorem}\label{thm2}
For the high order $\Sigma\Delta$ quantization, i.e., $r\geq 2$, assume $x\in [0,1]^N$ and $D_1^\beta x$ satisfies the $\Lambda_M$-minimization separation condition {with $M\geq 128$}. Let $\hat x$ be a solution to \eqref{eq:case2full}. Then for an arbitrary resolution $L\leq N/2$, the following error bound holds:
\begin{equation}\label{eq:ineq2}
    \| P_L (\hat x-x)\|_\infty\leq C\frac{L^2}{N^r}{M}^{r+\beta-2}\delta.
\end{equation}
\end{theorem}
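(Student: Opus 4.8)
Since both $Q_{col}$ and the decoder \eqref{eq:case2full} decouple over columns, I fix a single column, write $v=\hat x-x$, $w=D_1^\beta x$, and $g=D_1^\beta v=D_1^\beta\hat x-w$. By Assumption \ref{assumption3}, $w$ is supported on a set $S$ with $|S|\le s$ obeying the $\Lambda_M$-separation condition, and optimality of $\hat x$ gives $\|D_1^\beta\hat x\|_1\le\|w\|_1$. Because $D_1$ is circulant, $\mathcal{F}_k(D_1^\beta z)=d_k^{\beta}\mathcal{F}_k z$ with $d_k:=1-\omega^{k}$ and $|d_k|\asymp |k|/N$ for $|k|\le N/2$; in particular $\mathcal{F}_0 g=0$, and for $k\ne 0$ we have $\mathcal{F}_k v=d_k^{-\beta}\mathcal{F}_k g$. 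Writing out the Dirichlet projector gives, for each index $n_0$, $(P_L v)_{n_0}=\tfrac1N\mathcal{F}_0 v+\langle\psi_{n_0},g\rangle$, where $\psi_{n_0}$ is the $\beta$-fold ``integrated'' Dirichlet kernel $\psi_{n_0}(n)=\tfrac1N\sum_{1\le|k|\le L}d_k^{-\beta}\omega^{-k(n_0-n)}$, a trigonometric polynomial of degree $L$. The constant mode needs separate care: $\mathcal{F}_0 v=\sum_n v_n$ is a telescoping boundary sum of $u':=D^{-r}v$ (the non-circulant $D$ satisfies $\sum_n (Dw)_n=w_N$), so it depends only on the last $r$ entries of $u'$, which the sharpened boundary constraint forces to be $O(2^r(2N)^{-r}\delta)$; hence $\tfrac1N|\mathcal{F}_0 v|$ is negligible against the target, and the whole problem reduces to bounding $\langle\psi_{n_0},g\rangle$, i.e. the low-frequency content of $g$.

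\textbf{Step 1: a frequency-weighted feasibility bound.} Both $x$ and $\hat x$ are feasible for \eqref{eq:case2full}, so $u':=D^{-r}v$ obeys $\|u'\|_\infty\le\delta$ and $\|u'_{N-r+1:N}\|_\infty\le 2(2N)^{-r}\delta$. I then write $v=D^r u'$ and compare $D^r$ with the circulant $D_1^r$: they differ only in the first $r$ coordinates and only through the last $r$ entries of $u'$, which are tiny by the boundary constraint. This yields $\mathcal{F}_k v=d_k^{\,r}\mathcal{F}_k u'+(\text{negligible})$ and therefore the frequency-weighted estimate $|\mathcal{F}_k g|=|d_k|^{\beta}|\mathcal{F}_k v|\lesssim |d_k|^{\,r+\beta}\|u'\|_1\lesssim (|k|/N)^{r+\beta}N\delta$. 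This is the only place the order $r\ge 2$ and the boundary bits enter; note the ``noise level'' is \emph{frequency dependent}, extremely small at low frequencies. Used alone in $\tfrac1N\sum_{|k|\le L}|d_k|^{-\beta}|\mathcal{F}_k g|$ it produces only the trivial $O(L^{r+1}N^{-r}\delta)$ bound, so the separation hypothesis is what must improve the $|k|$-dependence from $|k|^{r+\beta}$ down to $|k|^2$.

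\textbf{Step 2: super-resolution moment estimates for $g$.} From $\|D_1^\beta\hat x\|_1\le\|w\|_1$ I first get the cone inequality $\|g_{S^C}\|_1\le\|g_S\|_1$, exactly as in the proof of Theorem \ref{thm1}. I then invoke the dual-certificate construction for $\Lambda_M$-separated supports from \cite{li2017elementary,candes2013super,candes2014towards}: there exist interpolating polynomials in $C(\mathbb{T};\Lambda_M)$ matching the sign of $w$ (together with shifted and derivative versions) on $S/N$, with the quadratic lower bound $1-|Q(t)|\gtrsim (M/N)^2 d(\cdot,S)^2$ near $S$. Testing $g$ against these certificates and controlling the resulting low-degree inner products by the Step 1 bound on $\mathcal{F}_k g$ for $|k|\le M$ produces, with $A:=M^{\,r+\beta}N^{\,1-r-\beta}\delta$, the cluster-moment and tail estimates $\sum_{s\in S}\big|\sum_{n\in\mathrm{cl}(s)}g_n\big|\lesssim A$, $\tfrac{M}{N}\sum_{s\in S}\big|\sum_{n\in\mathrm{cl}(s)}(n-s)g_n\big|\lesssim A$, $\tfrac{M^2}{N^2}\sum_n d(n,S)^2|g_n|\lesssim A$, and $\|g_{\mathrm{far}}\|_1\lesssim A$, where $\mathrm{cl}(s)$ collects indices within $\asymp N/M$ of $s$ and $g_{\mathrm{far}}$ is the part of $g$ at distance $\gtrsim N/M$ from $S$. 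Verifying that the dominant contribution comes from $|k|\asymp M$, so that $A\asymp M^{\,r+\beta}N^{\,1-r-\beta}\delta$, is the technical heart of the argument, since the classical noisy super-resolution bounds assume a \emph{uniform} measurement error rather than the weighted one from Step 1.

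\textbf{Step 3: converting moment control into the resolution bound.} For each $n_0$ I expand the smooth kernel $\psi_{n_0}$ to second order inside each cluster about its center $s$: the zeroth- and first-order terms pair with the cluster zeroth/first moments, and the quadratic remainder with the weighted second moment, while the far part is handled by $\|g_{\mathrm{far}}\|_1$. Using the Bernstein-type estimates $\|\psi_{n_0}\|_\infty\lesssim N^{\beta-1}$ and the second-difference bound $O((L/N)^2\|\psi_{n_0}\|_\infty)$ (a degree-$L$ trigonometric polynomial varies on scale $N/L$), the second-moment term dominates once $L\ge M$, giving $|\langle\psi_{n_0},g\rangle|\lesssim N^{\beta-1}(L/M)^2 A$. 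Substituting $A\asymp M^{\,r+\beta}N^{\,1-r-\beta}\delta$ collapses this to $C\,L^{2}M^{\,r+\beta-2}N^{-r}\delta$, uniformly in $n_0$, which is \eqref{eq:ineq2}. The main obstacles, in order of difficulty, are (i) establishing the moment/tail estimates of Step 2 under the frequency-weighted noise of Step 1, and (ii) the bookkeeping that converts $D$ into $D_1$ at the boundary and controls both $\mathcal{F}_0 v$ and the circulant correction through the sharpened boundary alphabet; the $L^2$ factor is precisely the quadratic signature of super-resolution, coming from the quadratic vanishing of the certificate paired with the second finite difference of the integrated Dirichlet kernel.
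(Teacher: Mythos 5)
Your proposal follows essentially the same route as the paper's proof: your Step 1 is the paper's Lemma \ref{lemma:lowfreq} (the $(|k|/N)^{r+\beta}$ noise-shaping bound coming from feasibility plus the sharpened boundary alphabet, with the constant mode $\mathcal{F}_0(\hat x-x)=(D^{-1}(\hat x-x))_N$ handled identically), your Step 2 is Lemma \ref{lemma2.3} together with the interpolation step via Proposition 2.4 of \cite{li2017elementary}, and your Step 3 is the paper's construction of the $\beta$-fold integrated Dirichlet kernel $\widetilde K_L$ with $D_1^\beta \widetilde K_L = K_L$, $\|\widetilde K_L\|_{L^\infty}\lesssim N^{\beta-1}$, and the Bernstein bounds $\|\widetilde K_L'\|_{L^\infty}\lesssim L\|\widetilde K_L\|_{L^\infty}$, $\|\widetilde K_L''\|_{L^\infty}\lesssim L^2\|\widetilde K_L\|_{L^\infty}$, so the reorganization into $x_0$-independent cluster moments is cosmetic. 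The one place to be careful is the aggregation you flag in Step 2: to reach $A\asymp M^{r+\beta}N^{1-r-\beta}\delta$ without an extra $\sqrt{M}$, you must pair each certificate with $P_M g$ in $\ell_2$ (Parseval together with the operator-norm bound $\left\|\frac{1}{N}F_M^*\Lambda^{r+\beta}F_M\right\|_2\lesssim (M/N)^{r+\beta}$, as the paper does), rather than summing the per-frequency bounds $|\mathcal{F}_k g|\lesssim(|k|/N)^{r+\beta}N\delta$ by the triangle inequality, which loses a factor of $\sqrt{M}$.
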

Here $P_L$ is the projection onto the low-frequency range $\{-L,\cdots,L\}$, i.e., $P_L=\frac{1}{N}F_L^*F_L$ with $F_L$ {containing the rows of the DFT matrix $F$ with frequencies in $\{-L,\cdots,L\}$.}
\begin{remark}
After applying the decoder \eqref{eq:case2full} to each column, we put the reconstructed columns together to obtain the reconstructed image $\hat X$, the overall error bound in max-norm (entry-wise maximum magnitude) is then
$$\| P_L (\hat X-X)\|_{\textrm{max}}\leq C\frac{L^2}{N^r}{M}^{r+\beta-2}\delta.$$
Substituting $L$ with $N/2$, we obtain
$$\|\hat X-X\|_F\leq C\frac{M^{r+\beta-2}}{N^{r-3}}\delta.$$
\end{remark}
\begin{remark} One should only use decoder \eqref{eq:case2full} for columns satisfying the minimum separation condition. For these columns, Theorem \ref{thm2} says that the worst-case $\ell_\infty$-norm error for an arbitrary resolution approaches $0$ as the number of pixels $N\rightarrow \infty$.
\end{remark}
\subsubsection{Proof of Theorem \ref{thm2}}
In order to prove Theorem \ref{thm2}, we perform the super-resolution analysis \cite{candes2013super,candes2014towards,li2017elementary} under the Sigma Delta framework. We first need the following lemma.
\begin{lemma}\label{lemma:lowfreq}
For any feasible $\widetilde x\in\mathbb{R}^N$  satisfying the constraints in \eqref{eq:case2full}, the following inequality holds:
$$\|P_M D_1^\beta (\widetilde x-x)\|_2\lesssim \left(\frac{M}{ N}\right)^{r+\beta}\sqrt{N}\delta.$$
\end{lemma}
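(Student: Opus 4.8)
The plan is to exploit the fact that the circulant operator $D_1$ is diagonalized by the DFT, so that $P_M$ and every power of $D_1$ act as Fourier multipliers with known decay at low frequencies, and then to transfer the whole estimate onto the quantity $w:=D^{-r}(\widetilde x-x)$, which is controlled directly by the feasibility constraints of \eqref{eq:case2full}. The only nuisance is that the constraints are written in terms of the non-circulant $D$, whereas the objective uses the circulant $D_1$; reconciling the two is the heart of the argument.

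First I would record that $x$ itself is feasible for \eqref{eq:case2full}. Writing $q=Q_{\mathcal A_\delta}^{\Sigma\Delta,r}(x)$, stability gives $\|D^{-r}(x-q)\|_\infty\le\delta/2$, and the finer boundary alphabet $\widetilde{\mathcal A}_{\widetilde\delta}$ (with $\widetilde\delta/2=(1/2N)^r\delta$) forces the last $r$ entries of $D^{-r}(x-q)$ to be at most $(1/2N)^r\delta$. Since $\widetilde x$ is feasible by hypothesis, the error $e:=\widetilde x-x$ and the vector $w:=D^{-r}e$ then satisfy $\|w\|_\infty\le\delta$ (hence $\|w\|_2\le\sqrt N\delta$) together with the much tighter boundary bound $|w_i|\le 2(1/2N)^r\delta$ for $i=N-r+1,\dots,N$.

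The main step is the decomposition $D_1^\beta e=D_1^{\beta+r}w-D_1^\beta\Delta_r$, where $\Delta_r:=D_1^r w-D^r w$ measures the mismatch between the circulant and non-circulant differences. Writing $D_1=D+E$ with $E=-\mathbf e_1\mathbf e_N^\top$ a single-entry correction ($\mathbf e_1,\mathbf e_N$ the standard basis vectors), the recursion $\Delta_{k+1}=D_1\Delta_k-(D^kw)_N\mathbf e_1$ with $\Delta_0=0$ unrolls to $\Delta_r=-\sum_{k=0}^{r-1}(D^kw)_N\,D_1^{r-1-k}\mathbf e_1$. For the leading term, Parseval together with the multiplier identity $\mathcal F_k(D_1 v)=(1-\omega^k)\mathcal F_k(v)$ and $|1-\omega^k|=2|\sin(\pi k/N)|\le 2\pi M/N$ on $|k|\le M$ gives $\|P_M D_1^{\beta+r}w\|_2\le(2\pi M/N)^{\beta+r}\|w\|_2\lesssim (M/N)^{\beta+r}\sqrt N\delta$, which is exactly the claimed bound.

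It remains to show the boundary correction is negligible, and this is where the extra bits pay off and where the main obstacle lies. Each coefficient $(D^kw)_N$ with $k\le r-1$ is a signed combination of the last $r$ entries of $w$, so the tightened constraint yields $|(D^kw)_N|\lesssim N^{-r}\delta$; meanwhile $\mathcal F_k(D_1^p\mathbf e_1)=(1-\omega^k)^p$ gives $\|P_M D_1^p\mathbf e_1\|_2\lesssim\sqrt{M/N}\,(M/N)^p$. Summing over $k$ (the largest term being $k=r-1$, i.e. $p=\beta$) produces a bound of order $N^{-r}\delta\cdot\sqrt{M/N}\,(M/N)^\beta$, which is smaller than the leading term by a factor of order $M^{r-1/2}N\gg1$ and is therefore absorbed into $\lesssim(M/N)^{\beta+r}\sqrt N\delta$. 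The delicate point is precisely this suppression: without the finer boundary alphabet the $(D^kw)_N$ would only be $O(\delta)$ rather than $O(N^{-r}\delta)$, the boundary correction $\Delta_r$ would dominate, and the estimate would fail; so the hard part is tracking how the $O(r\log N)$ boundary bits make $\Delta_r$ low-frequency content vanish relative to the main term.
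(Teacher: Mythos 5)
Your argument is correct and is essentially the paper's own proof: your identity $D_1^\beta e = D_1^{\beta+r}w + \sum_{k=0}^{r-1}(D^k w)_N\, D_1^{\beta+r-1-k}\mathbf{e}_1$ is, upon taking Fourier coefficients, exactly the relation $\mathcal{F}_k D_1^\beta z=(1-\omega^k)^{r+\beta}\mathcal{F}_k D^{-r}z+\sum_{\ell=1}^r(1-\omega^k)^{\beta+\ell-1}(D^{-\ell}z)_N$ that the paper derives, and both proofs then bound the leading term via $\|D^{-r}(\widetilde x-x)\|_2\le\sqrt{N}\delta$ and absorb the boundary terms using the refined boundary alphabet. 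The only difference is how the identity is obtained (your perturbation $D_1=D-\mathbf{e}_1\mathbf{e}_N^\top$ with the recursion for $\Delta_k$, versus the paper's frequency-by-frequency summation by parts on $\mathcal{F}_k D^{-\ell}z$), which is cosmetic.
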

\begin{remark}
Substituting the $\widetilde x$ in Lemma \ref{lemma:lowfreq} by $\hat x$ (the minimizer), we can see that the low-frequency error $\|P_M D_1^\beta(\hat x-x)\|$ decreases as the order $r$ of the $\Sigma\Delta$ quantization increases.
\end{remark}
\begin{proof}
Denote $z=\widetilde x-x$. Recall that the discrete Fourier coefficient of $z$ at frequency $k$ is $\mathcal{F}_k z=\sum_{n=1}^{N} z_n e^{-i2\pi{\frac{k(n-1)}{N}}}$, where $\mathcal{F}_k$ is the row vector with $e^{-i2\pi{\frac{k(n-1)}{N}}}$ as its $n^{\textrm{th}}$ entry. Also recall that the projection operator $P_M$ in the target inequality is equivalent to the sum of the outer products of $\mathcal{F}_k,\ k=-M,\cdots,M$, divided by $N$. For {an arbitrary} nonzero frequency $k\neq 0$, denote $\alpha=\frac{1}{ 1-e^{-i2\pi{\frac{k}{N}}}}.$ Then we have 
$$\mathcal{F}_k D_1 z=\sum_{n=1}^{N}(D_1z)_ne^{-i2\pi\frac{k(n-1)}{N}}=(1-e^{-i2\pi{\frac{k}{N}}})\mathcal{F}_k z=\alpha^{-1}\mathcal{F}_k z.$$
Therefore $\mathcal{F}_kD_1 z$ is equivalent to a scaling of $\mathcal{F}_k z$. Next, we show that $\mathcal{F}_k D^{-r}z$ is also close to a scaling of $\mathcal{F}_k z$, which will yield a simple relation between $\mathcal{F}_k D_1^\beta z$ and $\mathcal{F}_k D^{-r}z$. Specifically, a direct calculation gives
\begin{align*}
\mathcal{F}_k D^{-1}z&=\sum_{n=1}^{N}(\sum_{j=1}^n z_j)e^{-i2\pi\frac{k(n-1)}{N}}=\sum_{n=1}^{N}z_n\sum_{j=n}^{N}e^{-i2\pi\frac{k(j-1)}{N}}\\
&={\frac{1}{1-e^{-i2\pi{\frac{k}{N}}}}}\sum_{n=1}^{N}z_n(e^{-i2\pi{\frac{k(n-1)}{N}}}-1)=\alpha\mathcal{F}_k z-\alpha(D^{-1}z)_{N}.
\end{align*}
Similarly, we have
$$\mathcal{F}_k D^{-2}z =\alpha\mathcal{F}_k D^{-1}z-\alpha(D^{-2}z)_{N}=\alpha^2\mathcal{F}_k z-\alpha^2 (D^{-1}z)_{N}-\alpha(D^{-2}z)_{N},$$
$$\mathcal{F}_k D^{-3}z 
=\alpha^3\mathcal{F}_k z-\alpha^3 (D^{-1}z)_{N}-\alpha^2(D^{-2}z)_{N}-\alpha(D^{-3}z)_{N}.$$
More generally, for $\beta=1\ \text{or}\ 2,\;r\geq 2$,
\begin{align*}
\mathcal{F}_k D^{-r}z&=\alpha^r \mathcal{F}_k z-\alpha^r(D^{-1}z)_{N}-\alpha^{r-1}(D^{-2}z)_{N}-\cdots-\alpha(D^{-r}z)_{N}\\
&=\alpha^{r+\beta}\mathcal{F}_k D_1^\beta z-\alpha^r(D^{-1}z)_{N}-\alpha^{r-1}(D^{-2}z)_{N}-\cdots-\alpha(D^{-r}z)_{N}.
\end{align*}
Multiplying $\alpha^{-(r+\beta)}$ on both sides and rearranging the terms give,
\begin{align*}
\mathcal{F}_k D_1^\beta z&=(1-e^{-i2\pi {\frac{k}{N}}})^{r+\beta}\mathcal{F}_k D^{-r}z+(1-e^{-i2\pi{\frac{k}{N}}})^\beta(D^{-1}z)_{N}\\
&\quad+(1-e^{-i2\pi{\frac{k}{N}}})^{\beta+1}(D^{-2}z)_{N}+\cdots+(1-e^{-i2\pi{\frac{k}{N}}})^{\beta+r-1}(D^{-r}z)_{N}.
\end{align*}
Note that for $k=0$, $\mathcal{F}_0D_1^\beta z=\sum_{n=1}^{N}(D_1^\beta z)_n=0$. Then the equation above holds for all integers $k$ with $0\leq|k|\leq N/2$.  Denote $\Lambda\in \mathbb{C}^{2M+1,2M+1}$ as the diagonal matrix with diagonal entries being $1-e^{-i2\pi{\frac{k}{N}}}, -M\leq k\leq M$, and let $F_M$ be the matrix {consisting of} $\mathcal{F}_k$ with $ k=-M,\cdots,M$ as its first to the $(2M+1)^{\textrm{th}}$ rows, we obtain the matrix form of the previous equation
\begin{align*}
&F_M D_1^\beta z\\&= \Lambda^{r+\beta}F_M D^{-r} z+\Lambda^\beta (D^{-1}z)_{N}\mathbf{1}
+\Lambda^{\beta+1} (D^{-2}z)_{N}\mathbf{1}+\cdots+\Lambda^{\beta+r-1} (D^{-r}z)_{N}\mathbf{1}\\
&=\Lambda^{r+\beta}F_M D^{-r} z+\sum_{\ell=1}^r\Lambda^{\beta+\ell-1}(D^{-\ell}z)_{N}\mathbf{1}.
\end{align*}
Here $\mathbf{1}$ represents the all-one vector.  Multiplying $\frac{1}{N}F_M^*$ to both sides of the above equation and replacing $z$ with $z=\widetilde x-x$, we have 
\begin{align*}
P_M D_1^\beta (\widetilde x-x)={\frac{1}{N}}F_M^*\Lambda^{r+\beta}F_M D^{-r} (\widetilde x-x)+\sum_{\ell=1}^r\frac{1}{N}F_M^*\Lambda^{\beta+\ell-1}(D^{-\ell}(\widetilde x-x))_{N}\mathbf{1}.
\end{align*}
Note that $ \left|1-e^{-i 2\pi{\frac{k}{N}}}\right|\leq 2\pi{\frac{|k|}{N}}\leq 2\pi {\frac{M}{N}}$. Hence $\left\|{\frac{1}{N}}F_M^*\Lambda^\ell\right\|_2\lesssim {\frac{1}{\sqrt{N}}}\left({\frac{M}{N}}\right)^\ell$, and the following error bound in $\ell_2$-norm holds
\begin{align}\label{eq:D_l}
&\| P_M D_1^\beta(\widetilde x-x)\|_2\notag \\
&\quad\leq \left\| {\frac{1}{N}}F_M^*\Lambda^{r+\beta}F_M\right\|_2\| D^{-r}(\widetilde x-x)\|_2\\
&\quad\quad+\sum_{\ell=1}^r\left\|\frac{1}{N}F_M^*\Lambda^{\beta+\ell-1}\right\|_2\sqrt{2M+1}\left|(D^{-\ell}(\widetilde x-x))_{N}\right| \notag \\
&\quad\lesssim \left(\frac{M}{N}\right)^{r+\beta}\sqrt{N}\delta+\sum_{\ell=1}^r\left(\frac{M}{N}\right)^{\beta+\ell-\frac{1}{2}}\cdot 2^r\left(\frac{1}{2N}\right)^r\delta  \notag\\
&\quad\lesssim \left({\frac{M}{N}}\right)^{r+\beta}\sqrt{N}\delta. \notag
\end{align}
Here \eqref{eq:D_l} is due to the feasibilities of $\widetilde{x}$ and the true signal $x$:  
\[\|(D^{-r}(\widetilde x-q))_{N-r+1:N}\|_\infty\leq\left(\frac{1}{2N}\right)^{r}\delta,\ \|(D^{-r}(x-q))_{N-r+1:N}\|_\infty\leq\left(\frac{1}{2N}\right)^{r}\delta,\] where $(D^{-r}(\widetilde x-q))_{N-r+1:N}$ stands for the last $r$ entries of $D^{-r}(\widetilde x-q)$. These two inequalities and the triangle inequality  imply $ \|(D^{-r}(\widetilde x-x))_{N-r+1:N}\|_\infty\leq 2\left(\frac{1}{2N}\right)^{r}\delta$. From this last inequality,  we further have for all $1\leq\ell\leq r$,  $|(D^{-\ell}(\widetilde x-x))_{N}|\leq 2^r\left(\frac{1}{2N}\right)^r\delta$, which leads to \eqref{eq:D_l}.
\end{proof}
Denote the $\beta^{\textrm{th}}$-order derivative of the reconstruction error by $h=D_1^\beta(\hat x-x)$. For convenience of notation, we index the entries of $h$ from $0$ to $N-1$, i.e., $h=[h_0,...,h_{N-1}]$. We shall first show that $h$ is small, then using it, we prove that the overall reconstructing error $P_L(\hat x-x)$ is also small, where $L$ is the highest frequency we hope to super-resolve. To show that $h$ is small, we divide it into two parts, a part that contains elements within a neighborhood of some nonzero element of $D_1^\beta x$, and a part that contains the rest of the elements.

For $D_1^\beta x$ satisfying the $\Lambda_M$-minimum separation condition, suppose its support set is  $S=\{m_1,m_2,\cdots,m_s\}\subset \{0,\cdots,N-1\}$. Then $\{\xi_1,\cdots,\xi_s\}:=\frac{1}{N}S$ can be viewed as samples on the discretized torus, $\mathcal{T}=\{ t_n = n/N, n=0,1,\cdots,N-1\}$. As in \cite{li2017elementary}, we define
$$S_M(j)=\{x\in\mathcal{T}:|x-\xi_j|\leq 0.16 M^{-1}\},\ j=1,2,\cdots,s,$$
and
$$S_M=\bigcup_{j=1}^s S_M(j),\ S_M^c=\mathcal{T}\backslash S_M.$$
Then the following lemma shows that the energies of both parts of $h$ can be controlled.
\begin{lemma}\label{lemma2.3}[Discrete version of Proposition 2.3, \cite{li2017elementary}] If $D_1^\beta x$ satisfies the $\Lambda_M$-minimum separation condition, with the $S_M$ defined above, then there exists a constant $C>0$ such that the following holds
\begin{equation}\label{eq:lemma(a)}
    \sum_{\frac{n}{N}\in S_M^c}|h_n|\leq C\sqrt{N}\| P_M h\|_2,
\end{equation}
\begin{equation}\label{eq:lemma(b)}
    \sum_{j=1}^s\sum_{\frac{n}{N}\in S_M(j)}|h_n||t_n-\xi_j|^2\leq CM^{-2}\sqrt{N}\| P_M h\|_2.
\end{equation}
\end{lemma}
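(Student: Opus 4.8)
The plan is to carry out a discrete dual-certificate argument, adapting the continuous super-resolution theory of \cite{candes2013super,li2017elementary} to the sampling grid $\mathcal{T}$. The starting point is the optimality of $\hat x$: since the true signal $x$ is feasible for \eqref{eq:case2full} and $\hat x$ is a minimizer, we have $\|D_1^\beta \hat x\|_1\le \|D_1^\beta x\|_1$, and with $h=D_1^\beta(\hat x-x)$ this reads $\|D_1^\beta x + h\|_1\le \|D_1^\beta x\|_1$. Let $S=\{m_1,\dots,m_s\}$ be the support of $D_1^\beta x$, write $v_n=\mathrm{sgn}\big((D_1^\beta x)_n\big)$ for $n\in S$, and recall $\xi_j=m_j/N$. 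The convexity (subgradient) inequality $|a+h_n|\ge |a|+\mathrm{Re}(\overline{v_n}h_n)$ on $S$, together with the bound above, yields the familiar cone inequality $\sum_{n\in S}\mathrm{Re}(\overline{v_n}h_n)+\|h_{S^c}\|_1\le 0$.

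The key object is an interpolating low-frequency trigonometric polynomial $q\in C(\mathbb{T};\Lambda_M)$ (degree $\le M$) with the four properties: (i) $q(\xi_j)=v_{m_j}$ for every $j$; (ii) $|q(t)|\le 1$ on $\mathcal{T}$; (iii) $|q(t)|\le 1-c_0$ on $S_M^c$ for an absolute constant $c_0>0$; and (iv) $|q(t)|\le 1-c_1 M^2 d(t,\xi_j)^2$ on each $S_M(j)$. This is exactly the content I would port from \cite{li2017elementary}: because the nodes $\{\xi_j\}$ obey the $\Lambda_M$-separation $\tfrac1N d(m_i,m_j)\ge 2/M$, one builds $q$ as a combination of the squared Fej\'er kernel and its derivative, interpolating the signs $v_{m_j}$ with vanishing derivative at each node; the separation and $M\ge 128$ make the interpolation system well-conditioned and force the quadratic decay (iv) and the uniform gap (iii). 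Sampling this polynomial, $\mathbf{q}=(q(t_0),\dots,q(t_{N-1}))$, lies in the range of $P_M$, since $P_M=\tfrac1N F_M^*F_M$ is the orthogonal projection onto frequencies $\{-M,\dots,M\}$ and $\mathbf{q}$ is band-limited to this range; hence $P_M\mathbf{q}=\mathbf{q}$.

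Now I would combine the two ingredients. Using $\overline{q(t_n)}=\overline{v_n}$ on $S$, write $\sum_{n\in S}\overline{v_n}h_n=\langle \mathbf{q},h\rangle-\sum_{n\notin S}\overline{q(t_n)}h_n$; substituting this into the cone inequality and taking real parts gives $\sum_{n\notin S}(1-|q(t_n)|)|h_n|\le |\langle \mathbf{q},h\rangle|$. Because $P_M\mathbf{q}=\mathbf{q}$ and $P_M$ is self-adjoint, $\langle \mathbf{q},h\rangle=\langle \mathbf{q},P_M h\rangle$, so Cauchy--Schwarz with $\|\mathbf{q}\|_2\le\sqrt{N}$ (property (ii), $N$ samples each of modulus $\le 1$) yields the master inequality
\[
\sum_{n\notin S}\big(1-|q(t_n)|\big)|h_n|\le \sqrt{N}\,\|P_M h\|_2 .
\]
Splitting $S^c$ into the far region $\{n/N\in S_M^c\}$ and the punctured near regions $\{n/N\in S_M(j)\}\setminus\{m_j\}$ and inserting the decay estimates (iii) and (iv) separates the left side into two nonnegative pieces, each therefore bounded by $\sqrt{N}\|P_M h\|_2$. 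The far piece gives $c_0\sum_{n/N\in S_M^c}|h_n|\le \sqrt{N}\|P_M h\|_2$, which is \eqref{eq:lemma(a)}; the near piece gives $c_1 M^2\sum_{j}\sum_{n/N\in S_M(j)}|t_n-\xi_j|^2|h_n|\le \sqrt{N}\|P_M h\|_2$, where restricting to $n\notin S$ is free since the $n=m_j$ terms carry the factor $|t_n-\xi_j|^2=0$; this is \eqref{eq:lemma(b)}.

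The genuinely hard part is the certificate construction in the second paragraph, i.e., guaranteeing properties (i)--(iv) — especially the quadratic lower bound (iv) — under the minimum separation; everything after that is bookkeeping. The only novelty relative to \cite{li2017elementary} is checking that passing from the torus $\mathbb{T}$ to the $N$-point grid $\mathcal{T}$ (with $N\gg M$) is harmless: the band-limitedness identity $P_M\mathbf{q}=\mathbf{q}$ holds exactly on the grid, and the continuous estimates on $|q|$ transfer verbatim to the sampled values, the sole arithmetic change being that the $L^2(\mathbb{T})$ normalization is replaced by the discrete $\ell_2$ norm, which is what produces the factor $\sqrt{N}$.
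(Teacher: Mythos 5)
Your proposal is correct and follows essentially the same route as the paper: both arguments hinge on the interpolating low-frequency polynomial supplied by Lemma~\ref{lemma1} (Lemma~2.4 of \cite{candes2013super}), the band-limitedness identity $\langle \mathbf{q},h\rangle=\langle \mathbf{q},P_Mh\rangle$ with $\|\mathbf{q}\|_2\le\sqrt{N}$, and the optimality of $\hat x$ to kill the term $\sum_{n\in S^c}|h_n|-\sum_{n\in S}|h_n|$. The only (harmless) difference is which unit vector gets interpolated --- you take $v_{m_j}=\mathrm{sgn}\big((D_1^\beta x)_{m_j}\big)$ and use the subgradient form of the cone inequality, whereas the paper interpolates the phases $e^{i\phi_j}$ of $h_{m_j}$ so that $\sum_{n\in S}\bar f_n h_n=\sum_{n\in S}|h_n|$; both choices lead to the identical master inequality and the same conclusion.
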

\begin{proof}
\sloppy{Recall that $S=\{m_j,j=1,\cdots,s\}$ is the support set of $D_1^{\beta}x$. Write $h_{m_j}=|h_{m_j}|e^{i\phi_j}$ with some $\phi_j$, for $ j=1,2,\cdots,s.$}  Invoking Lemma \ref{lemma1} in Section \ref{sec:append1} of the appendix, and taking $v_j = e^{i\phi_j},\ j=1,2\cdots, s$, there exist $f(t)=\sum_{k=-M}^M c_k e^{i2\pi kt}$ defined on $\mathbb{T}$ and constants $C_1,C_2$ such that
\begin{equation}
    f(t_j)=e^{i\phi_j},\;j=1,2\cdots, s, 
\end{equation}
\begin{equation}\label{eq:C1}
    |f(t)|\leq 1-C_1 M^2(t-\xi_j)^2, \;t\in S_M(j),
\end{equation}
\begin{equation}\label{eq:C2}
    |f(t)|<1-C_2,\; t\in S_M^c.
\end{equation}
Denote $f_n=f(t_n)$ for $t_n=\frac{n}{N},\ n=0,1,\cdots,N-1$. Then we have
\begin{align*}
    &\sum_{n\in S} |h_n|\\
    &\quad=\left|\sum_{n\in S}\bar f_n h_n\right|\\
    &\quad\leq \left|\sum_{n=0}^{N-1} \bar f_nh_n\right|+\left|\sum_{\frac{n}{N}\in S_M^c}\bar f_nh_n\right|+\left|\sum_{j=1}^s\sum_{\frac{n}{N}\in S_M(j)\backslash\{\xi_j\}}\bar f_nh_n\right|\\
    &\quad\leq \left|\sum_{n=0}^{N-1} \bar f_nh_n\right|+(1-C_2)\sum_{\frac{n}{N}\in S_M^c}|h_n|+\sum_{j=1}^s \sum_{\frac{n}{N}\in S_M(j)\backslash\{\xi_j\}}(1-C_1M^2(t_n-\xi_j)^2)|h_n|\\
    &\quad=\left|\sum_{n=0}^{N-1} \bar f_nh_n\right|+\sum_{n\in S^c}|h_n|-C_2\sum_{\frac{n}{N}\in S_M^c}|h_n|-C_1M^2\sum_{j=1}^s\sum_{\frac{n}{N}\in S_M(j)}(t_n-\xi_j)^2|h_n|,
\end{align*}
where the last inequality used \eqref{eq:C1} and \eqref{eq:C2}.
Rearranging the inequality, we obtain
\begin{equation}\label{eq:lemmaineq}
    C_2\sum_{\frac{n}{N}\in S_M^c}|h_n|+C_1M^2\sum_{j=1}^s\sum_{\frac{n}{N}\in S_M(j)}(t_n-\xi_j)^2|h_n|\leq \left|\sum_{n=0}^{N-1} \bar f_nh_n\right|+\sum_{n\in S^c}|h_n|-\sum_{n\in S} |h_n|.
\end{equation}
Using ${\bf f}$ to represent the $N$-dimensional vector $\{f_n\}_{n=0}^{N-1}$, we notice that $\left|\sum_{n=0}^{N-1}\bar f_nh_n\right|=|\langle {\bf f},h\rangle|=|\langle {\bf f},P_Mh\rangle|\leq \|{\bf f}\| _2 \| P_M h\| _2\leq \sqrt{N} \| P_Mh\| _2$. Also note that $\hat x$ is a solution to \eqref{eq:case2full}, so it holds that 
$$ \| D_1^\beta x\|_1\geq \| D_1^\beta \hat x\| _1= \| D_1^\beta x+h\| _1\geq \sum_{n\in S}|(D_1^\beta x)_n|-\sum_{n\in S}|h_n|+\sum_{n\in S^c}|h_n|.$$
Rearranging the above inequality, we have
$$\sum_{n\in S^c}|h_n|-\sum_{n\in S}|h_n|\leq 0.$$
Then \eqref{eq:lemmaineq} becomes 
$$ C_2\sum_{\frac{n}{N}\in S_M^c}|h_n|+C_1M^2\sum_{j=1}^s\sum_{\frac{n}{N}\in S_M(j)}(t_n-\xi_j)^2|h_n|\leq\sqrt{N}\| P_M h\|_2.$$
From this inequality we can derive \eqref{eq:lemma(a)} and \eqref{eq:lemma(b)}.
\end{proof}
Lemma \ref{lemma:lowfreq} and Lemma \ref{lemma2.3} together ensure that $h= D_1^{\beta}(\hat{x}-x)$ is small. In the following, we show that this small $h$ further implies a small $P_L(\hat{x}-x)$, which completes the proof of Theorem \ref{thm2}.
\begin{proof}[\textbf{Proof of Theorem \ref{thm2}}]
\sloppy To start with, we consider a kernel $K(t) \in C^{\infty}(\mathbb{T})$ that is an arbitrary  $C^{\infty}$ function on the 1D torus $\mathbb{T}$. Define the discretization of $K(t)$  by the boldface letter $\mathbf{K}$, that is, $\mathbf{K}$ is the $N$-dimensional vector with $\mathbf{K}=[K(t_0),K(t_1),\cdots,K(t_{N-1})]^T\in\mathbb{R}^N$ that contains samples of $K(t)$ at the grid points $t_n=\frac{n}{N},\  n=0,1,\cdots N-1$. In what follows, the normal font $K(t)$ always refers to the continuous kernel and the boldface ${\bf K}$ refers to the discretization. We need to frequently take their infinity norms, denoted as $\|K(t)\|_{L^{\infty}}$ and $\|{\bf K}\|_{\infty}$, respectively, and by definition, we have $\|{\bf K}\|_{\infty} \leq \|K(t)\|_{L^{\infty}}$.
 
The proof contains two steps. In the first step, we bound $\| {\bf K}*D_1^\beta(\hat x-x)\|_\infty$ for any general ${\bf K}$. In the second step, we pick a special ${\bf K}$ to obtain the desired result.

For an arbitrary $x_0\in\left\{0,\frac{1}{N},\cdots,\frac{N-1}{N}\right\}$, by the definition of $\mathbf{K}$ and the fact that  $\mathbf{K}$ and $h$ are  periodic, it holds that
\[
\mathbf{K}\ast h(x_0)=\sum_{n=0}^{N-1}K(x_0-t_n)h_n.
\]
Hence,
\begin{align}\label{eq:conv}
\begin{split}
    |\mathbf{K}*h(x_0)|&=\left|\sum_{n=0}^{N-1}K(x_0-t_n)h_n\right|
    \leq \left|\sum_{j=1}^s\sum_{\frac{n}{N}\in S_M(j)}K(x_0-t_n)h_n\right|+\| K\|_{L^\infty}\sum_{\frac{n}{N}\in S_M^c}|h_n|.
    \end{split}
\end{align}
On the interval $S_M(j)$, we approximate $K(x_0-t_n)$ with its first-order Taylor expansion around $x_0-\xi_j$,
$$K(x_0-t_n)=K(x_0-\xi_j)+K'(x_0-\xi_j)(\xi_j-t_n)+{\frac{1}{2}}K''(\mu_n)|t_n-\xi_j|^2,\;t_n\in S_M(j),$$
where {$\mu_n$ is some value between $x_0-t_n$ and $x_0-\xi_j$.} Inserting this into \eqref{eq:conv}, we obtain
\begin{align*}
|\mathbf{K}*h(x_0)|&\leq \left|\sum_{j=1}^s\sum_{\frac{n}{N}\in S_M(j)}(K(x_0-\xi_j)-K'(x_0-\xi_j)(t_n-\xi_j))h_n\right|\\
&\;\;\;\;+\frac{1}{2}\| K''\|_{{L^\infty}}\sum_{j=1}^s\sum_{\frac{n}{N}\in S_M(j)}|t_n-\xi_j|^2|h_n|+\| K\|_{{L^\infty}}\sum_{\frac{n}{N}\in S_M^c}|h_n|.  
\end{align*}
To bound the first term on the right hand side, we use an interpolation argument. Let $a,b\in \mathbb{C}^s$ such that $a_j=K(x_0-\xi_j),\;b_j=-K'(x_0-\xi_j)$ and by Proposition 2.4 in \cite{li2017elementary}, {when $M\geq 128$,} there exists a function $\widetilde f\in C(\mathbb{T};\Lambda_M)$ such that
\begin{equation}\label{eq:lemma_interpoation1}
\| \widetilde f\|_{{L^\infty}}\lesssim \| K\|_{{L^\infty}}+M^{-1}\| K'\|_{{L^\infty}},
\end{equation}
\begin{equation}\label{eq:lemma_interpoation2}
|\widetilde f(x)-a_j-b_j(x-\xi_j)|\lesssim(M^2\| K\|_{{L^\infty}}+M\| K'\|_{{L^\infty}})|x-\xi_j|^2,\;\forall x\in S_M(j),
\end{equation}
which gives
\begin{align*}
    \Big|\sum_{j=1}^s&\sum_{\frac{n}{N}\in S_M(j)}(K(x_0-\xi_j)-K'(x_0-\xi_j)(t_n-\xi_j))h_n\Big| \\
    &\leq \left|\sum_{j=1}^s\sum_{\frac{n}{N}\in S_M(j)}(\widetilde f_n-K(x_0-\xi_j)+K'(x_0-\xi_j)(t_n-\xi_j))h_n\right|+\left|\sum_{\frac{n}{N}\in S_M}\widetilde f_nh_n\right|\\
    &\lesssim (M^2\| K\|_{{L^\infty}}+M\| K'\|_{{L^\infty}})\sum_{j=1}^s\sum_{\frac{n}{N}\in S_M(j)}|t_n-\xi_j|^2|h_n|+\left|\sum_{n=0}^{N-1}\widetilde f_nh_n\right|+\left|\sum_{{\frac{n}{N}}\in S_M^c}\widetilde f_nh_n\right|.
\end{align*}
Here $\widetilde f_n=\widetilde f(t_n)$ with $t_n=\frac{n}{N},\ n=0,1,\cdots,N-1$. Below, we will use ${\bf \widetilde f}$ to represent the $N$-dimensional vector $\{\widetilde f_n\}_{n=0}^{N-1}$. From \eqref{eq:lemma_interpoation1} and \eqref{eq:lemma_interpoation2}, we obtain
$$\left|\sum_{\frac{n}{N}\in S_M^c}\widetilde f_nh_n\right|\lesssim (\| K\|_{{L^\infty}}+M^{-1}\| K'\|_{{L^\infty}})\sum_{\frac{n}{N}\in S_M^c}|h_n|,$$
$$\left|\sum_{n=0}^{N-1}\widetilde f_nh_n\right|\leq \| {\bf \widetilde f}\|_2\| P_Mh\|_2\lesssim (\| K\|_{{L^\infty}}+M^{-1}\| K'\|_{{L^\infty}})\sqrt{N}\| P_Mh\|_2.$$
Combining these results, we have
\begin{align}\label{eq:errbound}
\begin{split}
    |\mathbf{K}*h(x_0)|&\lesssim (2\| K\|_{{L^\infty}}+M^{-1}\| K'\|_{{L^\infty}})\sum_{\frac{n}{N}\in S_M^c}|h_n|\\
    &+(\| K\|_{{L^\infty}}+M^{-1}\| K'\|_{{L^\infty}})\sqrt{N}\| P_Mh\|_2\\
    &+(M^2\| K\|_{{L^\infty}}+M\| K'\|_{{L^\infty}}+\| K''\|_{{L^\infty}})\sum_{j=1}^s\sum_{\frac{n}{N}\in S_M(j)}|t_n-\xi_j|^2|h_n|\\
    &\lesssim (\| K\|_{{L^\infty}}+M^{-1}\| K'\|_{{L^\infty}}+M^{-2}\| K''\|_{{L^\infty}})\sqrt{N}\| P_Mh\|_2,
    \end{split}
\end{align}
where the last inequality used Lemma  \ref{lemma2.3} and $\lesssim$ hides a constant. 
Next, we plug in a special kernel to $K$ to prove the theorem. For an arbitrary resolution $L$, set $K_L(t)={\frac{1}{N}}\sum_{k=-L,\;k\neq 0}^L e^{i2 \pi kt}$, and denote by $\mathbf{K_L}$ the discretized vector that contains samples of $K_L$ at $t_n=\frac{n}{N}, n=0,1,\dots,N-1$. Then a direct calculation gives
$$P_L(\hat x-x)=\mathbf{K_L}*(\hat x-x)+\frac{1}{N}\sum_{n=1}^{N}(\hat x-x)_n\mathbf{1}.$$
We will bound the two terms on the right hand side separately. The second term is easy to bound due to the boundary constraint in \eqref{eq:case2full} which forces the absolute value of the last $r$ rows in $D^{-r}(\hat x-x)$ to be smaller than $2\cdot(\frac{1}{2N})^{r}\delta$. Then we have  $$\left|\sum_{n=1}^{N}(\hat x-x)_n\right|=|(D^{-1}(\hat x-x))_{N}| \leq 2^{r-1}\|D^{-r}(\hat x-x)_{N-r+1:N}\|_\infty\leq \frac{1}{N^r}\delta.$$
This gives $\|P_L(\hat x-x)-\mathbf{K_L}*(\hat x-x)\|_\infty\leq \frac{1}{N^{r+1}}\delta$. Next, we derive an upper bound on the first term $\|{\bf K_L}*(\hat x-x)\|_\infty$. 

\noindent Case \uppercase\expandafter{\romannumeral 1}: TV order $\beta=1$:

We construct an auxiliary function 
\[\widetilde K_L(t)=\frac{1}{N}\sum_{k=-L,\;k\neq 0}^L \frac{1-e^{i2\pi k(t+\frac{1}{N})}}{1-e^{i2\pi \frac{k}{N}}},\ t\in[0,1].
\]
Notice that
\begin{align*}
\widetilde K_L(t_j) &= \frac{1}{N}\sum_{k=-L,\;k\neq 0}^L \frac{1-e^{i2\pi k\frac{j+1}{N}}}{1-e^{i2\pi  \frac{k}{N}}}\\=&\frac{1}{N}\sum_{k=-L,\;k\neq 0}^L\sum_{n=0}^{j}e^{i2\pi k\frac{n}{N}}=\sum_{n=0}^{j} K_L(t_n), \;\;\;j = 0,1,\cdots,N-1.
\end{align*}
Hence $\widetilde K_L$ satisfies the property
\[
D_1\widetilde K_L(t_{j})=\widetilde K_L(t_j)-\widetilde K_L(t_{j-1})=K_L(t_j),\ \ j=1,2,\cdots,N-1.
\]
Also, since $\widetilde K_L(t_{N-1})=0$, we have $D_1\widetilde K_L(t_{j})=K_L(t_j),\ \forall j=0,1,\cdots,N-1.$ Then the bound $\|{\bf K_L}\ast (\hat x-x)\|_\infty$ is equivalent to 
$$\|{\bf K_L}\ast (\hat x-x)\|_\infty=\|D_1 {\bf \widetilde K_L}\ast (\hat x-x)\|_\infty=\| {\bf \widetilde K_L}\ast D_1(\hat x-x)\|_\infty=\| {\bf \widetilde K_L}\ast h\|_\infty,$$
where ${\bf \widetilde K_L} = \{\widetilde K_L(t_j)\}_{j=0}^{N-1}$. Now we show that the infinity norm of $\widetilde K_L(t)$ is bounded by some constant for arbitrary $L\leq N/2$ and $t\in [0,1]$, then we can bound $\|{\bf \widetilde K_L}\ast h\|_{\infty}$ by \eqref{eq:errbound}. Since for $k\in \mathbb{Z}$, $e^{i2\pi k t}$ is $1$-periodic, we have
\begin{align*}
    &\sup_t|\widetilde K_L(t)|\\&\quad=\frac{1}{N}\sup_t\left|\sum_{k=-L,\;k\neq 0}^L \frac{1-e^{i2\pi kt}}{1-e^{i2\pi \frac{k}{N}}}\right|\\
    &\quad=\frac{1}{N}\sup_t\left|\sum_{k=1}^L\frac{1-\cos(2\pi kt)-i\sin(2\pi kt)}{1-\cos(2\pi \frac{k}{N})-i\sin(2\pi\frac{k}{N})}+\frac{1-\cos(2\pi kt)+i\sin(2\pi kt)}{1-\cos(2\pi \frac{k}{N})+i\sin(2\pi\frac{k}{N})}\right|\\
    &\quad=\frac{1}{N}\sup_t\left|\sum_{k=1}^L \frac{(1-\cos(2\pi kt))(1-\cos(2\pi\frac{k}{N}))+\sin(2\pi kt)\sin(2\pi\frac{k}{N})}{1-\cos(2\pi\frac{k}{N})}\right|\\
    &\quad=\frac{1}{N}\sup_t\left|\sum_{k=1}^L (1-\cos(2\pi kt))+\frac{\sin(2\pi kt)\cos(\pi\frac{k}{N})}{\sin(\pi\frac{k}{N})}\right|\\
    &\quad\leq\frac{L}{N}+\frac{1}{N}\sup_t\left|\sum_{k=1}^L\frac{\sin(2\pi kt)\cos(\pi\frac{k}{N})-\sin(\pi\frac{k}{N})\cos(2\pi kt)}{\sin(\pi\frac{k}{N})}\right|\\
    &\quad=\frac{L}{N}+\frac{1}{N}\sup_t\left|\sum_{k=1}^L\frac{\sin(2\pi k(t-\frac{1}{2N}))}{\sin(\pi\frac{k}{N})}\right|\\
    &\quad=\frac{L}{N}+\frac{1}{N}\sup_t\left|\sum_{k=1}^L\frac{\sin(2\pi kt)}{\sin(\pi\frac{k}{N})}\right|.
\end{align*}
Notice that for $k\leq L\leq N/2$, we have $\pi \frac{k}{N}\leq \frac{\pi}{2}$, and then $\sin(\pi\frac{k}{N})$ is of the same order as $\pi\frac{k}{N}$. Also for all $0<t\leq \frac{\pi}{2}$, we have $t-\frac{t^3}{6}\leq \sin(t)<t$, which implies that $0.58\pi\frac{k}{N}\leq\pi\frac{k}{N}-(\pi\frac{k}{N})^3/6\leq\sin(\pi\frac{k}{N})<\pi\frac{k}{N}$. Then we can see that 
\begin{align*}
    \left|\frac{1}{N}\sum_{k=1}^L\frac{\sin(2\pi kt)}{\sin(\pi\frac{k}{N})}-\sum_{k=1}^L\frac{\sin(2\pi kt)}{\pi k}\right|&=\frac{1}{N}\left|\sum_{k=1}^L \sin(2\pi kt)\left(\frac{1}{\sin(\pi\frac{k}{N})}-\frac{1}{\pi\frac{k}{N}}\right)\right|\\
    &=\frac{1}{N}\left|\sum_{k=1}^L \sin(2\pi kt)\frac{\pi\frac{k}{N}-\sin\left(\pi\frac{k}{N}\right)}{\sin\left(\pi\frac{k}{N}\right)\pi\frac{k}{N}}\right|\\
    &\leq \frac{1}{N}\sum_{k=1}^L \frac{\frac{1}{6}\left(\pi\frac{k}{N}\right)^3}{0.58\left(\pi\frac{k}{N}\right)^2}\\
    &< 0.23.
\end{align*}
This is saying that $\left|\frac{1}{N}\sum_{k=1}^L\frac{\sin(2\pi kt)}{\sin(\pi\frac{k}{N})}\right|$ is close to 
$\left|\sum_{k=1}^L \frac{\sin(2\pi kt)}{\pi k}\right|$. 
It is known that the latter is uniformly bounded by some constant smaller than $2/\pi$ (\cite{alzer2004sharp}) for  arbitrary $L\in \mathbb{N}$ and $t\in\mathbb{R}$, hence the former is also bounded. Therefore there exists some constant $C$ such that $\|\widetilde K_L\|_{L^\infty}\leq C$. Then we use Bernstein's inequality for trigonometric sums \cite{bernstein1912ordre} to obtain $ \| \widetilde K_L'\|_{L^\infty}\leq CL,\ \| \widetilde K_L''\|_{L^\infty}\leq CL^2$.
Thus by \eqref{eq:errbound} we have,
\begin{align*}
\|{\bf K_L}*(\hat x-x)\|_\infty&\leq\| D_1{\bf \widetilde K_L}*(\hat x-x)\|_\infty\\&=\|{\bf \widetilde K_L}*h\|_\infty\leq C{\frac{L^2}{M^2}}\sqrt{N}\cdot{\frac{M^{r+1}}{N^{r+\frac{1} {2}}}}\delta=C \frac{L^2}{N}\left(\frac{M}{N}\right)^{r-1}\delta.
\end{align*}

\noindent Case \uppercase\expandafter{\romannumeral 2}: TV order $\beta=2$:

Consider $\widetilde K_L(t)=-\frac{1}{N}\sum_{k=-L,k\neq 0}^L \frac{e^{i2\pi\frac{k}{N}}-e^{i2\pi k(\frac{2}{N}+t)}}{(1-e^{i2\pi\frac{k}{N}})^2}$. Similar to Case \uppercase\expandafter{\romannumeral 1},  we can show $D_1^2\widetilde K_L(t_{j})=K_L(t_j)$, for all $j=0,1,\cdots,N-1$, and $\|\widetilde K_L\|_{L^\infty}\leq N$. Then
\begin{align*}
\| {\bf K_L}\ast(\hat x-x)\|_\infty&\leq\| D_1^2{\bf \widetilde K_L}\ast(\hat x-x)\|_\infty\\&=\| {\bf \widetilde K_L}\ast h\|_\infty\leq C{\frac{L^2}{M^2}}N^{\frac{3}{2}}\cdot{\frac{M^{r+2}}{N^{r+\frac{3}{2}}}}\delta=CL^2\left(\frac{M}{N}\right)^{r}\delta.
\end{align*}
In conclusion, for $\beta=1 \ \text{or} \ 2$, we have the $\ell_\infty$-norm error bound
$$\|{\bf K_L}\ast(\hat x-x)\|_\infty\leq C\frac{L^2}{N^r}M^{r+\beta-2}\delta,$$
which further gives 
$$\|P_L(\hat x-x)\|_\infty\leq \|{\bf K_L}\ast(\hat x-x)\|_\infty+\left\|\frac{1}{N}\sum_{n=0}^{N-1}(\hat x-x)_n\mathbf{1}\right\|_\infty\leq C\frac{L^2}{N^r}M^{r+\beta-2}\delta.$$
\end{proof}
\section{Numerical simulation}\label{sec:experiments}
In this section, we present the numerical simulation results of the proposed schemes on 1D synthetic signals as well as 2D natural and medical images. The specific algorithms that we have used to obtain these results will be discussed in Section \ref{sec:optimization} of the appendix. 

We first define several terms. An RGB image is a composite of three gray-scale images, corresponding to the red, green, and blue channels. To evaluate the quality of the 1D  reconstruction, we use the SNR (Signal-to-Noise Ratio) defined as $$\text{SNR}_{\text{dB}}(x,y)=20\log_{10}\left(\frac{\|x\|_2}{\|x-y\|_2}\right),$$ where $x\in\mathbb{R}^N$ is the true signal, and $y\in\mathbb{R}^N$ is the reconstructed one. For 2D images, we adopt the common evaluation metric PSNR (Peak Signal-to-Noise Ratio),  $$\text{PSNR}_{\text{dB}}(X,Y)=20\log_{10}\left(\frac{MAX_I}{\sqrt{MSE(X,Y)}}\right),$$ where $X$ and $Y$ represent the clean and the reconstructed images, respectively, $MSE(X,Y)$ is the mean squared error between $X$ and $Y$, i.e., $MSE(X,Y) = \frac{1}{N^2}\sum_{i,j=1}^N (X_{i,j}-Y_{i,j})^2 $,
and $MAX_I$ is the maximum possible pixel value of the image. For grayscale images, $MAX_I=1$, and for RGB images,  $MAX_I=255$. In the following experiments, the unit for SNR and PSNR is decibel (dB).
\subsection{1D synthetic signal}
We design an experiment to confirm the advantage of the proposed methods over MSQ on 1D signals (representing columns of images) proved in Theorem \ref{thm1}. The signal to be quantized is piece-wise constant or piece-wise linear with random heights and slopes. Such signals satisfy our Assumption \ref{assumption1}. We compare MSQ with Sigma Delta quantization coupled with the decoder \eqref{eq:cbccase1}. The result is displayed in Figure \ref{fig:1D_noseparation}. We see from \ref{fig:exp1a} that, with the same number of bits, the reconstructed signal using the $1^{\text{st}}$-order $\Sigma\Delta$ quantization and decoder \eqref{eq:cbccase1} is closer to the true signal than MSQ and better preserves the piece-wise constant structure. \ref{fig:exp1b} shows a similar result for piece-wise linear signals. As real signals may be subject to noise, in \ref{fig:exp1c}, the reconstruction result under random Gaussian noise is shown.  We see that the proposed encoder-decoder pair is pretty robust to noise.
\begin{figure}[htbp]
     \centering
     \begin{subfigure}[t]{0.32\textwidth}
         \centering
         \includegraphics[width=\textwidth]{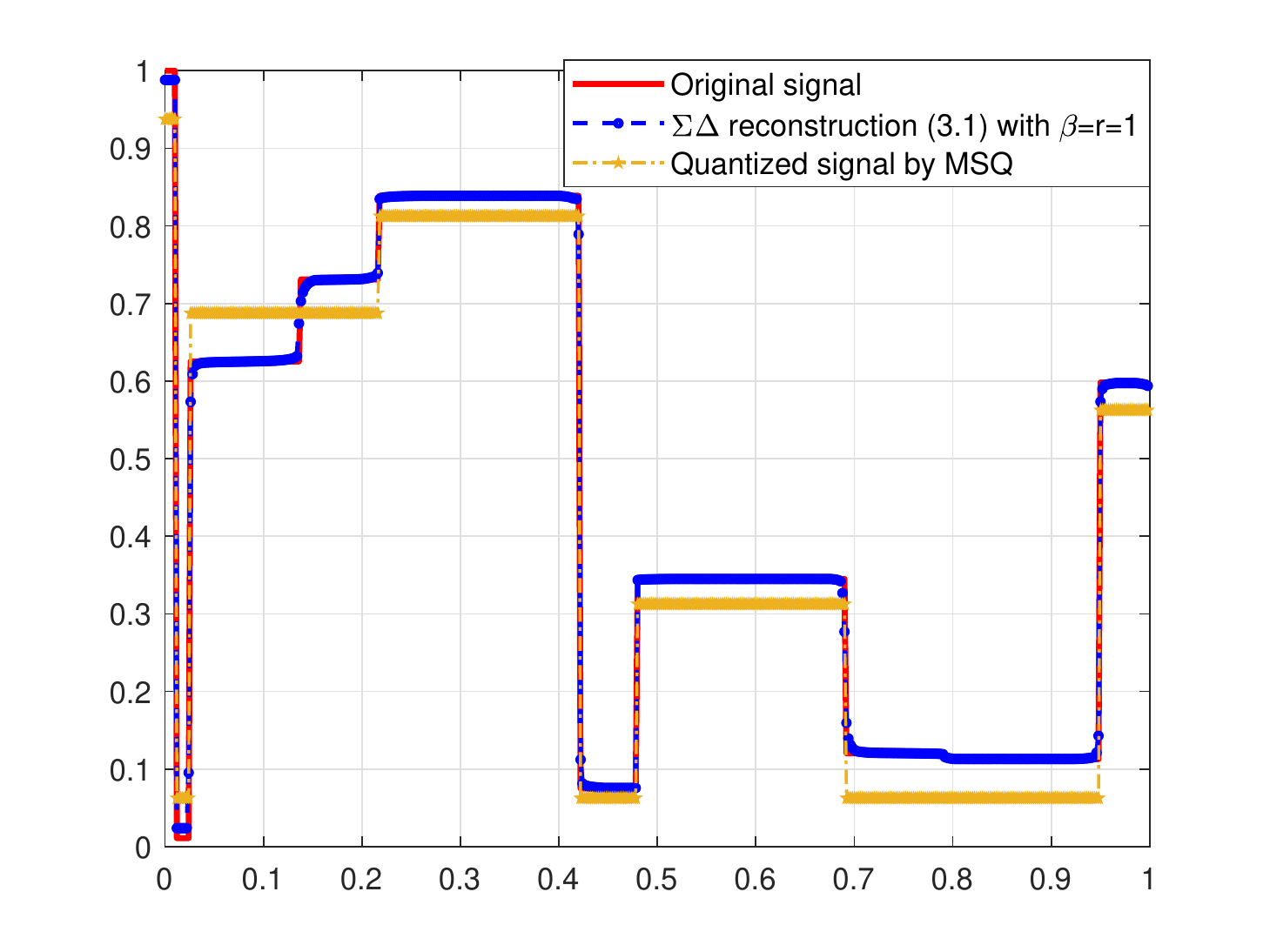}
         \caption{piece-wise constant signal}
         \label{fig:exp1a}
     \end{subfigure}
     \hfill
     \begin{subfigure}[t]{0.32\textwidth}
         \centering
         \includegraphics[width=\textwidth]{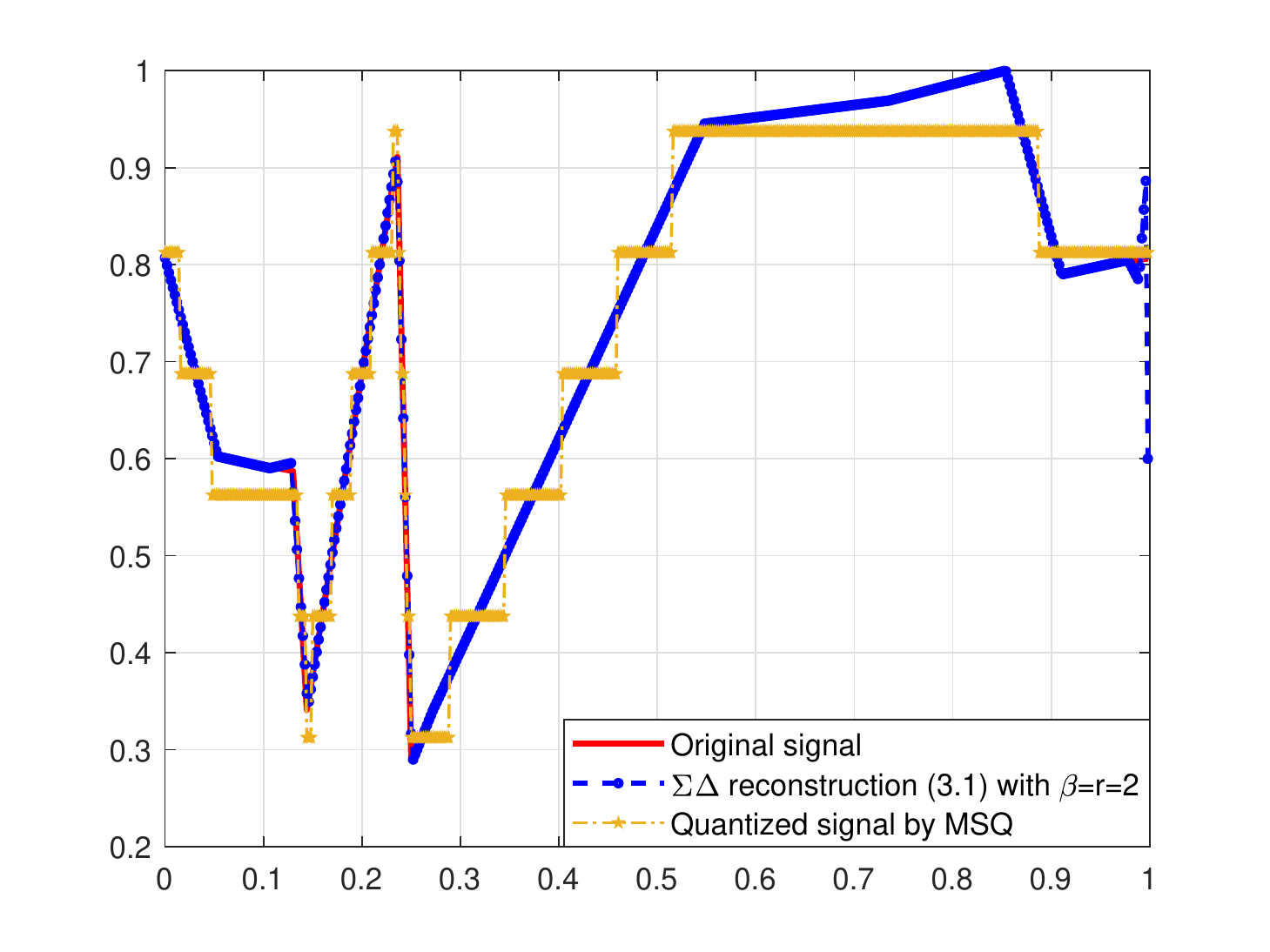}
         \caption{piece-wise linear signal}
         \label{fig:exp1b}
     \end{subfigure}
     \hfill
     \begin{subfigure}[t]{0.32\textwidth}
         \centering
         \includegraphics[width=\textwidth]{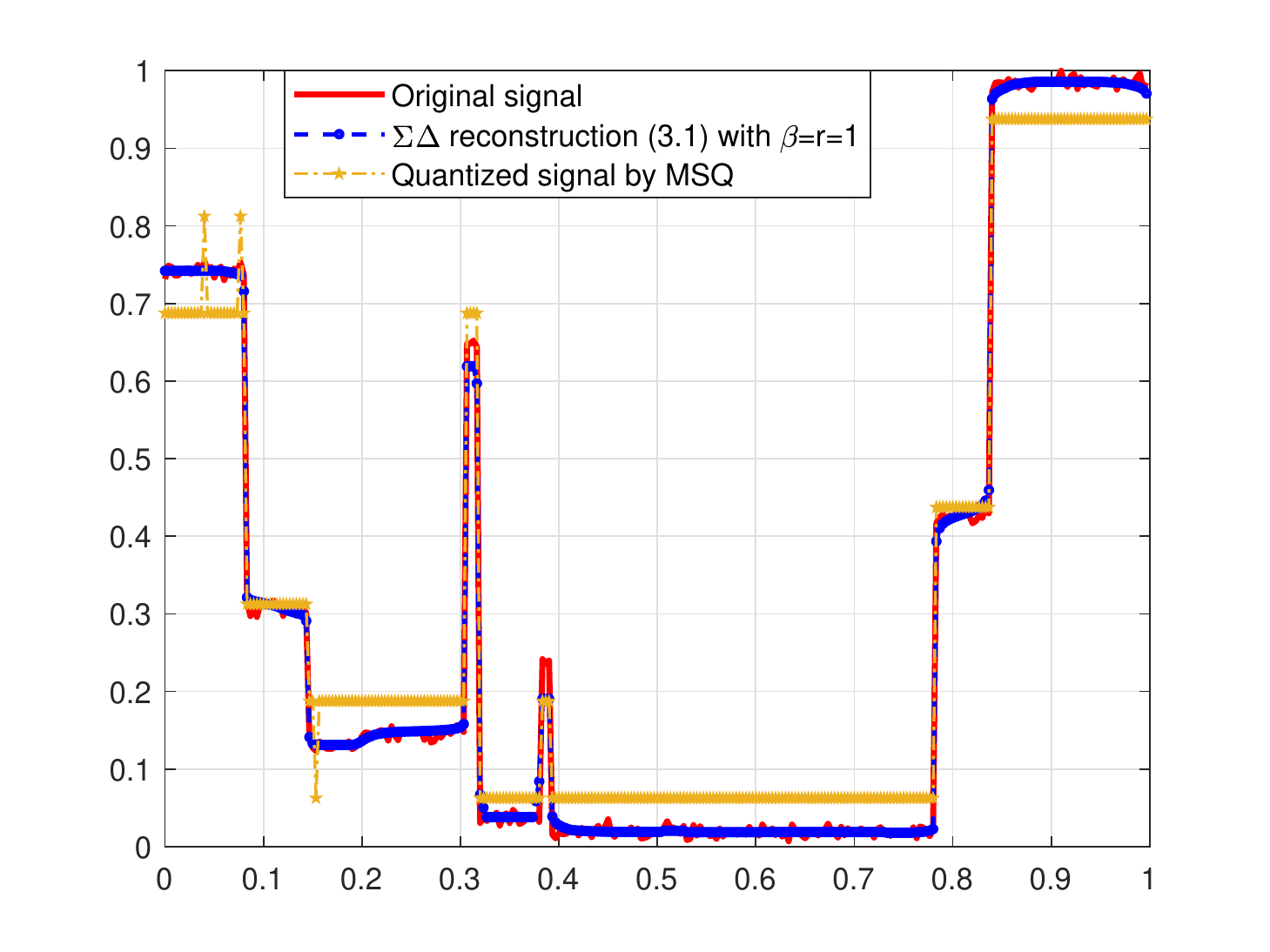}
         \caption{piece-wise constant signal contaminated by random noise}
         \label{fig:exp1c}
     \end{subfigure}
        \caption{A comparison of various 1D signal reconstruction results between the proposed encoder-decoder pairs and MSQ. Here we used a 3-bit alphabet. (A) The signal is piece-wise constant with a small edge separation. We compare MSQ with the $1^{\text{st}}$-order $\Sigma\Delta$ quantization coupled with the decoder  \eqref{eq:cbccase1} with $\beta=1$. The SNR of the reconstructed signal from $\Sigma\Delta$ quantization is 37.00 dB, and that of MSQ is 21.95 dB. (B) The signal is piece-wise linear, we used the $2^{\text{nd}}$-order $\Sigma\Delta$ quantization and the decoder \eqref{eq:cbccase1} with $\beta=2$, the SNR of the $\Sigma\Delta$ reconstruction is 37.30 dB, and the SNR of MSQ is 27.33 dB. (C) The signal is piece-wise constant with random Gaussian noise, we used the $1^{\text{st}}$-order $\Sigma\Delta$ quantization and decoder \eqref{eq:cbccase1} with $\beta=1$. The SNR of $\Sigma\Delta$ reconstruction is 33.40 dB, and that of MSQ is 20.93 dB.}
        \label{fig:1D_noseparation}
\end{figure}

Our analysis in Section \ref{sec:cbcsepa} predicted that, if the minimum separation condition is met, the quantization error can be further reduced through using a higher order $\Sigma\Delta$ quantization. This is confirmed in both (A) and (B) of Figure \ref{fig:1D_separation}, where we see that the reconstructed signal from the $3^{\textrm{rd}}$-order $\Sigma\Delta$ quantization is indeed closer to the true signal than those from the first and second order quantization.

\begin{figure}[htbp]
     \centering
     \begin{subfigure}[t]{0.49\textwidth}
         \centering
         \includegraphics[width=\textwidth]{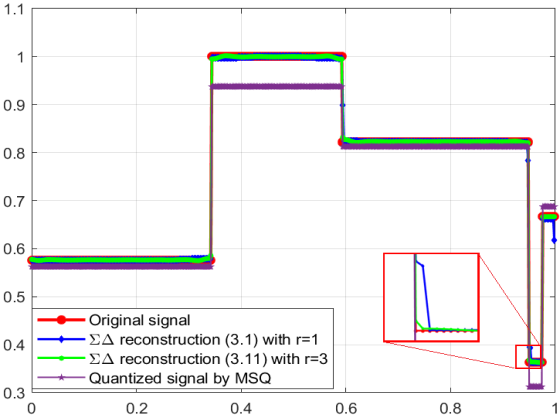}
         \caption{piece-wise constant signal}
     \end{subfigure}
     \hfill
     \begin{subfigure}[t]{0.49\textwidth}
         \centering
         \includegraphics[width=\textwidth]{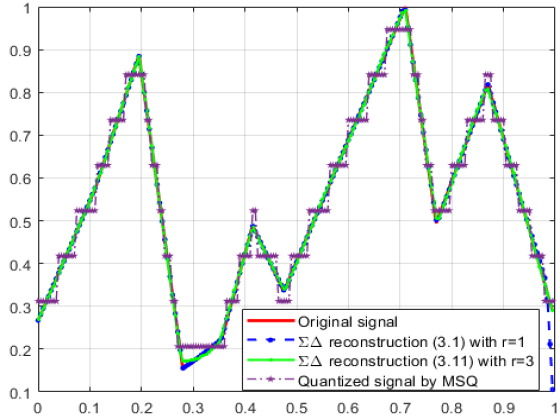}
         \caption{piece-wise linear signal}
     \end{subfigure}
        \caption{Reconstruction result of signals that satisfy the minimum separation condition. Here we used a 3-bit alphabet. (A) The signal is piece-wise constant.  We used the $1^{\text{st}}$-order $\Sigma\Delta$ quantization coupled with the decoder \eqref{eq:cbccase1} with $\beta=r=1$, as well as the $3^{\text{rd}}$-order $\Sigma\Delta$ quantization coupled with the decoder \eqref{eq:case2full} with $\beta=1,\ r=3$. The SNRs of the reconstructed signals are 43.90 dB and 56.36 dB, respectively. The SNR of MSQ is 27.28 dB. (B) The signal is piece-wise linear,  we used $2^{\text{nd}}$-order $\Sigma\Delta$ quantization coupled with the decoder \eqref{eq:cbccase1} with $\beta=r=2$, as well as $3^{\text{rd}}$-order $\Sigma\Delta$ quantization and decoder \eqref{eq:case2full} with $\beta=2,\ r=3$. The SNRs of the reconstructed signals from $2^{\text{nd}}$ and $3^{\text{rd}}$-order $\Sigma\Delta$ quantization are 31.95 dB and 48.83 dB, respectively. The SNR of MSQ is 25.70 dB.} 
        \label{fig:1D_separation}
\end{figure}
\subsection{2D natural and {medical images}}
In this section, we present numerical results for 2D images. We observe that the best performances are usually achieved when the TV order and the $\Sigma\Delta$ quantization order are both set to 1, perhaps because the test images fit our Class 2 model with $\beta=1$ better.

In the first example, on gray-scale images, we compare the 2D Sigma Delta quantization $Q_{2D}$ coupled with decoder \eqref{eq:case3} (sd2D), the 1D Sigma Delta quantization $Q_{col}$ coupled with the decoder \eqref{eq:case1} (sd1D) and the MSQ quantization. For each quantizer, we employ its own optimal alphabet but require them to be subject to the same bit budget: 3 bits per pixel. In Figure \ref{fig:cameraman}, we see that in terms of visual quality (or the amount of artifact), (sd2D) is better than (sd1D) and much better than MSQ. In terms of the PSNR, (sd1D) is slightly better than (sd2D) and much better than MSQ.  
\begin{figure}[htbp]
     \centering
     \begin{subfigure}[t]{0.24\textwidth}
         \centering
         \includegraphics[width=1\textwidth]{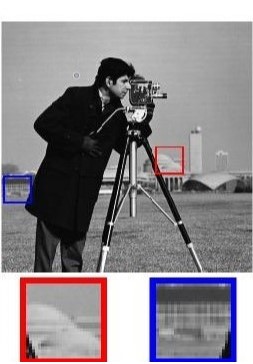}
         \caption{Original image}
         \label{Orig}
     \end{subfigure}
     \hfill
     \begin{subfigure}[t]{0.24\textwidth}
         \centering
         \includegraphics[width=\textwidth]{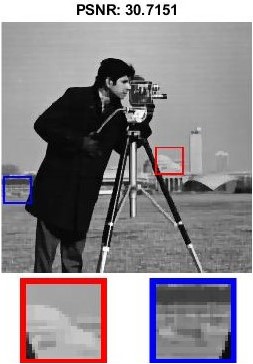}
         \caption{sd2D}
     \end{subfigure}
     \hfill
     \begin{subfigure}[t]{0.24\textwidth}
         \centering
         \includegraphics[width=1\textwidth]{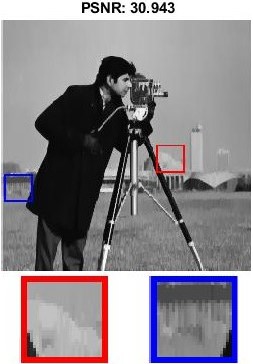}
         \caption{sd1D}
     \end{subfigure}
     \hfill
     \begin{subfigure}[t]{0.24\textwidth}
         \centering
         \includegraphics[width=\textwidth]{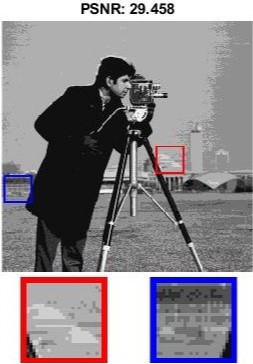}
         \caption{MSQ}
     \end{subfigure}
            \caption{Cameraman reconstruction results from 3-bit quantization. (A) The ground truth. (B) The image is quantized by the 2D $\Sigma\Delta$, $Q_{2D}$, and decoded by \eqref{eq:case3}. (C) Both quantization and reconstruction are carried out column by column. For each column, we used the $1^{\text{st}}$-order $\Sigma\Delta$ quantization and the decoder \eqref{eq:case1}. (D) MSQ quantization.}
        \label{fig:cameraman}
\end{figure}
\begin{figure}[htbp]
     \centering
     \begin{subfigure}[t]{0.24\textwidth}
         \centering
         \includegraphics[width=\textwidth]{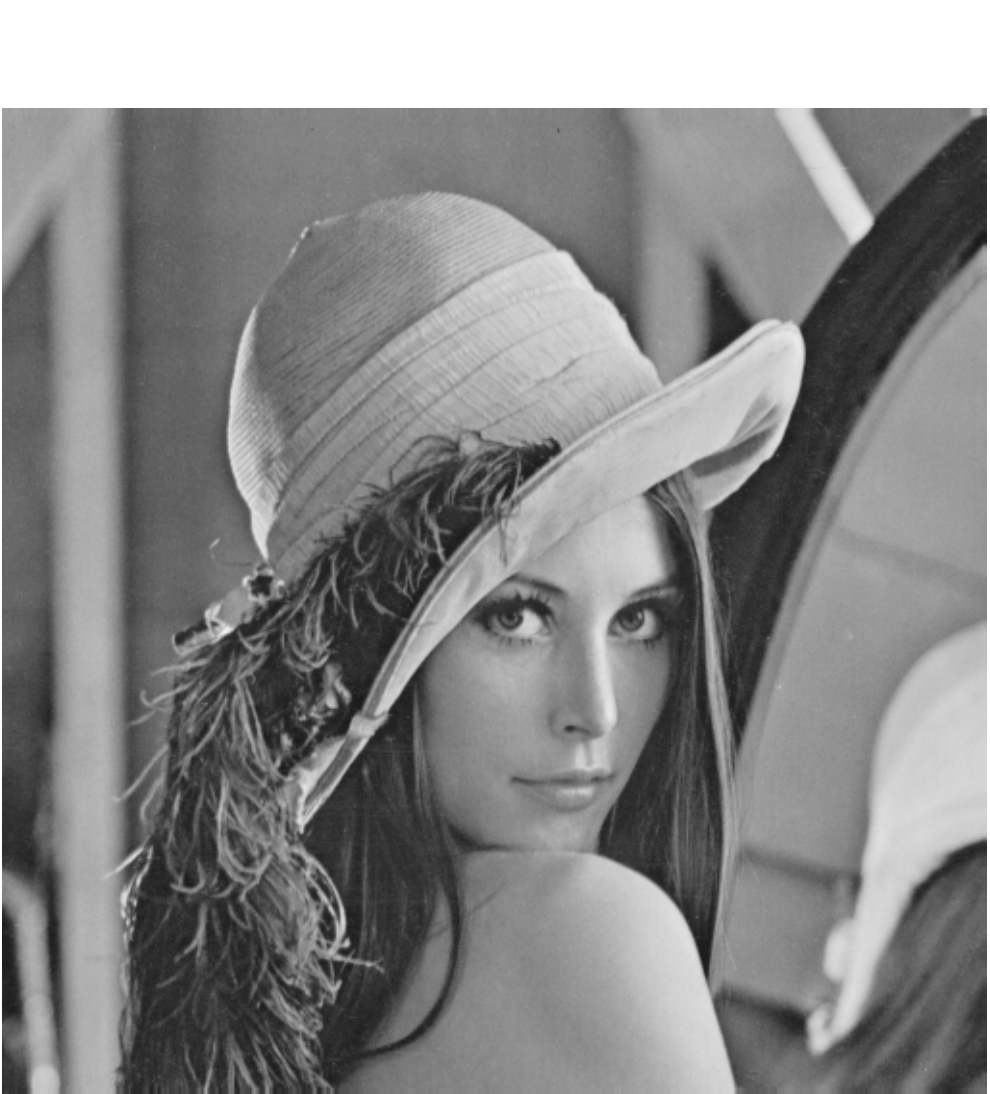}
         \caption{Original image}
         \label{fig:lena_orig}
     \end{subfigure}
     \hfill
     \begin{subfigure}[t]{0.24\textwidth}
         \centering
         \includegraphics[width=\textwidth]{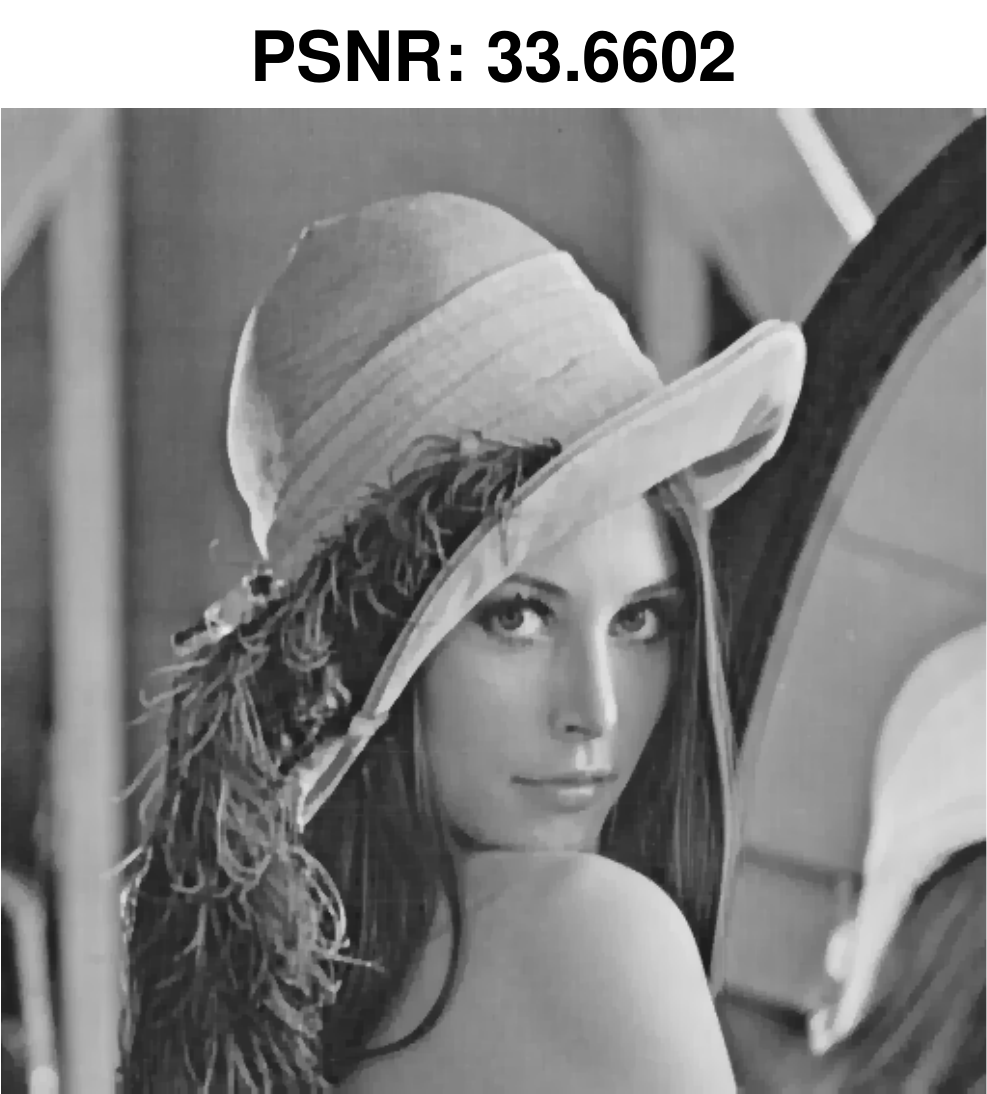}
         \caption{sd2D on the full image}
         \label{fig:lena_2DSD}
     \end{subfigure}
     \hfill
     \begin{subfigure}[t]{0.24\textwidth}
         \centering
         \includegraphics[width=\textwidth]{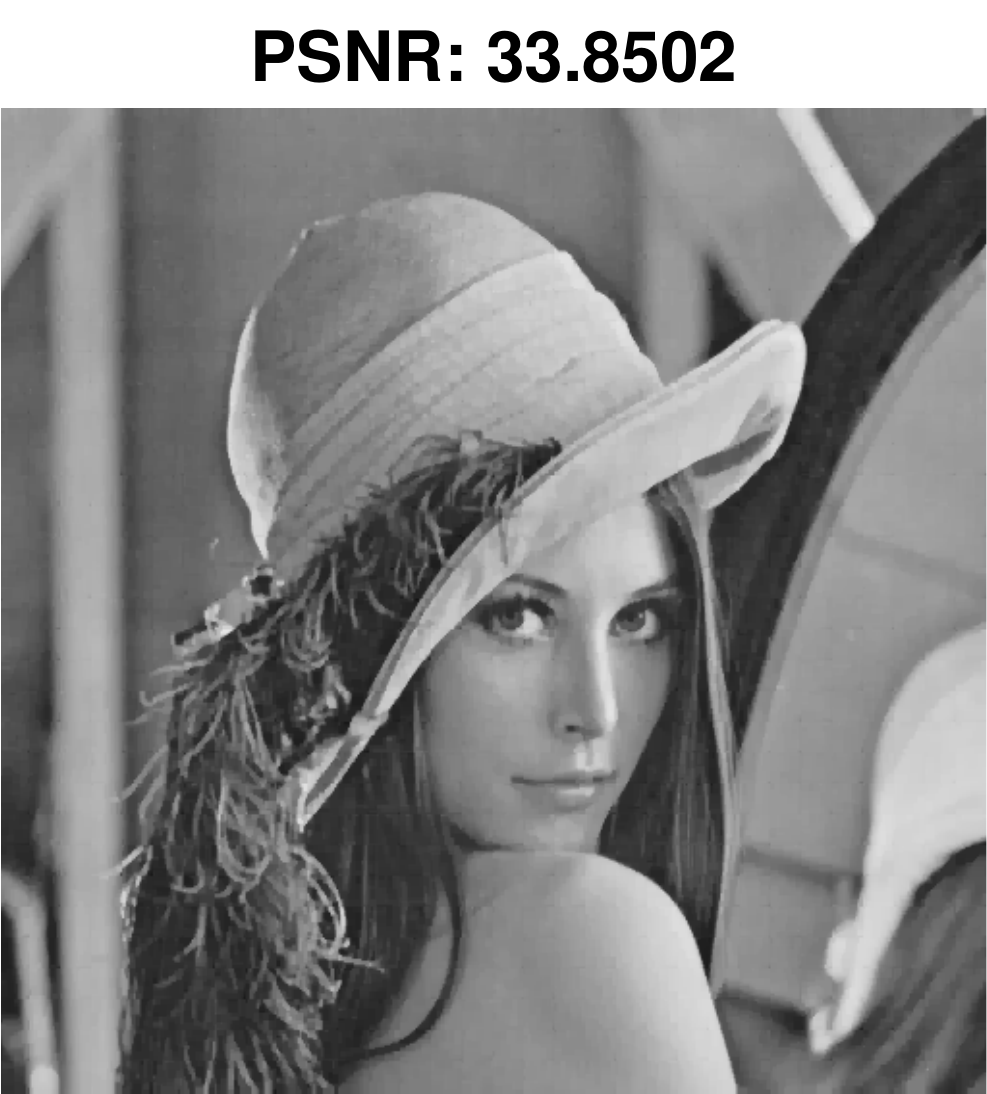}
         \caption{sd2D on size-$16\times 16$ patches}
         \label{fig:lena_2DSD_P}
     \end{subfigure}
     \hfill
     \begin{subfigure}[t]{0.24\textwidth}
         \centering
         \includegraphics[width=\textwidth]{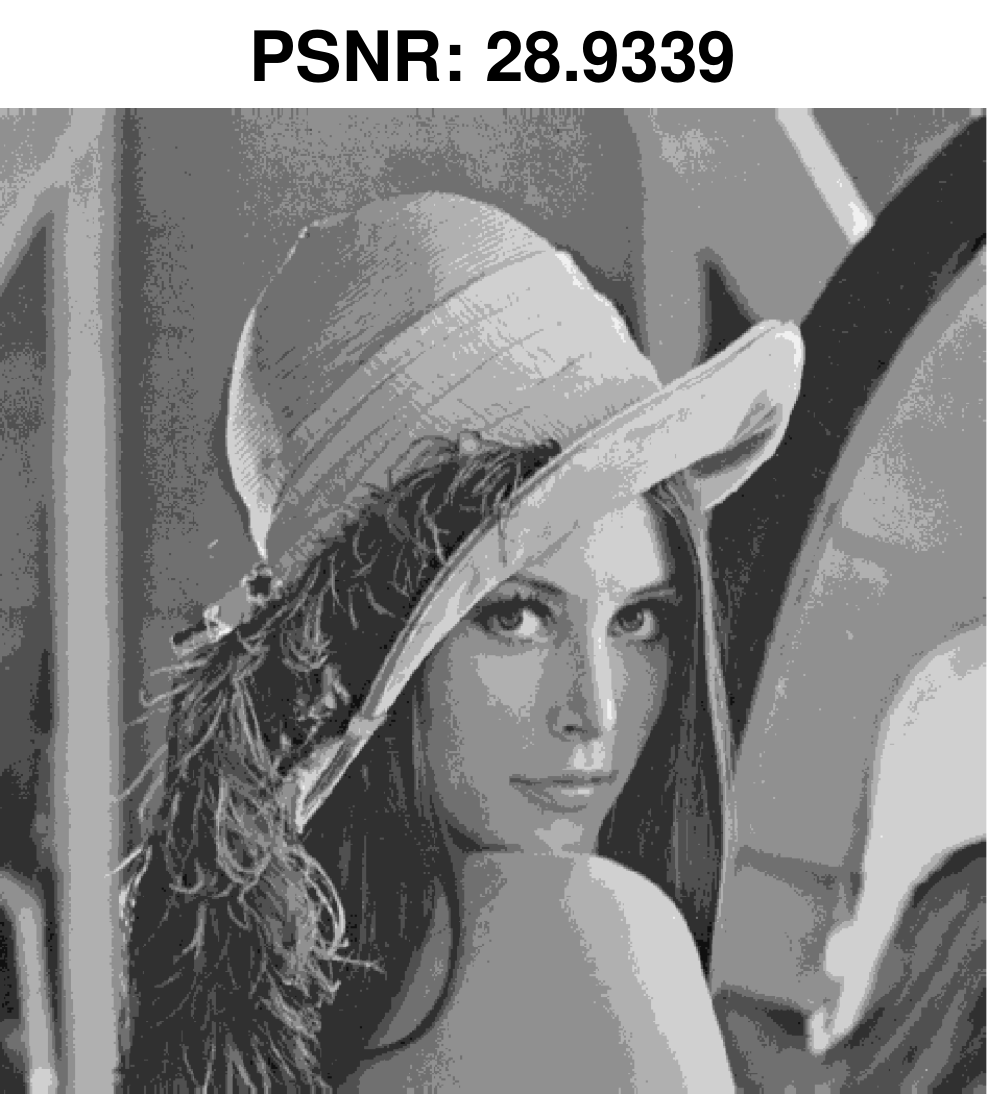}
         \caption{MSQ}
         \label{fig:lena_MSQ}
     \end{subfigure}
     \caption{Reconstructions from 2D $\Sigma\Delta$ quantization of the entire image, parallel 2D  $\Sigma\Delta$ quantization of $32^2$  patches each with size $16\times 16$, and MSQ quantization. All used 3 bits per pixel.} 
     \label{fig:lena}
\end{figure}

\begin{figure}[htbp]
     \centering
     \begin{subfigure}[t]{0.24\textwidth}
         \centering
         \includegraphics[width=1\textwidth]{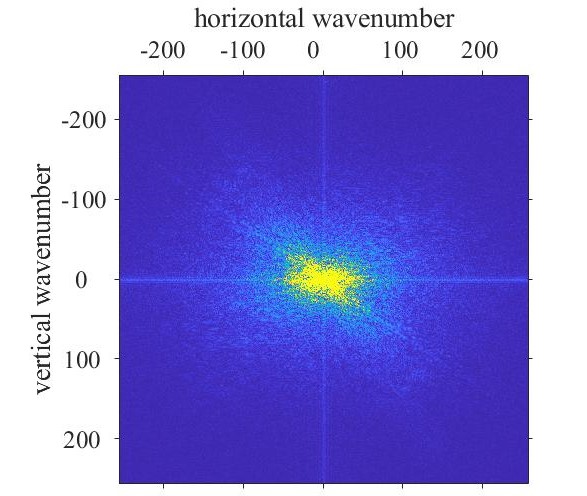}
         \caption{2D absolute spectrum of Figure \ref{fig:lena_orig}}
         \label{lena_freq_GT}
     \end{subfigure}
     \hfill
     \begin{subfigure}[t]{0.24\textwidth}
         \centering
         \includegraphics[width=1\textwidth]{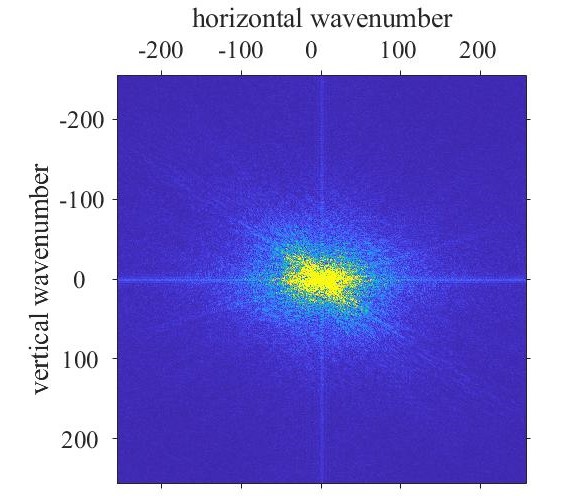}
         \caption{2D absolute spectrum of Figure \ref{fig:lena_2DSD}}
     \end{subfigure}
     \hfill
     \begin{subfigure}[t]{0.24\textwidth}
         \centering
         \includegraphics[width=1\textwidth]{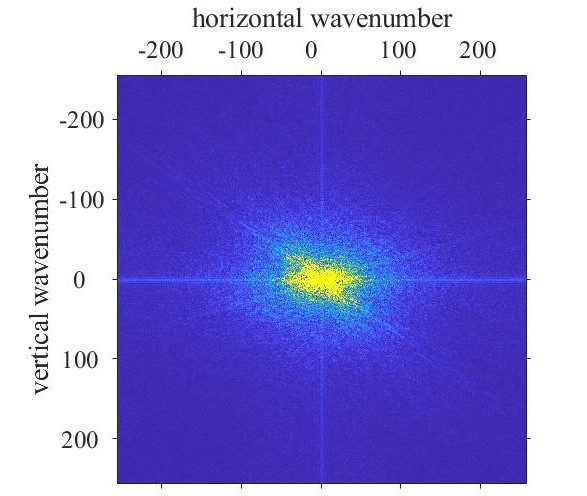}
         \caption{2D absolute spectrum of Figure \ref{fig:lena_2DSD_P}}
     \end{subfigure}
     \hfill
     \begin{subfigure}[t]{0.24\textwidth}
         \centering
         \includegraphics[width=1\textwidth]{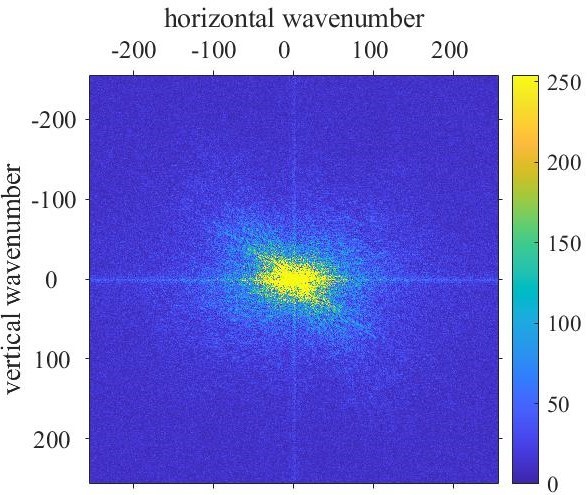}
         \caption{2D absolute spectrum of Figure \ref{fig:lena_MSQ}}
     \end{subfigure}
     \caption{Comparison of 2D absolute spectra of the images in Figure \ref{fig:lena}. From left to right are absolute spectra of the true image, that of the 2D single patch $\Sigma\Delta$ reconstruction, that of the 2D patch-based $\Sigma\Delta$ reconstruction, and that of the MSQ, respectively. The sub-figures share the same colormap.}
     \label{fig:lena_freq_orig}
\end{figure}
\begin{figure}[htbp]
     \centering
     \hfill
     \begin{subfigure}[t]{0.32\textwidth}
         \centering
         \includegraphics[width=0.8\textwidth]{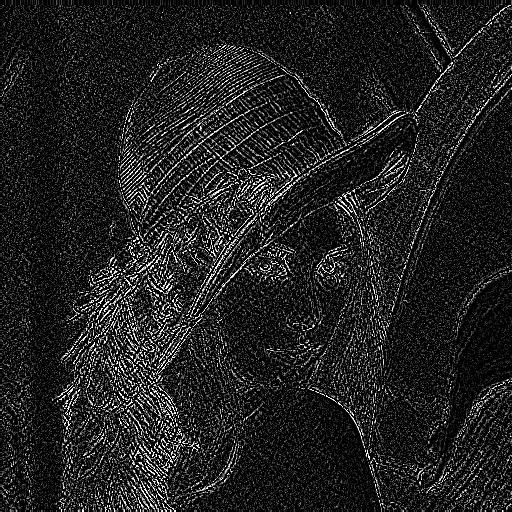}
     \caption{Residue image of Figure \ref{fig:lena_2DSD}}
     \label{fig:lena_2DSD_full_residue}
     \end{subfigure}
     \hfill
     \begin{subfigure}[t]{0.32\textwidth}
         \centering
         \includegraphics[width=0.8\textwidth]{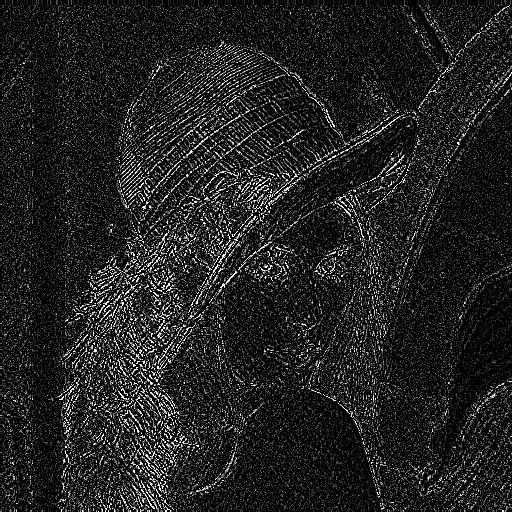}
         \caption{Residue image of Figure \ref{fig:lena_2DSD_P}}
         \label{fig:lena_2DSD_P_residue}
     \end{subfigure}
     \hfill
     \begin{subfigure}[t]{0.32\textwidth}
         \centering
         \includegraphics[width=0.985\textwidth]{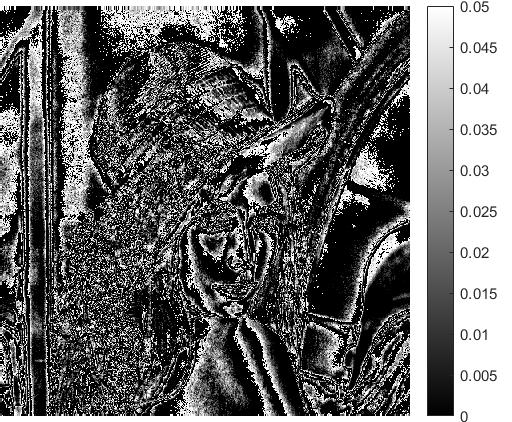}
          \caption{Residue image of Figure \ref{fig:lena_MSQ}}
          \label{fig:lena_MSQ_residue}
     \end{subfigure}
     \caption{Comparison of residues of the images in Figure \ref{fig:lena}. From left to right are the residue of the 2D single patch $\Sigma\Delta$ reconstruction, the residue of the 2D patch-based $\Sigma\Delta$ reconstruction, and that of MSQ, respectively.}
     \label{fig:lena_residue}
\end{figure}
\begin{figure}[htbp]
     \centering
     \hfill
     \begin{subfigure}[t]{0.32\textwidth}
         \centering
         \includegraphics[width=1\textwidth]{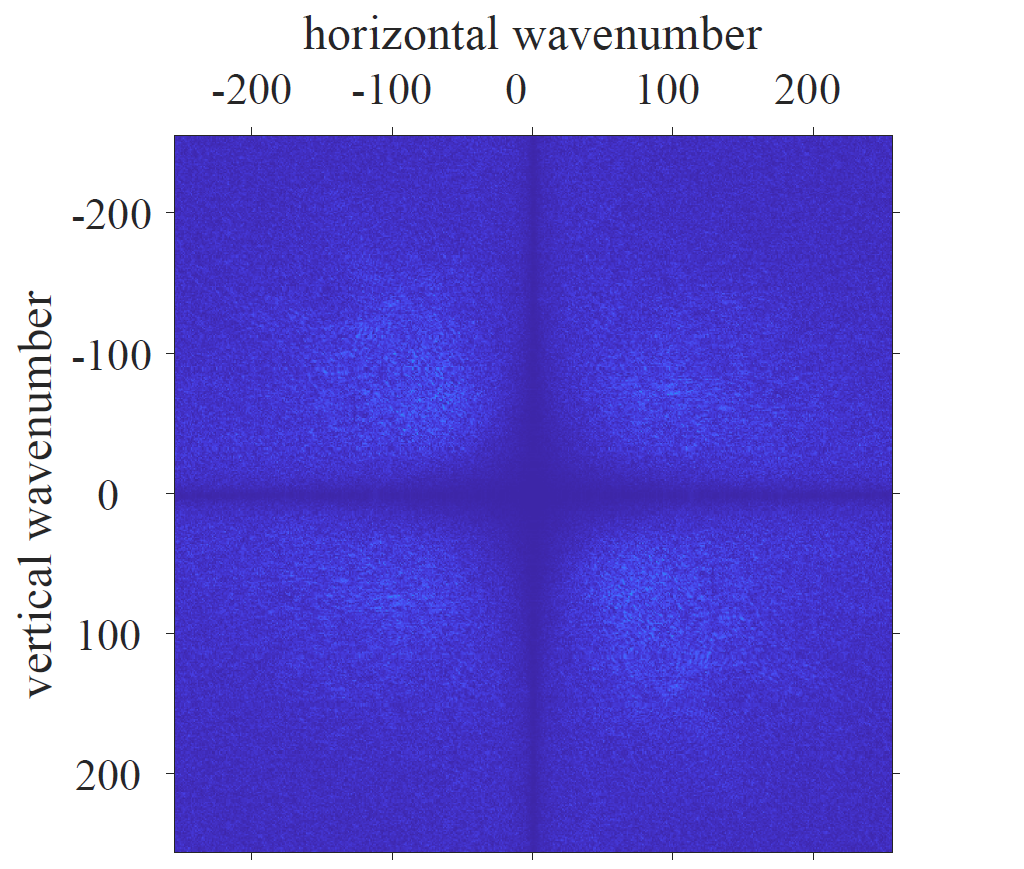}
     \caption{2D absolute spectrum of Figure \ref{fig:lena_2DSD_full_residue}}
     \end{subfigure}
     \hfill
     \begin{subfigure}[t]{0.32\textwidth}
         \centering
         \includegraphics[width=1\textwidth]{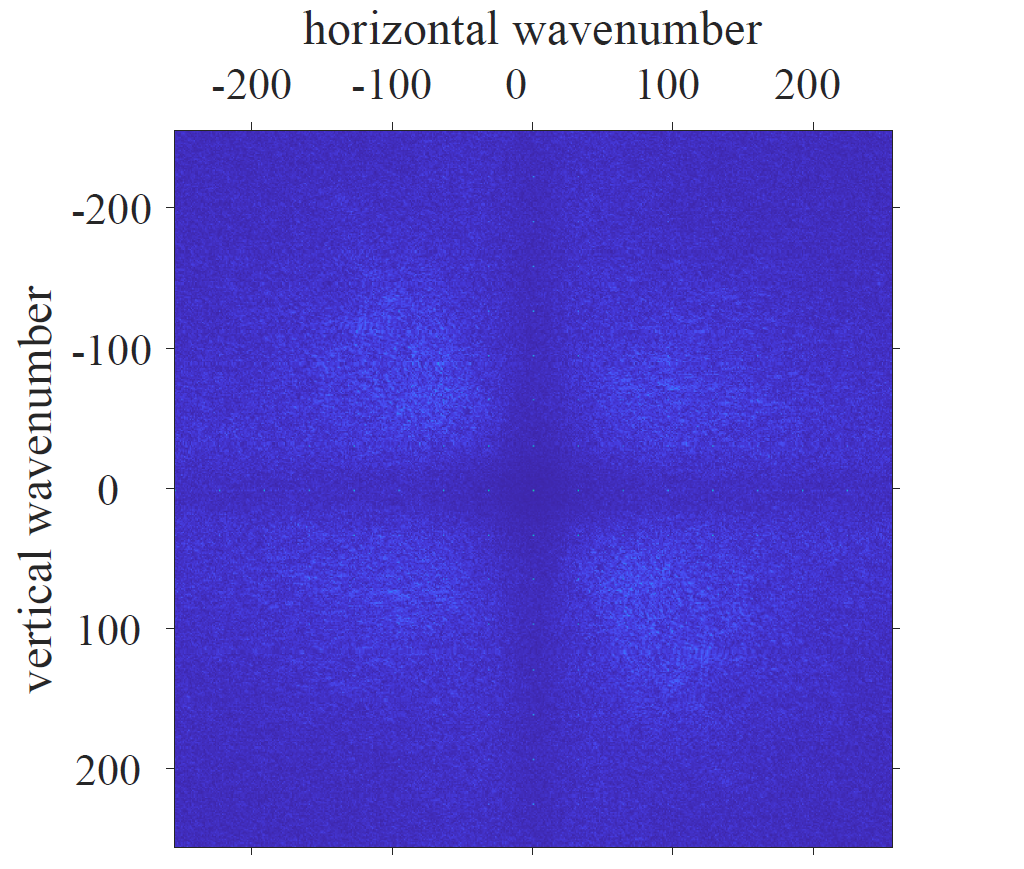}
         \caption{2D absolute spectrum of Figure \ref{fig:lena_2DSD_P_residue}}
     \end{subfigure}
     \hfill
     \begin{subfigure}[t]{0.32\textwidth}
         \centering
         \includegraphics[width=1\textwidth]{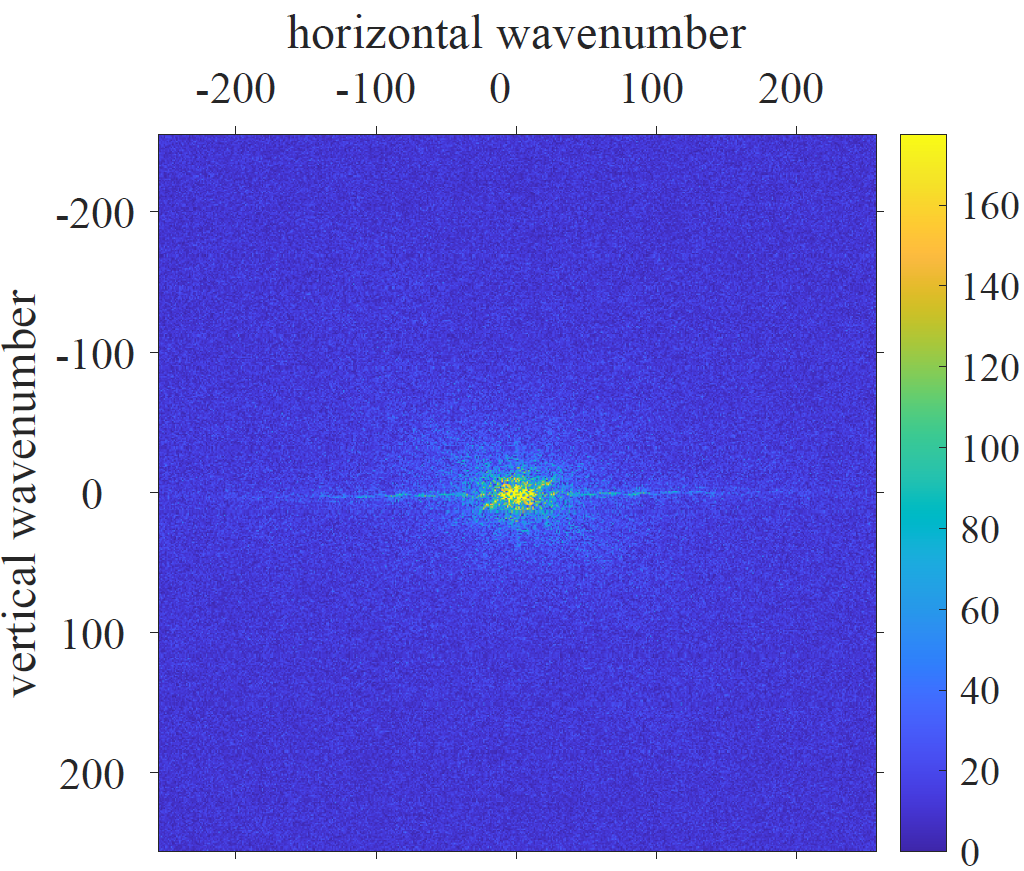}
          \caption{2D absolute spectrum of Figure \ref{fig:lena_MSQ_residue}}
     \end{subfigure}
     \caption{Comparison of the 2D absolute spectra of the residue images in Figure \ref{fig:lena_residue}. From left to right are  2D single patch $\Sigma\Delta$ reconstruction,  2D patch-based $\Sigma\Delta$ reconstruction, and  MSQ, respectively.}
     \label{fig:lena_freq_diff}
\end{figure}
In the second experiment, on the test image Lena, we evaluate the effect of dividing the image into multiple rectangle patches in (sd2D), quantizing and reconstructing each patch individually. The process can be done in parallel, which significantly reduces the reconstruction time. As shown in Figure \ref{fig:lena}, there is no visible difference between the reconstruction by a single patch and that by multiple patches. In both cases, the images look more natural and closer to the original image than MSQ, especially around the face and shoulder areas. 

To further investigate where the improved PSNR of the  Sigma Delta reconstructions comes from, we plotted in Figure \ref{fig:lena_freq_diff} the absolute spectra of the three reconstructed images in Figure \ref{fig:lena_freq_orig} as well as the absolute spectra of their residue images. The residue images (Figure \ref{fig:lena_residue}) are obtained by taking differences between the reconstructed and the original images. From Figure \ref{fig:lena_freq_diff}, we see that as predicted, our decoders can indeed retain the high-frequency information while effectively compressing the low-frequency noise.

Next, we test the performances of the proposed decoders on RGB images.
\begin{figure}[H]
     \centering
     \begin{subfigure}[t]{0.24\textwidth}
         \centering
         \includegraphics[width=\textwidth]{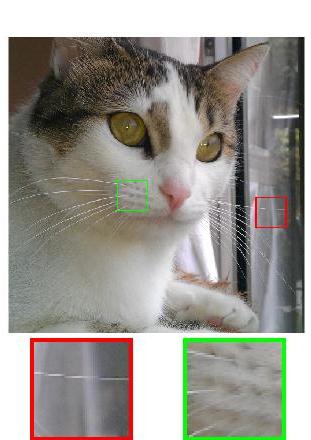}
         \caption{Original image}
         \label{fig:cat4_Orig}
     \end{subfigure}
     \hfill
     \begin{subfigure}[t]{0.24\textwidth}
         \centering
         \includegraphics[width=\textwidth]{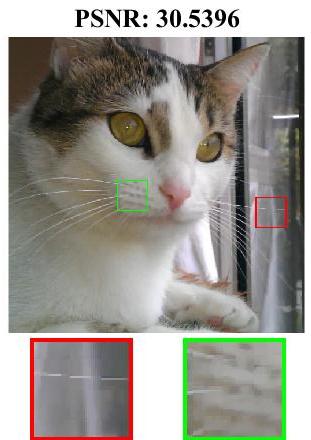}
         \caption{Parallel 2D patch-by-patch $\Sigma\Delta$ reconstruction}
         \label{fig:cat4_2D}
     \end{subfigure}
     \hfill
     \begin{subfigure}[t]{0.24\textwidth}
         \centering
         \includegraphics[width=\textwidth]{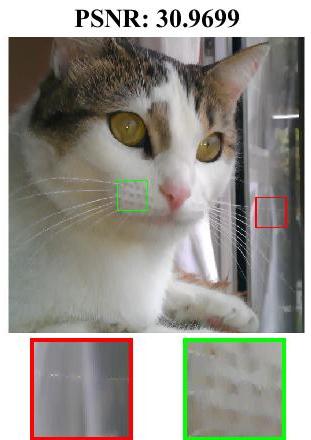}
         \caption{Parallel column-by-column $\Sigma\Delta$ reconstruction}
         \label{fig:cat4_cbc}
     \end{subfigure}
     \hfill
     \begin{subfigure}[t]{0.24\textwidth}
         \centering
         \includegraphics[width=\textwidth]{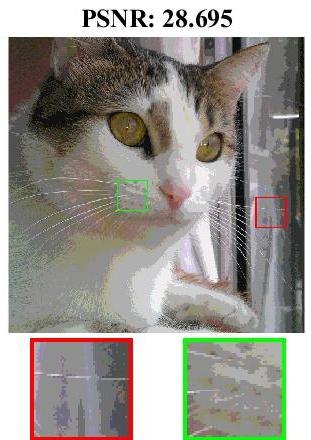}
         \caption{MSQ}
         \label{fig:cat4_MSQ}
     \end{subfigure}
     \caption{Reconstruction results of an RGB image. All methods used 3-bit alphabets.}
     \label{fig:cat4}
\end{figure}
Figure \ref{fig:cat4} shows that compared with MSQ, the proposed $\Sigma\Delta$ reconstructions did a better job at preserving the original color and getting rid of the halos. Similar to the gray-scale image case (Figure \ref{fig:cameraman}), the 2D $\Sigma\Delta$ quantization introduced less horizontal artifact than the column-by-column $\Sigma\Delta$ quantization.
\begin{figure}[H]
     \centering
     \begin{subfigure}[t]{0.24\textwidth}
         \centering
         \includegraphics[width=\textwidth]{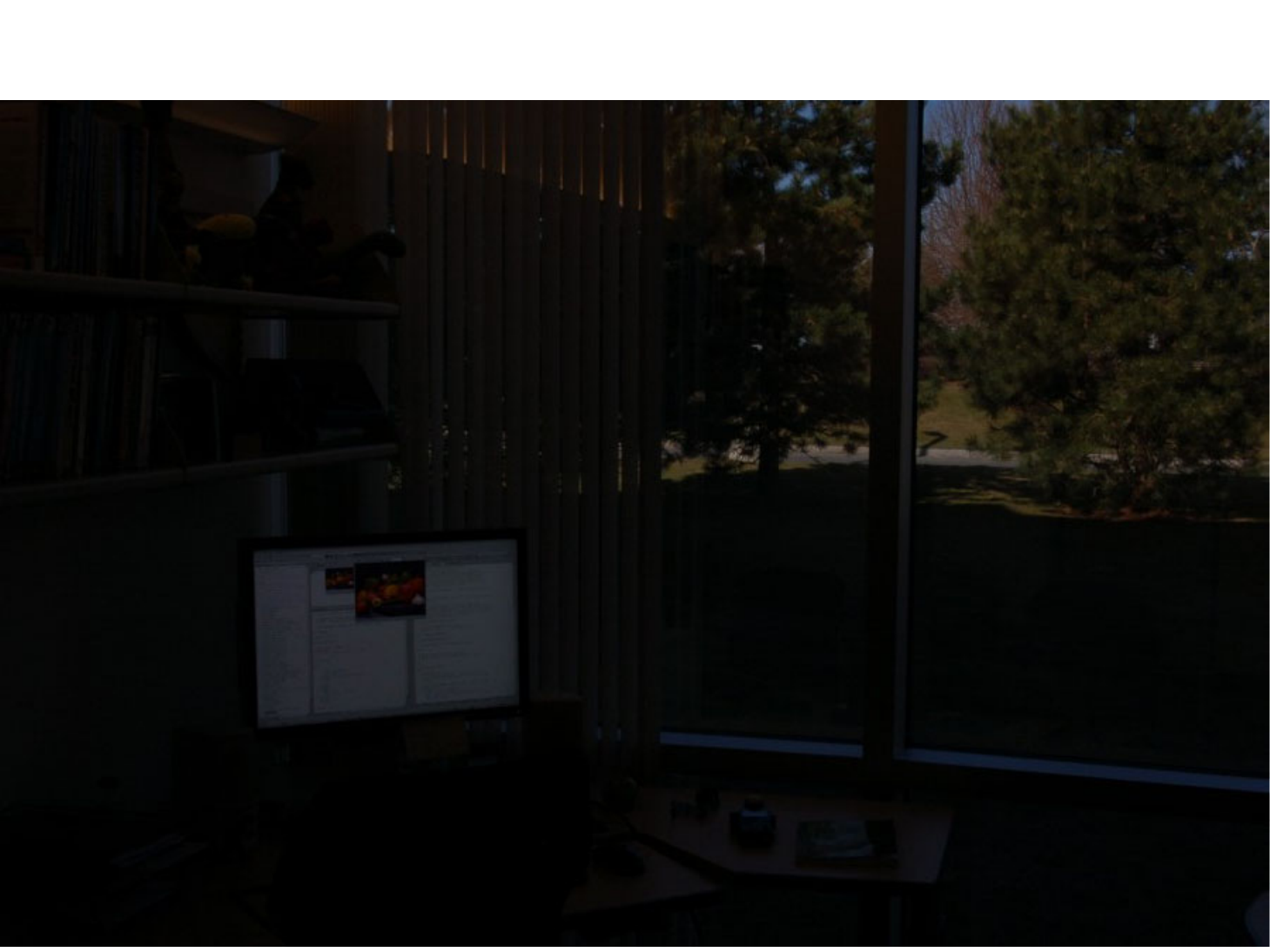}
         \caption{Original image}
         \label{fig:office_Orig}
     \end{subfigure}
     \hfill
     \begin{subfigure}[t]{0.24\textwidth}
         \centering
         \includegraphics[width=\textwidth]{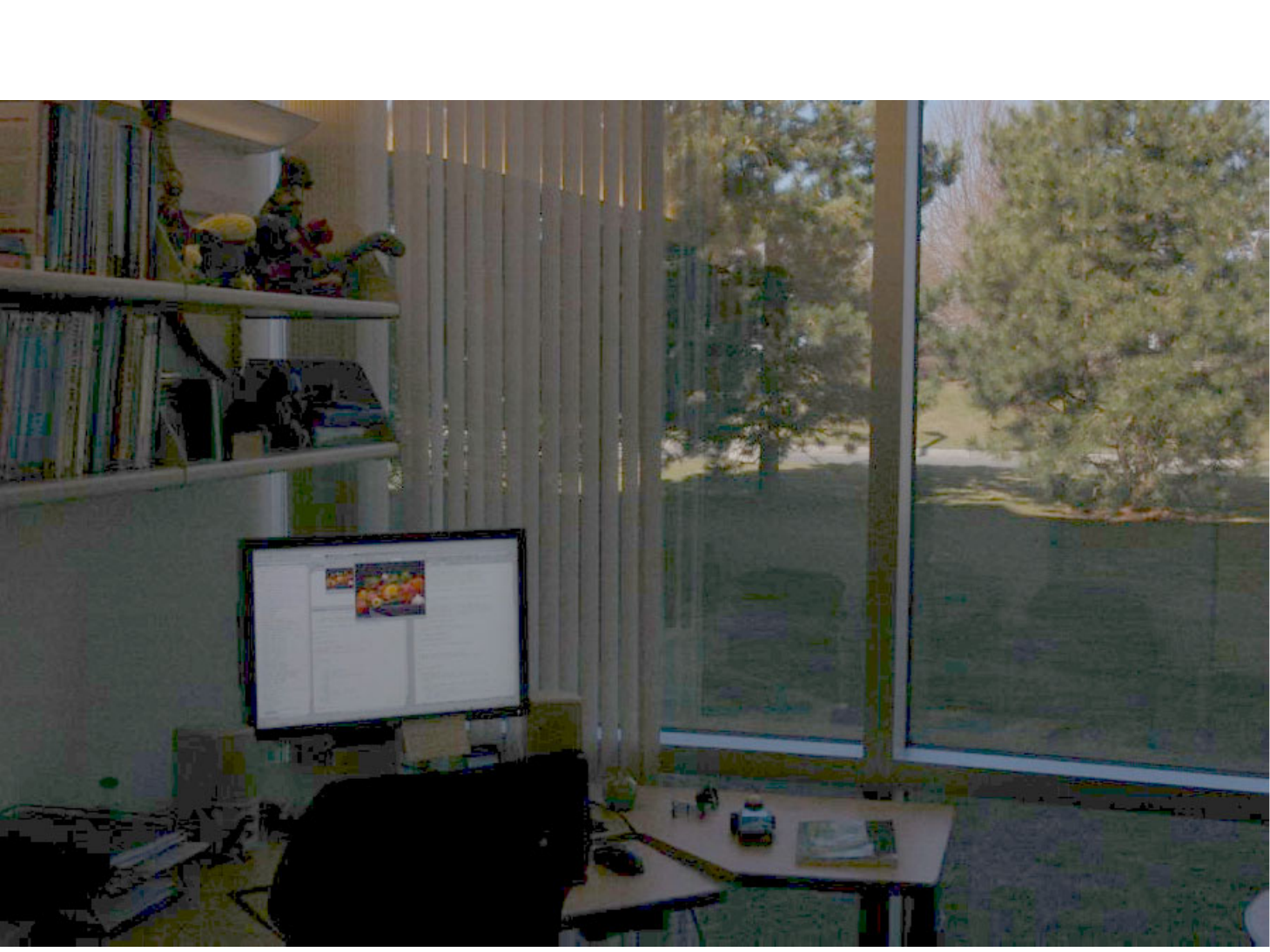}
         \caption{Original image with enhanced brightness}
         \label{fig:office_HDR}
     \end{subfigure}
     \hfill
     \begin{subfigure}[t]{0.24\textwidth}
         \centering
         \includegraphics[width=\textwidth]{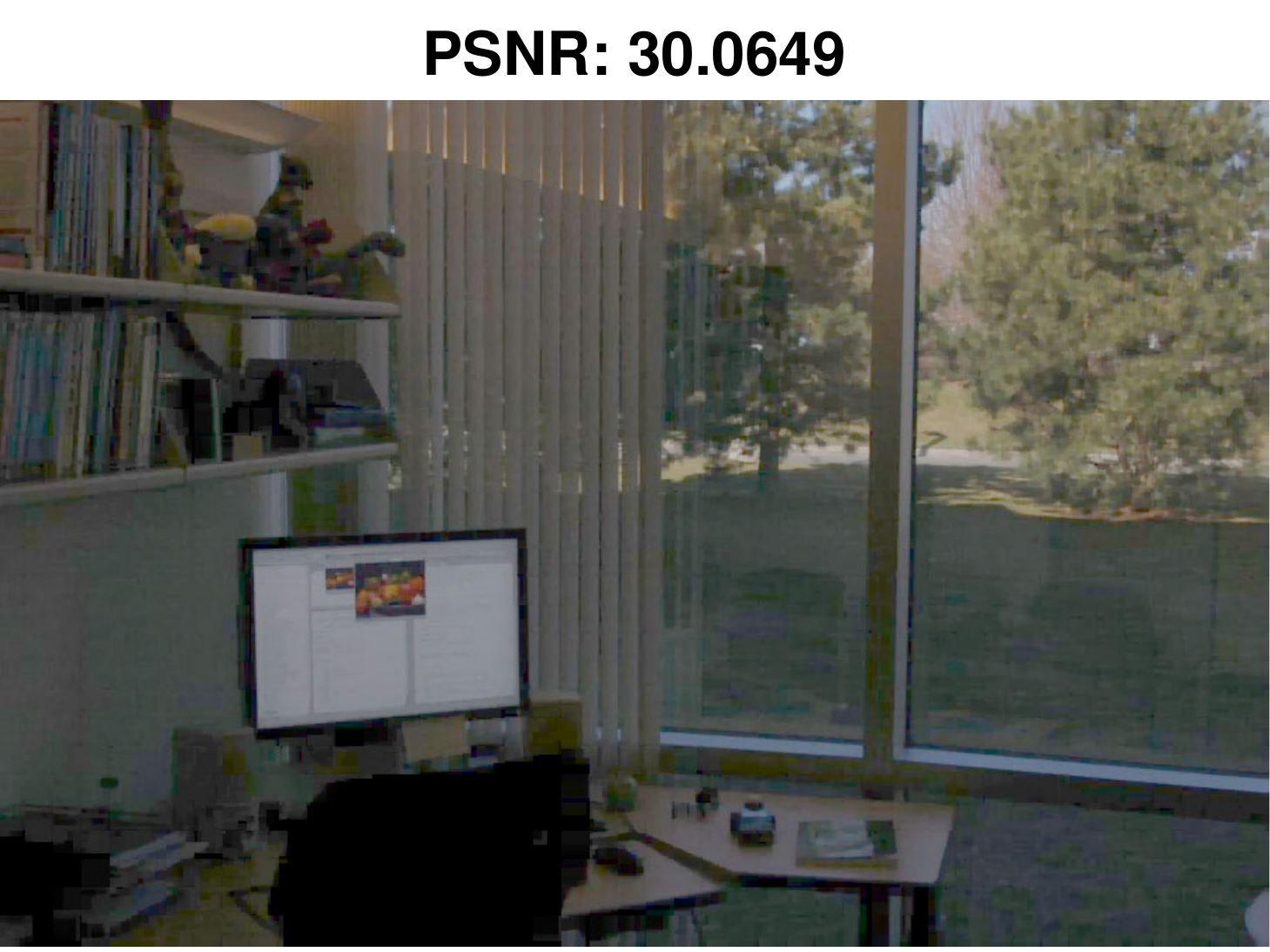}
         \caption{2D $\Sigma\Delta$ reconstruction plotted with enhanced brightness}
         \label{fig:rec_HDR}
     \end{subfigure}
     \hfill
     \begin{subfigure}[t]{0.24\textwidth}
         \centering
         \includegraphics[width=\textwidth]{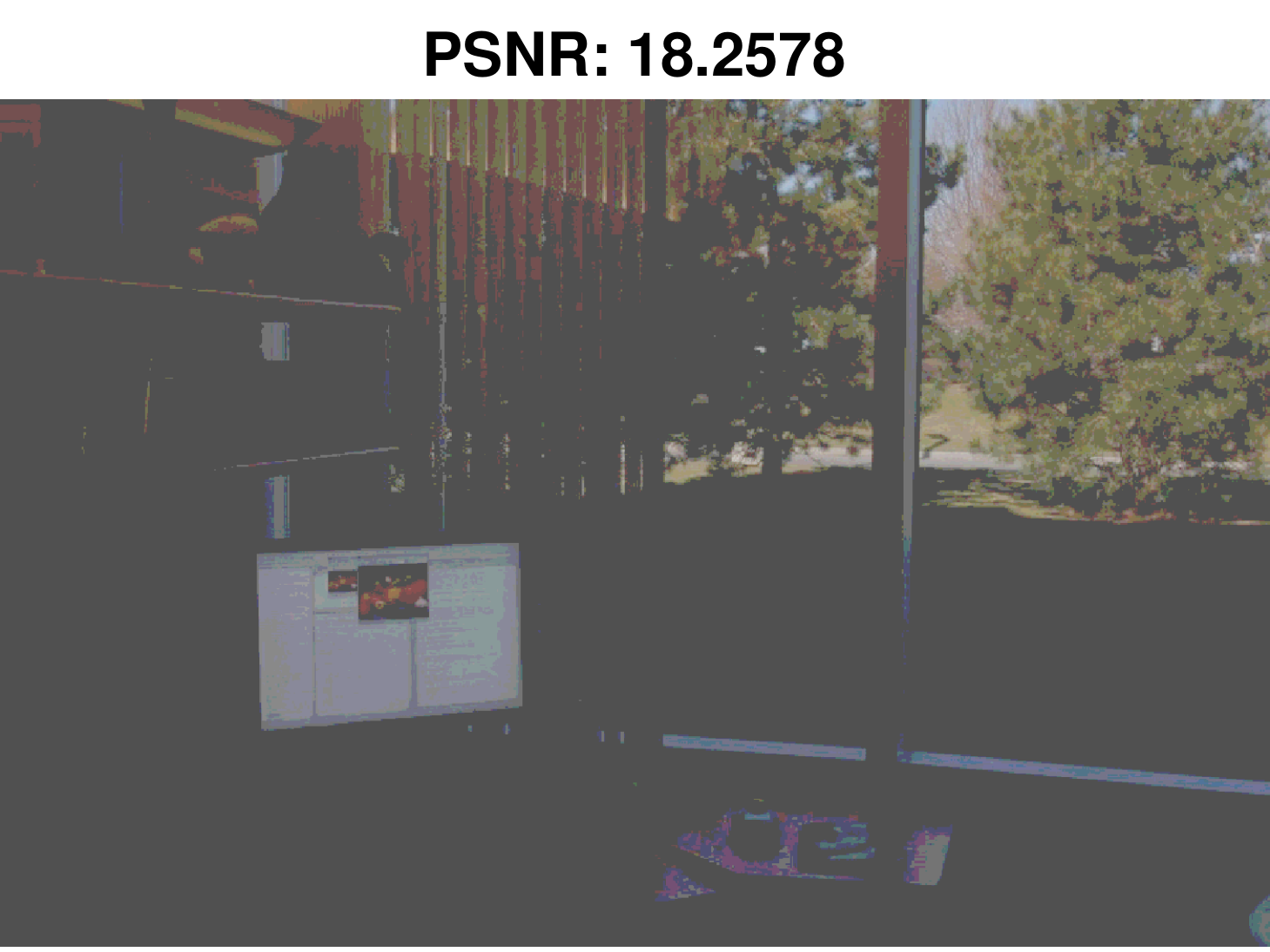}
         \caption{MSQ result plotted with enhanced brightness}
         \label{fig:MSQ_HDR}
     \end{subfigure}
            \caption{Low-pixel-intensity image quantization and reconstruction. Here we used 4-bit quantization for both $\Sigma\Delta$ quantization and MSQ. (A) The ground truth. (B)  Ground-truth image with enhanced  brightness via formula \eqref{eq:adj}. (C) Reconstructed image from the proposed encoder $Q_{2D}$ and decoder \eqref{eq:case3}. The reconstructed image is plotted with enhanced brightness via formula \eqref{eq:adj} (D) MSQ quantization with enhanced brightness. The PSNRs were computed after the brightness enhancement.}
        \label{fig:office}
\end{figure}
Next, we present an interesting result that shows the proposed scheme can greatly assist image acquisition under weak illuminations. When shooting in a dark environment or without an enough exposure time, the captured image will have a low overall pixel intensity (Figure \ref{fig:office_Orig}).  We call such an image to have a low Dynamic Range (DR), meaning that the ratio between the maximum and minimum pixel intensities is small. Visually, it means that the contrast is small. A typical way to increase the contrast is through post-processing. For example, if we use the formula \begin{equation}\label{eq:adj}
X(i,j)=X(i,j)^{1/3}
\end{equation}
to adjust the brightness, Figure \ref{fig:office_Orig} is lightened up to Figure \ref{fig:office_HDR}. However, in practice, since the brightness adjustment is a digital processing step, it has to be performed after quantization.  If the quantization is performed using MSQ, because MSQ is not good at handling images with a low DR, the image quality becomes very poor (Figure \ref{fig:MSQ_HDR}).  With current commercial cameras, this (i.e., Figure \ref{fig:MSQ_HDR}) is more or less what we could get for night scenes. In constrast, by replacing MSQ with the proposed Sigma Delta quantization, a significant improvement can be obtained under the same bit budget. As shown in Figure \ref{fig:rec_HDR},  hidden details in the dark scene are better revealed and less artifact is introduced. This promising performance under weak illuminations also carries over to the situations with extremely strong illuminations or with large and small intensity portions co-existing in one image. 
\begin{figure}[htbp]
     \centering
     \begin{subfigure}[t]{0.31\textwidth}
         \centering
         \includegraphics[width=1\textwidth]{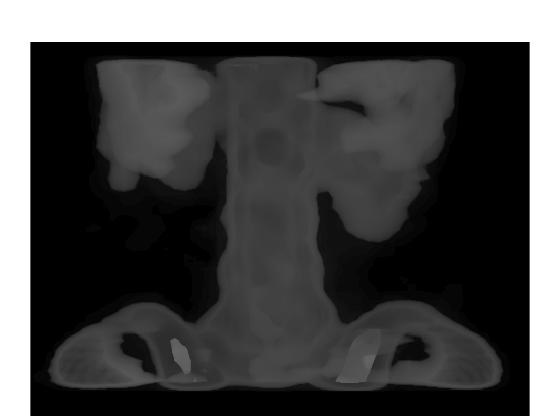}
         \caption{Original image}
     \end{subfigure}
     \hfill
     \begin{subfigure}[t]{0.31\textwidth}
         \centering
         \includegraphics[width=1\textwidth]{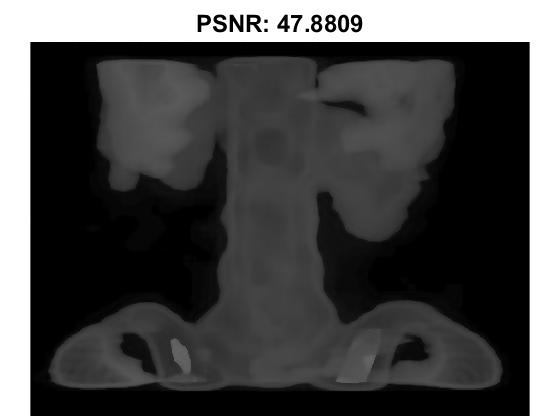}
         \caption{2D $\Sigma\Delta$ reconstruction}
     \end{subfigure}
     \hfill
     \begin{subfigure}[t]{0.31\textwidth}
         \centering
         \includegraphics[width=1\textwidth]{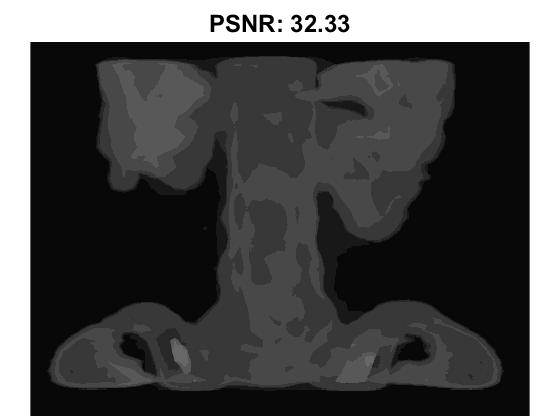}
         \caption{MSQ}
     \end{subfigure}
        \caption{{Reconstruction results of the spine image from 2D $\Sigma\Delta$ quantization and MSQ quantization, both with a 4-bit alphabet.}}
        \label{fig:spine}
\end{figure}

Last, we show that the proposed method is well-suited for medical imaging applications. Medical imaging has a higher tolerance for a longer decoding time, so there is no need to divide the image into patches.  Also, in the previous experiments, $\Sigma\Delta$ methods were shown to be superior to MSQ at preserving the intensity variations among the pixels, which is crucial for medical images, as different intensity levels indicate different tissue types, that help doctors to detect the abnormally. In Figure \ref{fig:spine}, on the spine image, the 2D $\Sigma\Delta$ reconstruction from a 4-bit alphabet has almost no visual difference than the 8-bit original image, while MSQ quantization results in a larger distortion. This experiment indicates that the proposed $\Sigma\Delta$ quantization may be especially suitable for medical imaging.  
        

\section{Appendix}
\subsection{Proof of Proposition \ref{pro:opt}}\label{sec:append1}
\begin{lemma}\label{lemma1} [Lemma 2.4 \cite{candes2013super}]
Let $\mathbb{T}$ be the 1-dimensional torus, suppose $T=\{t_1,t_2,\cdots, t_s\} \subset \mathbb{T}$ satisfies the $\Lambda_M$-minimum separation condition, i.e., $\min_{t,t'\in T,t\not= t'}|t-t'|\geq {\frac{2}{M}},\ M\geq 128$. Let $v\in \mathbb{C}^{|T|}$ be an arbitrary vector with $|v_j|=1,j=1,2,\cdots, s.$ Then there exists a low-frequency trigonometric polynomial
$$q(t)=\sum_{k=-M}^M c_k e^{i2\pi kt}, t\in [0,1]$$
obeying the following properties:
$$q(t_j)=v_j,\;\;t_j\in T,$$
$$|q(t)|\leq 1-{C_a M^{2}{(t-t_j)^2}},\;\;\;t\in S_M(j),$$
$$|q(t)|<1-C_b,\;\;t\in S_M^c,$$
with $0<C_b\leq 0.16^2C_a<1$.
\end{lemma}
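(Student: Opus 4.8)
The plan is to produce an explicit \emph{dual certificate}: a trigonometric polynomial $q$ of degree at most $M$ that interpolates the prescribed unimodular phases $v_j$ at the nodes $t_j$ while having modulus strictly below $1$ elsewhere. Since the statement is exactly Lemma~2.4 of \cite{candes2013super}, I would reproduce the kernel-based interpolation argument of Cand\`es and Fern\'andez-Granda.

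\textbf{Kernel and ansatz.} First I would fix a Fej\'er-type kernel $K$, taken to be a nonnegative trigonometric polynomial of degree at most $M$ with $K(0)=1$, $K'(0)=0$ (automatic since $K$ is even), and strong decay away from the origin; a convenient choice is an appropriately normalized fourth power of the Fej\'er kernel. The quantitative inputs are uniform bounds on $K$, $K'$, $K''$ together with summable tail estimates of the form $\sum_{t_k\neq t_j}|K^{(\ell)}(t_j-t_k)|\lesssim M^{\ell}\cdot\varepsilon$ with $\varepsilon$ small, which hold because the nodes are separated by at least $2/M$. I would then seek the certificate in the form
\[
q(t)=\sum_{j=1}^s \alpha_j\,K(t-t_j)+\sum_{j=1}^s \beta_j\,K'(t-t_j),
\]
which is automatically a trigonometric polynomial of degree at most $M$ for any coefficients $\alpha_j,\beta_j$.

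\textbf{Interpolation system and coefficient estimates.} Imposing the $2s$ conditions $q(t_j)=v_j$ and $q'(t_j)=0$ yields a block linear system whose blocks are the matrices $[K(t_j-t_k)]$, $[K'(t_j-t_k)]$, and $[K''(t_j-t_k)]$. Using the tail estimates from the separation hypothesis, I would show that (after the natural rescaling of the $\beta$-block) this system is a small perturbation of a block-diagonal matrix, hence invertible by a Neumann series, and that the solution satisfies $\alpha_j = v_j + O(\varepsilon)$ and $|\beta_j| = O(M^{-1})$ uniformly over all node sets obeying the separation condition. This uniformity in the configuration of $T$ is where the threshold $M\geq 128$ enters.

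\textbf{Pointwise bounds and the main obstacle.} With the coefficients controlled, I would establish the two modulus estimates. Near a node, since $q(t_j)=v_j$ with $|v_j|=1$ and $q'(t_j)=0$, a second-order Taylor expansion of $|q(t)|^2$ combined with the bound on $q''$ yields the quadratic decay $|q(t)|\leq 1-C_a M^2(t-t_j)^2$ on $S_M(j)$; away from all nodes, the kernel tail bounds force $|q(t)|<1-C_b$ on $S_M^c$, and one verifies the constants can be arranged so that $0<C_b\leq 0.16^2 C_a<1$ on the radius-$0.16\,M^{-1}$ balls. I expect the main obstacle to be precisely this uniform pointwise control: converting the perturbative coefficient estimates into a \emph{sharp} quadratic lower bound on $1-|q(t)|$ near the nodes while simultaneously keeping $|q|$ strictly below $1$ on the complement, for \emph{every} admissible configuration. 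Both regimes rest on carefully summing the kernel and its derivatives over a minimally separated set, and it is in balancing them—and matching the specific $0.16$ radius—that the delicate constants are pinned down.
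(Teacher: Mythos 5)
The paper gives no proof of this lemma; it is imported verbatim (with the discretized neighborhoods $S_M(j)$ of \cite{li2017elementary}) from Lemma 2.4 of \cite{candes2013super}, and your outline is precisely the dual-certificate construction used there: interpolate with $q=\sum_j \alpha_j K(\cdot-t_j)+\beta_j K'(\cdot-t_j)$, force $q(t_j)=v_j$ and $q'(t_j)=0$, invert the nearly block-diagonal interpolation system via a Neumann series under the $2/M$ separation, and convert the coefficient bounds into the quadratic decay near the nodes and the uniform gap on $S_M^c$. The one detail to correct is the kernel: Cand\`es and Fern\'andez-Granda use the \emph{square} of the Fej\'er kernel of order roughly $M/2$ (equivalently, the suitably normalized fourth power of the Dirichlet-type ratio $\sin((\tfrac{M}{2}+1)\pi t)/\sin(\pi t)$), not the fourth power of the Fej\'er kernel, and the specific constants $0.16$ and the threshold $M\geq 128$ are pinned to that particular choice.
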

\begin{proof}[\textbf{Proof of Proposition \ref{pro:opt}}]
We proceed by contradiction, assume there exists a $d$-bit alphabet $\widetilde{\mathcal{A}}$  whose stability constant $\widetilde{C}$ is smaller than $C$, i.e., $\widetilde C<C$. Let the alphabet $\widetilde{\mathcal{A}}$ be $c_1<c_2<\cdots<c_N$, with $N=2^d$. Assume $c_1<\cdots<c_i< a\leq c_{i+1}<\cdots<c_j\leq b< c_{j+1}<\cdots<c_N$. Note that there is no restriction on the range of values of the alphabet, that is, it is possible that $a\leq c_1$ or $b\geq c_N$. Also, denote the largest interval length in the alphabet within $[a,b]$ as $2I$, i.e., $I=\frac{1}{2}\max\{c_{i+2}-c_{i+1},\cdots,c_j-c_{j-1}\}$ and define $\widetilde I:=\frac{1}{2}\max\{2I,b-c_j\}$. The case $d\geq 3$ is slightly easier than $d=2$, so we first prove the result for $d\geq 3$.

For $d\geq 3$, we start with proving that the alphabet has at least two elements within $[a,b]$, so $I$ is well defined. Notice that $C=\frac{b-a}{2(2^d-3)}\leq\frac{b-a}{10}$. If there is no or one $c_\ell$ between $a$ and $b$, i.e., $a\leq c_\ell\leq b,\ 1\leq\ell\leq N$, then we can choose $a\leq y_{1,1}\leq b$ properly such that $|y_{1,1}-Q_{\widetilde{\mathcal{A}}}(y_{1,1})|\geq\frac{b-a}{4}>C>\widetilde C$, which leads to a contradiction.

Next, we consider the following cases:
\begin{itemize}
    \item $a\leq c_1$ or $b\geq c_N$. For this case, we have the following two sub-cases that are mutually exclusive: 1) $a\leq c_1$ and $c_1-a\geq b-c_N$ and 2) $b \geq c_N$ and $b-c_N \geq c_1-a$. A closer look indicates that these two cases are exactly the same upon exchanging the roles of $a$ and $b$. Hence without loss of generality, we assume 1) holds: $a\leq c_1$ and $c_1-a\geq b-c_N$. Next, we specify the following possibilities:
    
    (a) $c_N\leq b$. Let $[c_{\ell},c_{\ell+1}]$ be the largest interval in $\widetilde{\mathcal{A}}$ for some $\ell$, i.e., $c_{\ell+1}-c_\ell=2I$. Choose $y_{1,1}=\frac{c_\ell+c_{\ell+1}}{2}-\epsilon,\ y_{1,2}=y_{2,1}=c_\ell+2\epsilon$, $y_{2,2}=a$ with small enough $\epsilon>0$ such as $\epsilon=10^{-10}\cdot(C-\widetilde C)$. This leads to $u_{1,1}=I-\epsilon,\ u_{1,2}=u_{2,1}=-I+\epsilon$, then $q_{2,2}=Q_{\widetilde{\mathcal{A}}}(y_{2,2}+u_{1,2}+u_{2,1}-u_{1,1})= Q_{\widetilde{\mathcal{A}}}(a-3I+3\epsilon)$, the quantization error
    \begin{align*}
    |u_{2,2}|&=|y_{2,2}+u_{1,2}+u_{2,1}-u_{1,1}-q_{2,2}|\\
    &= |Q_{\widetilde{\mathcal{A}}}(a-3I+3\epsilon)-(a-3I+3\epsilon)|\\
    &=c_1-a+3I-3\epsilon\\
    &\geq c_1-a+\frac{3}{2}\cdot\frac{c_N-c_1}{2^d-1}-3\epsilon\\
    &\geq c_1-a+\frac{3}{2}\cdot\frac{b-a-2(c_1-a)}{2^d-1}-3\epsilon\\
    &\geq \frac{3(b-a)}{2(2^d-1)}-3\epsilon\\
    &\geq \frac{b-a}{2(2^d-3)}-3\epsilon\\
    &=C-3\epsilon>\widetilde C.
    \end{align*}
    The second inequality used the assumption $c_1-a\geq b-c_N$, and the third one used $c_1-a\geq 0$ and $ d\geq 3$. Then this contradicts the assumption $\|u\|_{\textrm{max}}\leq \widetilde C$.
    
    (b) $c_N>b$ and $c_{j+1}-c_j< 3(b-c_j)$. If $2\widetilde I= \max\{2I,b-c_j\}=b-c_j$, let $y_{1,1}=c_j+\widetilde I-\epsilon,\ y_{1,2}=y_{2,1}=\frac{c_{j+1}+c_j}{2}-\widetilde I+2\epsilon\leq b$ with a sufficiently small $\epsilon>0$ as in (a), then $u_{1,1}=\widetilde I-\epsilon,\ u_{1,2}=u_{2,1}=-\frac{c_{j+1}-c_j}{2}+\epsilon<-\widetilde I+\epsilon$. If $2\widetilde I=2I$, we can choose $y_{1,1},y_{1,2},y_{2,1}$ as in (a), then $u_{1,1}=\widetilde I-\epsilon,\ u_{1,2}=u_{2,1}=-\widetilde I+\epsilon$. In both cases, let $y_{2,2}=a$, we have $y_{2,2}+u_{1,2}+u_{2,1}-u_{1,1}\leq a-3\widetilde I+3\epsilon<c_1$, and $q_{2,2}=Q_{\widetilde{\mathcal{A}}}(y_{2,2}+u_{1,2}+u_{2,1}-u_{1,1})=c_1$. The quantization error at $q_{2,2}$ is 
    \begin{align*}
    |u_{2,2}|&=|c_1-(y_{2,2}+u_{1,2}+u_{2,1}-u_{1,1})|\\
    &\geq c_1-a+3\widetilde I-3\epsilon\\
    &\geq c_1-a+\frac{3}{2}\cdot\frac{b-c_1}{2^d-1}-3\epsilon\\
    &\geq c_1-a+\frac{3}{2}\cdot\frac{b-a-(c_1-a)}{2^d-1}-3\epsilon\\
    &\geq \frac{3(b-a)}{2(2^d-1)}-3\epsilon\\
    &\geq \frac{b-a}{2(2^d-3)}-3\epsilon\\
    &=C-3\epsilon>\widetilde C.
    \end{align*}
    This leads to a contradiction.
    
    (c) $c_N>b$ and $c_{j+1}-c_j\geq  3(b-c_j)$. If we choose $y_{1,1}=\frac{b+c_j}{2}-\epsilon$ with some small $\epsilon$ and $y_{1,2}=b$, then $u_{1,1}=\frac{b-c_j}{2}-\epsilon,\ u_{1,2}=\frac{3}{2}(b-c_j)-\epsilon\leq\widetilde C$. Since it holds for arbitrary small $\epsilon$, we must have $b-c_j\leq\frac{2}{3}\widetilde C$. This gives $b-\frac{2}{3}\widetilde C\leq c_j\leq b$ and $2I\geq\frac{c_j-c_1}{j-1}\geq \frac{b-\frac{2}{3}\widetilde C-c_1}{2^d-2}$, where the last inequality is due to the assumption $c_N>b$ so that $j\leq N-1$. Same as in (a), we can choose $y_{1,1},y_{1,2},y_{2,1}$ properly and $y_{2,2}=a$, such that $u_{1,1}=I-\epsilon,\ u_{1,2}=u_{2,1}=-I+\epsilon$, provided that $\epsilon$ is small enough. Then the quantization error at $q_{2,2}=Q_{\widetilde{\mathcal{A}}}(a-3I+3\epsilon)=c_1$ is
    \begin{align*}
    |u_{2,2}|&=|Q_{\widetilde{\mathcal{A}}}(a-3 I+3\epsilon)-(a-3 I+3\epsilon)|\\
    &=c_1-a+3 I-3\epsilon\\
    &\geq c_1-a+\frac{3}{2}\cdot\frac{b-\frac{2}{3}\widetilde C-c_1}{2^d-2}-3\epsilon\\
    &\geq c_1-a+\frac{3}{2}\cdot\frac{b-a-\frac{b-a}{3(2^d-3)}-(c_1-a)}{2^d-2}-3\epsilon\\
    &\geq \frac{3}{2}\cdot\frac{b-a-\frac{b-a}{3(2^d-3)}}{2^d-2}-3\epsilon\\
    &\geq \frac{b-a}{2(2^d-3)}-3\epsilon\\
    &=C-3\epsilon>\widetilde C.
    \end{align*}
    This also leads to a contradiction. 
    \item $a>c_1$ and $b<c_N$. If we also have $a>c_2$ and $b<c_{N-1}$, then we can easily choose a proper $a\leq y_{1,1}\leq b$ with quantization error at least ${\frac{1}{2}}\max\{2I,c_{i+1}-a,b-c_j\}\geq \frac{b-a}{2(2^d-4+1)}=C>\widetilde C$, which leads to contradiction.
    
    Therefore, without loss of generality, assume $c_1<a\leq c_2$ and $c_2-a\geq b-c_{N-1}$. Similar to above, we specify the following sub-cases:
    
    (d) $c_{N-1}\leq b$. We first show that for an arbitrary constant $a-3I<\xi<c_2$, one can choose $y_{1,1},y_{1,2},y_{2,1},y_{2,2}$ properly such that $\xi=y_{2,2}+u_{1,2}+u_{2,1}-u_{1,1}$. If $a\leq\xi< c_2$, set $y_{1,1}=y_{1,2}=y_{2,1}=c_2,\ y_{2,2}=\xi$, then $u_{1,1}=u_{1,2}=u_{2,1}=0$ and $ y_{2,2}+u_{1,2}+u_{2,1}-u_{1,1}=\xi$. If $a-3I<\xi<a$, denote $w=\frac{a-\xi}{3}$, then $0<w<I$. Let $[c_{\ell},c_{\ell+1}]$ be the largest interval in $\widetilde{\mathcal{A}}$ within $[a,b]$ for some $\ell$, i.e., $c_{\ell+1}-c_\ell=2I$. Choose $y_{1,1}=c_\ell+w,\ y_{1,2}=y_{2,1}=c_{\ell+1}-2w,\ y_{2,2}=a$, then $u_{1,1}=w,\ u_{1,2}=u_{2,1}=-w$, we also have $y_{2,2}+u_{1,2}+u_{2,1}-u_{1,1}=a-3w=\xi$.
    
    Hence whatever $c_1$ is, we can always obtain the quantization error 
    \begin{align*}
    \max_{a-3I<\xi<c_2}|\xi-Q_{\widetilde{\mathcal{A}}}(\xi)|&\geq\frac{1}{3}(c_2-(a-3I))\\
    &=\frac{1}{3}(c_2-a)+I\\
    &\geq \frac{1}{3}(c_2-a)+\frac{1}{2}\cdot\frac{c_{N-1}-c_2}{2^d-3}\\
    &\geq \frac{1}{3}(c_2-a)+\frac{1}{2}\cdot\frac{b-a-2(c_2-a)}{2^d-3}\\
    &\geq \frac{b-a}{2(2^d-3)}+(\frac{1}{3}-\frac{1}{2^d-3})(c_2-a)\\
    &\geq \frac{b-a}{2(2^d-3)}\\
    &=C>\widetilde C.
    \end{align*}
    This leads to a contradiction. Here to see the first inequality, we consider the following sub-cases: \romannumeral 1) if $c_1\geq\frac{1}{3}(c_2+2a-6I), \max_{a-3I<\xi<c_2}|\xi-Q_{\widetilde{\mathcal{A}}}(\xi)|\geq |a-3I- Q_{\widetilde{\mathcal{A}}}(a-3I)|=|a-3I-c_1|\geq \frac{1}{3}(c_2+2a-6I)-(a-3I)=\frac{1}{3}(c_2-(a-3I))$; \romannumeral 2) if $c_1< \frac{1}{3}(c_2+2a-6I)$, take $\xi=\frac{1}{3}(2c_2+a-3I)$, then $\max_{a-3I<\xi<c_2}|\xi-Q_{\widetilde{\mathcal{A}}}(\xi)|\geq |\frac{1}{3}(2c_2+a-3I)- Q_{\widetilde{\mathcal{A}}}(\frac{1}{3}(2c_2+a-3I))|=|\frac{1}{3}(2c_2+a-3I)- c_2|=\frac{1}{3}(c_2-(a-3I))$. 
    
    (e) $c_{N-1}>b$ and $c_{j+1}-c_j\leq \frac{3}{2}(b-c_j)$. Similar to (d), we first show that one can choose $y_{1,1},y_{1,2},y_{2,1},y_{2,2}$ properly to make $y_{2,2}+u_{1,2}+u_{2,1}-u_{1,1}$ equal to an arbitrary constant between $a-3\widetilde I$ and $c_2$. If $2\widetilde I=2I$, it follows the same reasoning as in (d), here we discuss the case when $2\widetilde I=b-c_j$.
    If $a\leq\xi<c_2$, let $y_{1,1}=y_{1,2}=y_{2,1}=c_j,\ y_{2,2}=\xi$, then $y_{2,2}+u_{1,2}+u_{2,1}-u_{1,1}=\xi$. If $a-3\widetilde I<\xi<a$, denote $w=a-\xi$, then $0<w< 3\widetilde I$, we specify the following sub-cases: \romannumeral 1) if $ 0<w\leq \widetilde I$, let $y_{1,1}=c_j+w,\ y_{1,2}=y_{2,1}=c_j-w,\ y_{2,2}=a$, then $u_{1,1}=w,\ u_{1,2}=u_{2,1}=0,\ y_{2,2}+u_{1,2}+u_{2,1}-u_{1,1}=a-w=\xi$; \romannumeral 2) if $\widetilde I<w\leq 2\widetilde I$, let $y_{1,1}=c_j+w-\widetilde I, y_{1,2}=c_{j+1}-w,\ y_{2,1}=c_j-(w-\widetilde I)$, and $y_{2,2}=a$, then $u_{1,1}=w-\widetilde I,\ u_{1,2}= -\widetilde I,\ u_{2,1}=0,\ y_{2,2}+u_{1,2}+u_{2,1}-u_{1,1}=a-w=\xi$; \romannumeral 3) if $2\widetilde I <w<3\widetilde I$, let $y_{1,1}=c_j+w-2\widetilde I,\ y_{1,2}=y_{2,1}=c_{j+1}-w+\widetilde I$ and $y_{2,2}=a$, then $u_{1,1}=w-2\widetilde I,\ u_{1,2}=u_{2,1}=-\widetilde I,\ y_{2,2}+u_{1,2}+u_{2,1}-u_{1,1}=a-w=\xi$.

    Therefore, whatever $c_1$ is, the worst case quantization error can reach 
    \begin{align*}
    \max_{a-3\widetilde I<\xi<c_2}|\xi-Q_{\widetilde{\mathcal{A}}}(\xi)|&=\frac{1}{3}(c_2-(a-3\widetilde I))\\
    &=\frac{1}{3}(c_2-a)+\widetilde I\\
    &\geq \frac{1}{3}(c_2-a)+\frac{1}{2}\cdot\frac{b-c_2}{2^d-3}\\
    &\geq \frac{1}{3}(c_2-a)+\frac{1}{2}\cdot\frac{b-a-(c_2-a)}{2^d-3}\\
    &\geq \frac{b-a}{2(2^d-3)}+(\frac{1}{3}-\frac{1}{2(2^d-3)})(c_2-a)\\
    &\geq \frac{b-a}{2(2^d-3)}\\
    &=C>\widetilde C.
    \end{align*}
    This leads to a contradiction.
    
    (f) $c_{N-1}>b$ and $c_{j+1}-c_j> \frac{3}{2}(b-c_j)$. We must have $j\leq N-2$ and $b-c_j\leq\frac{4}{3}\widetilde C$. Then {similar to (d), we can obtain the following quantization error}
    \begin{align*}
    \max_{a-3I<\xi<c_2}|\xi-Q_{\widetilde{\mathcal{A}}}(\xi)|&=\frac{1}{3}(c_2-(a-3I))\\
    &=\frac{1}{3}(c_2-a)+I\\
    &\geq \frac{1}{3}(c_2-a)+\frac{1}{2}\cdot\frac{b-\frac{4}{3}\widetilde C-c_2}{2^d-4}\\
    &\geq \frac{1}{3}(c_2-a)+\frac{1}{2}\cdot\frac{b-a-\frac{2(b-a)}{3(2^d-3)}-(c_2-a)}{2^d-4}\\
    &\geq \frac{b-a}{2(2^d-3)}+\big(\frac{1}{3}-\frac{1}{2(2^d-4)}\big)(c_2-a)\\
    &\geq \frac{b-a}{2(2^d-3)}\\
    &=C>\widetilde C.
    \end{align*}
This also leads to a contradiction. We have exhausted all the cases for $d\geq 3$.
\end{itemize}
For the case $d=2$, there are only $4$ elements in the alphabet $\widetilde{\mathcal{A}}=\{c_1,c_2,c_3,c_4\}$ with $c_1<c_2<c_3<c_4$. Consider the case $a\leq c_1$ or $b\geq c_4$, if there are at least two elements in $\widetilde{\mathcal{A}}$ that are within $[a,b]$, the proof follows the same reasoning as $d\geq 3$. Here we discuss the case that $a\leq c_1$ and only one element of $\widetilde{\mathcal{A}}$ lies within $[a,b]$, i.e., $a\leq c_1\leq b<c_2<c_3<c_4$. In this case, {we must have $c_1-a<C=\frac{b-a}{2}$, and $c_2-b<\frac{b-a}{2}$. For $\epsilon$ that is small enough, let $y_{1,1}=\frac{c_1+c_2}{2}-\epsilon,\ y_{1,2}=y_{2,1}=c_1+2\epsilon,\ y_{2,2}=a$, then $u_{1,1}=\frac{c_2-c_1}{2}-\epsilon,\ u_{1,2}=u_{2,1}=-\frac{c_2-c_1}{2}+\epsilon,\ u_{2,2}=a-\frac{3}{2}(c_2-c_1)+3\epsilon-c_1$. Hence $|u_{2,2}|\geq\frac{3}{2}(c_2-c_1)>\frac{3}{2}(b-c_1)>\frac{3}{4}(b-a)>C>\widetilde C$,}
which leads to a contradiction.

Next, we discuss the two remaining cases when both $a>c_1$ and $b<c_4$ hold: $c_1< a\leq c_2\leq b< c_3<c_4$ and $c_1<a\leq c_2<c_3\leq b<c_4$, which are the two cases that there is $1$ or $2$ elements in the alphabet between $a$ and $b$, respectively.
\begin{itemize}
\item $c_1<a\leq c_2<c_3\leq b<c_4$. Without loss of generality, assume $c_2+c_3\geq b+a$. Since $\widetilde C<C=\frac{b-a}{2}$,  we must have $c_2-c_1<b-a$. Combining these two inequalities we get $c_1+c_3> 2a$, then $ a<\frac{1}{2}(c_1+c_3)<b$. Choose some small $\epsilon$ and the first $3\times 3$ entries of $y$ as
$$y=\begin{pmatrix}
\frac{1}{2}(c_3+c_2)-\epsilon & c_2 & c_2+2\epsilon &\cdots \\
c_2 & c_2 &\frac{1}{2}(c_1+c_3) & \cdots \\
c_2+2\epsilon&
\frac{1}{2}(c_1+c_3) & a &\cdots\\
\cdots & \cdots & \cdots & \cdots 
\end{pmatrix}.
$$
Provided that $\epsilon$ is small enough, one can check that the first $3\times 3$ entries in $u$ are as follows
\begingroup
\renewcommand*{\arraystretch}{2}
$$u=\begin{pmatrix}
\frac{1}{2}(c_3-c_2)-\epsilon & \frac{1}{2}(c_3-c_2)-\epsilon & -\frac{1}{2}(c_3-c_2)+\epsilon& \cdots \\
\frac{1}{2}(c_3-c_2)-\epsilon & \frac{1}{2}(c_3-c_2)-\epsilon & -\frac{1}{2}(c_2-c_1)+\epsilon & \cdots \\
-\frac{1}{2}(c_3-c_2)+\epsilon &
-\frac{1}{2}(c_2-c_1)+\epsilon & a-\frac{1}{2}(c_3+c_2)+3\epsilon &\cdots\\
\cdots & \cdots & \cdots & \cdots 
\end{pmatrix}.
$$
\endgroup
By assumption, we have $c_2+c_3\geq b+a$, then for small enough $\epsilon$, $|u_{3,3}|\geq \frac{b-a}{2}-3\epsilon=C-3\epsilon>\widetilde C$, this leads to a contradiction.
\item $c_1<a\leq c_2<b\leq c_3<c_4$, we specify the following two cases: 

(i) $c_2\geq \frac{b+a}{2}$. Notice that we must have $c_2-c_1<b-a$, so $c_1+c_2>2a$. We can choose $y_{1,1}=c_2,\ y_{1,2}=y_{2,1}=\frac{1}{2}(c_1+c_2)+\epsilon$  and $y_{2,2}=a$, then with a sufficiently small $\epsilon>0$, $u_{1,1}=0,\ u_{1,2}=u_{2,1}=-\frac{1}{2}(c_2-c_1)+\epsilon$ and $Q_{\widetilde{\mathcal{A}}}(y_{2,2}+u_{1,2}+u_{2,1}-u_{1,1})=Q_{\widetilde{\mathcal{A}}}(a-(c_2-c_1)+2\epsilon)=c_1$ and $|y_{2,2}+u_{1,2}+u_{2,1}-u_{1,1}-Q_{\widetilde{\mathcal{A}}}(y_{2,2}+u_{1,2}+u_{2,1}-u_{1,1})|=c_2-a-2\epsilon\geq \frac{b-a}{2}-2\epsilon>\widetilde C$, which leads to a contradiction.
 
(ii) $c_2<\frac{b+a}{2}$, also notice that $c_3-c_2<b-a$, then we can choose $y$ as follows 
\begingroup
\renewcommand*{\arraystretch}{2}
$$y=\begin{pmatrix}
\frac{1}{2}(b+c_2)-\epsilon & c_2 & c_2+\frac{1}{2}(c_3-b)+2\epsilon& \cdots \\
c_2 & c_2& \frac{1}{2}(c_1+c_3) & \cdots \\
c_2+\frac{1}{2}(c_3-b)+2\epsilon &
\frac{1}{2}(c_1+c_3)& a &\cdots\\
\cdots & \cdots & \cdots & \cdots
\end{pmatrix}.
$$
\endgroup
Provided that $\epsilon>0$ is small enough, the corresponding $u$ is
\begingroup
\renewcommand*{\arraystretch}{2}
$$U=\begin{pmatrix}
\frac{1}{2}(b-c_2)-\epsilon & \frac{1}{2}(b-c_2)-\epsilon & -\frac{1}{2}(c_3-c_2)+\epsilon& \cdots \\
\frac{1}{2}(b-c_2)-\epsilon & \frac{1}{2}(b-c_2)-\epsilon & -\frac{1}{2}(c_2-c_1)+\epsilon & \cdots \\
-\frac{1}{2}(c_3-c_2)+\epsilon &
-\frac{1}{2}(c_2-c_1)+\epsilon & -\frac{c_2+b}{2}+a+3\epsilon &\cdots\\
\cdots & \cdots & \cdots & \cdots 
\end{pmatrix}.
$$
\endgroup
Since $c_2\geq a$, then $|u_{3,3}|=\frac{c_2+b}{2}-a-3\epsilon\geq\frac{b-a}{2}-3\epsilon>\widetilde C$, which leads to a contradiction.
\end{itemize}
\end{proof}

\subsection{Optimization}\label{sec:optimization}
Although the proposed decoders \eqref{eq:cbccase1}, \eqref{eq:case3}, \eqref{eq:case2} are standard convex optimization problems,  off-the-shelf optimization solvers do not converge in meaningful time due to the existence of the matrix $D^{\beta+r}$, which is very ill-conditioned. In this section, we suggest a special implementation of the primal-dual algorithm to achieve a much shorter decoding time. For a $512 \times 512$ grayscale image, the decoding process now takes about 1 minute on an Intel Core i7 CPU@2.2GHz 16GB RAM PC. The decoding time can  be further shortened to several seconds by dividing the image into patches and implementing quantization and decoding in parallel.

Since the treatments for all the decoders are similar, we only discuss \eqref{eq:cbccase1}. For simplicity, consider the case when the TV order $\beta$ and the quantization order $r$ are both set to 1,  \eqref{eq:cbccase1} then reduces to 
\begin{align}\label{eq:opt}
\min_{z} \|D^Tz\|_1  \quad \textrm{subject to } \|D^{-1}(z -q)\|_{\infty} \leq \delta/2.
\end{align}
Let us start with writing out the Lagrangian dual of \eqref{eq:opt}
\begin{equation}\label{eq:pd}
\mathcal{L}(z,y)=\|D^T z\|_1-\frac{\delta}{2}\|y\|_1+\langle y,D^{-1}(z-q)\rangle,
\end{equation}
which is a special case of the general form 
\begin{equation}\label{eq:Primaldual}
\max_y\min_x\; \langle Lx,y\rangle+g(x)-f^*(y).
\end{equation}
\eqref{eq:Primaldual} can be solved by primal-dual algorithms such as Chambolle-Pock (Algorithm \ref{alg:CP}).   \\
\begin{algorithm}[H]
\SetAlgoLined
\textbf{Initializations: }{$\tau,\sigma>0,\tau\sigma\|L\|^2<1,\theta\in[0,1],x_0,y_0$, and set $\bar x_0=x_0$}

\textbf{Iterations: }Update $x_n,y_n,\bar x_n$ as follows:
$$ \left\{
\begin{aligned}
y_{n+1} &= \text{Prox}_{\sigma f^*}(y_n+\sigma L\bar x_n)\;\;\;\; \\
x_{n+1} &= \text{Prox}_{\tau g}(x_n-\tau L^*y_{n+1})\;\;\;\; \\
\bar x_{n+1} & = x_{n+1} + \theta(x_{n+1}-x_n)\;\;\;\;
\end{aligned}
\right.
$$
\caption{Solve \eqref{eq:Primaldual} using Chambolle-Pock Method}
\label{alg:CP}
\end{algorithm}
To apply Algorithm \ref{alg:CP} to our problem, we compare corresponding terms in \eqref{eq:Primaldual} and \eqref{eq:pd} and naturally recognize that $x=D^Tz$, $L=D^{-1}D^{-T} $, $g(x)=\|x\|_1$, $f^*(y)=\frac{\delta}{2}\|y\|_1{+\langle y,D^{-1}q\rangle}$. Then 
\begin{equation}\label{eq:pc}
    \mathcal{L}(x,y)=\langle Lx,y\rangle+g(x)-f^*(y)=\|x\|_1-\frac{\delta}{2}\|y\|_1+\langle y,D^{-1}D^{-T}x-D^{-1}q\rangle.
\end{equation}
However, applying Algorithm \ref{alg:CP} to this $\mathcal{L}(x,y)$ results in a very slow convergence, because it contains the matrix $L=D^{-1}D^{-T}$ with a large condition number (of order $O(N^2)$).

{We propose to  move the ill-conditioned matrix $L$ into the sub-problem (b) in Algorithm \ref{alg:1} via a change of variable $x=D^{-1}(z-q)$, or $z=Dx+q$.  Then the primal-dual objective becomes}
\begin{equation}\label{eq:dual}
\mathcal{L}(x,y)=\langle y,x\rangle+\|D^TDx+D^Tq\|_1-\frac{\delta}{2}\|y\|_1.
\end{equation}
Recognize that this is equivalent to setting $g(x)=\|D^TDx+D^Tq\|_1$ and $f^*(y)=\frac{\delta}{2}\|y\|_1$ in \eqref{eq:Primaldual}. With this special $g(x)$ and $f^*(y)$,  Algorithm \ref{alg:CP} becomes the following Algorithm \ref{alg:1}. 

\begin{algorithm}[H]
\SetAlgoLined
\textbf{Initializations: }{$\tau,\sigma>0,\tau\sigma<1,\theta\in[0,1],x_0,y_0.$ and set $\bar x_0=x_0$}

\textbf{Iterations: }Update $x_n,y_n,\bar x_n$ as follows:
$$ \left\{
\begin{aligned}
y_{n+1} &= \argmin_y\; \frac{\sigma\delta}{2}\|y\|_1+\frac{1}{2}\|y-(y_n+\sigma\bar x_n)\|^2 &(a)\\
x_{n+1} &= \argmin_x\tau\|D^TDx+D^Tq\|_1+\frac{1}{2}\|x-(x_n-\tau y_{n+1})\|^2 &(b)\\
\bar x_{n+1} & = x_{n+1} + \theta(x_{n+1}-x_n)\;\;\;\;&(c)
\end{aligned}
\right.
$$
\caption{Solve \eqref{eq:dual} using Chambolle-Pock Method}
\label{alg:1}
\end{algorithm}
In Algorithm \ref{alg:1}, the ill-conditioned matrix $D^TD$ appears in the sub-problem (b), but it does not make (b) an ill-conditioned problem thanks to the existence of the extra quadratic term. We can then solve (b) using standard ADMM and solve (a) by writing out its closed-form solution.






\normalem
\bibliographystyle{plain}
\bibliography{manu}



                                


\end{document}